\newcommand\doverline[1]{\ThisStyle{%
  \setbox0=\hbox{$\SavedStyle\overline{#1}$}%
  \ht0=\dimexpr\ht0-.15ex\relax% CHANGE .15 TO AFFECT SPACING
  \overline{\copy0}%
}}
\newcommand{\mc}{\mathcal}
\newcommand{\ms}{\mathscr}
\newcommand{\wh}{\widehat}
\providecommand{\keywords}[1]{\textbf{\textit{Keywords: }} #1}
\newcommand{\lc}[1]{\ensuremath{_{{\scriptscriptstyle #1}}}}
\newcommand{\nameT}{$\eps$-tree\xspace}
\newcommand{\Pcol}{$P$-recolored\xspace}
\newcommand{\X}{\ensuremath{X}\xspace}
\newcommand{\irr}[1]{\ensuremath{[#1]_{\textrm{irr}}}\xspace}
\newcommand{\PS}[1]{\ensuremath{\mc{P}(#1)}\xspace}
\newcommand{\eps}{\ensuremath{\varepsilon}\xspace}
\newcommand{\eXM}{\ensuremath{\eps \colon \irr{\X \times \X} \to \PS{M}}\xspace}
\newcommand{\EC}{\textsf{ELC}\xspace}
\newcommand{\HC}{\textsf{HLC}\xspace}
\newcommand{\IC}{\textsf{IC}\xspace}
\newcommand{\NnotCol}[1]{\ensuremath{{N}_{\neg #1}}\xspace}
\newcommand{\Ns}{\ensuremath{\mc{N}}\xspace}
\newcommand{\To}[1]{\ensuremath{\ms{T}({#1})}\xspace}
\newcommand{\Ts}{\ensuremath{T^*}\xspace}
\newcommand{\ls}{\ensuremath{\lambda^*}\xspace}
\newcommand{\Cls}{\ensuremath{\mc{C}}\xspace}
\newcommand{\Cl}[1]{\ensuremath{C_{\scriptscriptstyle #1}}\xspace}
\newcommand{\lca}{\ensuremath{\operatorname{\mathrm{lca}}}}
\newcommand{\parent}{\mathrm{par}}
\newcommand{\rootT}[1]{\ensuremath{\rho_{_{{\scriptscriptstyle #1}}}}}
\crefname{theorem}{Thm.}{Thm.}
\Crefname{theorem}{Theorem}{Theorems}
\crefname{lemma}{L.}{L.}
\Crefname{lemma}{Lemma}{Lemmas}
\crefname{proposition}{Prop.}{Prop.}
\Crefname{proposition}{Proposition}{Propositions}
\crefname{corollary}{Cor.}{Cor.}
\Crefname{corollary}{Corollary}{Corollaries}
\crefname{definition}{Def.}{Def.}
\Crefname{definition}{Definition}{Definitions}
\crefname{figure}{Fig.}{Fig.}
\Crefname{figure}{Figure}{Figures}
\crefname{algorithm}{Alg.}{Alg.}
\Crefname{algorithm}{Algorithm}{Algorithms}
\newtheorem{theorem}{Theorem}[section]
\newtheorem{lemma}[theorem]{Lemma}
\newtheorem{proposition}[theorem]{Proposition}
\newtheorem{corollary}[theorem]{Corollary}
\theoremstyle{definition}
\newtheorem{exmpl}[theorem]{Example}
\newtheorem{definition}[theorem]{Definition}
\providecommand{\keywords}[1]{\textbf{\textit{Keywords: }} #1}
\title{Generalized Fitch Graphs II: Sets of Binary Relations that are explained by Edge-labeled Trees}
\author[1,2]{Marc Hellmuth} 
\author[3,4]{Carsten R.\ Seemann}
\author[3,4,5,6,7,8]{Peter F.\ Stadler}
\affil[1]{Institute	 of Mathematics and Computer Science, University of Greifswald, Walther-
  Rathenau-Strasse 47, D-17487 Greifswald, Germany  \\ 	
	Email: \texttt{mhellmuth@mailbox.org}}
\affil[2]{
	Saarland University, Center for Bioinformatics, Building E 2.1, P.O.\ Box 151150, D-66041 Saarbr{\"u}cken, Germany
	  }
\affil[3]{Max Planck Institute for Mathematics in the Sciences,
  Inselstra{\ss}e 22, D-04103 Leipzig, Germany}
\affil[4]{Bioinformatics Group, Department of Computer Science and
  Interdisciplinary Center for Bioinformatics, University of Leipzig,
  H{\"a}rtelstra{\ss}e 16-18, D-04107 Leipzig, Germany}
\affil[5]{Bioinformatics Group, Department of Computer Science;
  Interdisciplinary Center for Bioinformatics; German Centre for
  Integrative Biodiversity Research (iDiv) Halle-Jena-Leipzig; Competence
  Center for Scalable Data Services and Solutions Dresden-Leipzig; Leipzig
  Research Center for Civilization Diseases; and Centre for Biotechnology
  and Biomedicine, University of Leipzig, H{\"a}rtelstra{\ss}e 16-18,
  D-04107 Leipzig, Germany}
\affil[6]{Institute for Theoretical Chemistry, University of Vienna,
  W{\"a}hringerstra{\ss}e 17, A-1090 Wien, Austria}
\affil[7]{Facultad de Ciencias, Universidad Nacional de Colombia,
  Sede Bogot{\'a}, Colombia}
\affil[8]{The Santa Fe Institute, 1399 Hyde Park Rd., Santa Fe, NM
  87501, United States}
\begin{document}

\date{\ }

\maketitle

\abstract{ 
Fitch graphs $G=(X,E)$ are digraphs that are explained by $\{\emptyset, 1\}$-edge-labeled rooted trees $T$ with leaf set $X$: there is an arc $(x,y) \in E$ if and only if the unique path in $T$ that connects the last common ancestor $\lca(x,y)$ of $x$ and $y$ with $y$ contains at least one edge with label ``1’’. In practice, Fitch graphs represent xenology relations, i.e., pairs of genes $x$ and $y$ for which a horizontal gene transfer happened along the path from $\lca(x,y)$ to $y$.

In this contribution, we generalize the concept of Fitch graphs and consider trees $T$ that are equipped with edge-labeling $\lambda\colon E\to \mc{P}(M)$ that assigns to each edge a subset $M'\subseteq M$ of colors.  Given such a tree, we can derive a map $\eps\lc{(T,\lambda)}$ (or equivalently a set of not necessarily disjoint binary relations), such that $i\in \eps\lc{(T,\lambda)}(x,y)$ (or equivalently $(x,y)\in R_i$) with $x,y\in X$, if and only if there is at least one edge with color $i$ from $\lca(x,y)$ to $y$. 

The central question considered here: Is a given map $\eps$ a Fitch map, i.e., is there there an edge-labeled tree $(T,\lambda)$ with $\eps\lc{(T,\lambda)} = \eps$, and thus explains $\eps$? Here, we provide a characterization of Fitch maps in terms of certain neighborhoods and forbidden submaps.  Further restrictions of Fitch maps are considered. Moreover, we show that the least-resolved tree explaining a Fitch map is unique (up to isomorphism). In addition, we provide a polynomial-time algorithm to decide whether $\eps$ is a Fitch map and, in the affirmative case, to construct the (up to isomorphism) unique least-resolved tree $(\Ts,\ls)$ that explains $\eps$.
}
\smallskip

\noindent
\keywords{  Labeled trees; Fitch map; Forbidden subgraphs; Phylogenetics; Recognition algorithm}

\sloppy

\section{Introduction}
\label{sec:intro}

Labeled rooted trees arise naturally as models of evolutionary processes in
mathematical biology. Both vertex and edge labels are used to annotate
classes of evolutionary events. The connection to both empirically
accessible data and to key biological concepts is given by binary relations
on the leaves that are specified in terms of the labels encountered in
certain substructures within the underlying tree. The practical importance
of this type of models derives from the fact that the relations can be
inferred directly from empirical data, such as gene sequences, without
knowledge of the tree \cite{Lechner:11a,Nichio:17,Ravenhall:15}. From a
mathematical perspective, a rich set of interrelated graph-theoretical
problems arises from the questions which relations on the leaves can be
obtained from labeled trees under a given set of rules.

Relations and edge-labeled graphs defined in terms of
  \emph{vertex}-labeled trees have been widely studied since the 1970's and
  range from cographs \cite{Jung:78,Corneil:81,Hellmuth:13a,Lafond:14} and
  di-cographs \cite{Crespelle:06} to 2-structures
  \cite{ER1:90,ER2:90,EHPR:96}, symbolic ultrametrics
  \cite{Boecker:98,Hellmuth:17a} or three-way symbolic tree-maps
  \cite{Huber2018a,Gruenewald2018}. In contrast, relations and edge-labeled
  graphs that are defined in terms of \emph{edge}-labeled trees have just
  been explored recently. Edge-labels may represent the number of events,
in which case they lead to pairwise compatibility graphs (PCGs) and their
variants: here, an edge is drawn if the total weight along the path
connecting $x$ and $y$ lies between \emph{a priori} defined bounds
\cite{PCGsurvey}. Leaf power graphs specify either only an upper or a lower
bound \cite{Fellows:08}.  While PCGs are defined with strictly positive
edge weights, an extension to zero weights -- the absence of evolutionary
events along an edge -- is required in models of evolution focused on rare
events \cite{Hellmuth:18b}.

\begin{figure}
  \begin{center}
    \includegraphics[width=0.95\textwidth]{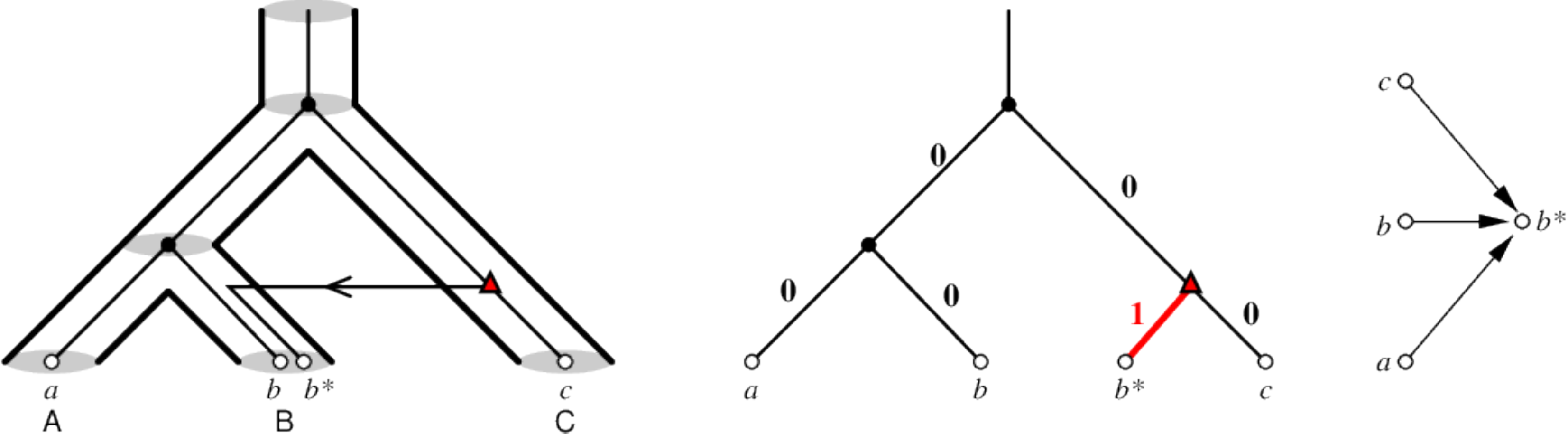}
  \end{center} 
  \caption{The evolution of gene families is modeled as an embedding
      of a gene tree (thin lines, and r.h.s.\ panel, with four genes $a$,
      $b$, $b^*$ and $c$) into a species tree (shown as tubes with fat
      outlines with three species $\mathsf{A}$, $\mathsf{B}$, and
      $\mathsf{C}$). At each speciation (gray ellipses), each gene
        present in the genome is transmitted into both descending
      lineages, corresponding to a speciation even in the gene tree (shown
      as $\bullet$).  Horizontal gene transfer consists in a duplication of
      a gene, one copy of which ``jumps'' into a different lineage. The
      corresponding edge in the gene tree is marked with label $1$. The
      corresponding Fitch digraph has an edge from $x$ to $y$ if there is
      at least one HGT event on the path from the last common ancestor
      $\lca(x,y)$ and $y$ in the gene tree.}
  \label{fig:intro}
\end{figure}

Fitch graphs were introduced to model so-called horizontal transfer events
\cite{Geiss:18a} based on the seminal work by Walter M.\ Fitch
\cite{Fitch:00}, see Fig.~\ref{fig:intro} for an illustration of
the model.  Fitch graphs can be seen as directed generalizations of
lower bound leaf power graphs: a directed edge connects $x$ and $y$ if at
least one of the tree edges connecting the last common ancestor $\lca(x,y)$
and the ``target'' leaf $y$ carries a ``horizontal transfer'' label
\cite{Geiss:18a}.  Modeling different types of events by different labels
yields a multi-colored generalization of Fitch graphs that can be regarded
as a collection of edge-disjoint sets of Fitch graphs
\cite{Hellmuth:2019d}. The colors can be used e.g.\ to distinguish
  genomic locations where the horizontally transferred gene copy is inserted,
  and adds to the information that can be extracted for the colored Fitch
  graphs compared to their color-free version. Here, we further relax the
compatibility conditions and consider sets of Fitch maps and trees whose
edges are labeled by finite sets. Conceptually, the construction explored
in this contribution can be seen as an edge-centered analog of the
2-structures explored in \cite{ER1:90,ER2:90,Hellmuth:17a}.

An uncolored Fitch graph is explained by an unique least-resolved
  trees which, in the context of gene families, is obtained by a series of
  edge-contractions from the true gene tree. Fitch graphs therefore encode
  constraints on the evolutionary history. From a mathematical point of
  view, Fitch graphs are a subclass of the directed cographs
  \cite{Crespelle:06} characterized by a small set of forbidden induced
  subgraphs \cite{Geiss:18a}. An alternative characterization
  \cite{Hellmuth:19F} makes use of certain neighborhood systems that will
  also play a key role here.
  
This contribution is organized as follows. In \Cref{sec:prelim}, we provide
most of the necessary definitions needed here, and continue to characterize
Fitch maps in \Cref{sec:char}.  To this end, we introduce the notion of
(complementary) neighborhoods that additionally provide the necessary
information to reconstruct a tree that explains a given Fitch map.  In
addition, we provide a so-called inequality-condition that is needed to
obtain a correct edge-labeling of the underlying trees.  In
\Cref{sec:uniq}, we utilize the latter results and show that every Fitch
map is explained by a (up to isomorphism) unique least-resolved tree.
Moreover, we show that every Fitch map is characterized in terms of
so-called forbidden submaps.  In \Cref{sec:simple}, we consider
$k$-restricted Fitch maps, i.e., Fitch maps that are explained by
edge-labeled trees for which the number of colors on their edges does not
exceed a prescribed integer $k$.  We provide a constructive
characterization of $k$-restricted Fitch maps, and show that, in general,
$k$-restricted Fitch maps cannot be characterized in terms of forbidden
submaps. In \Cref{sec:algo}, we finally provide a polynomial-time algorithm
to recognize Fitch maps $\eps$ and, in the affirmative case, to reconstruct
the unique least-resolved tree that explains $\eps$.  We complete this work
with a short outlook where we provide a couple of open questions for
further research.

\section{Preliminaries}
\label{sec:prelim}

\paragraph{Basics}
For a finite set $\X$ we put
$\irr{\X \times \X} \coloneqq \X \times \X \setminus \{ (x,x) : x \in \X
\}$, and $\binom{\X}{k} \coloneqq \{ \X' \subseteq \X : |\X'|=k \}$.
The \emph{power set} $\PS{\X}$ of $\X$ comprises all subsets of
$\X$. In the following, we consider \emph{maps} $f\colon X\to Y$ that
associate to every element of the set $X$ exactly one element of the set
$Y$. Moreover, we consider \emph{(undirected) graphs}, resp., \emph{di-graphs} $G=(V,E)$ with finite vertex set $V$ and edge
set $E\subseteq \binom{V}{2}$, resp.,  arc set $E\subseteq \irr{\X\times\X}$.  
Hence, the graphs considered here do not
contain loops or multiple edges. A graph $H=(W,F)$ is a \emph{subgraph} of
$G=(V,E)$, denoted by $H\subseteq G$, if $W\subseteq V$ and $F\subseteq E$.

\paragraph{Trees}
A \emph{rooted tree} is a connected, cycle-free graph with a distinguished
vertex $\rootT{T} \in V$, called the \emph{root} of $T$.  Let
$T=(V,E)$ be a rooted tree.  Then, the unique path from the vertex
$v \in V$ to the vertex $w\in V$ is denoted by $P\lc{T}(v,w)$. A
\emph{leaf} of $T$ is a vertex $v\in V\setminus\{\rootT{T}\}$ such that
$\deg\lc{T}(v)=1$. The set of all leaves of $T$ will be denoted by
$\mc{L}(T)$.  The vertices in
$\mathring{V}(T)\coloneqq V\setminus\mc{L}(T)$ are called \emph{inner}
vertices.  All edges in
$\mathring{E}(T)\coloneqq\{ \{v,w\}\in E: v,w \in \mathring{V}(T) \}$ are
called \emph{inner} edges. Edges of $T$ that are not contained in
$\mathring{E}(T)$ are called \emph{outer} edges.  Every rooted tree carries
a natural partial order $\preceq\lc{T}$ on the vertex set $V$ that can be
obtained by setting $v \preceq\lc{T} w$ if and only if the path from
$\rootT{T}$ to $w$ contains $v$. In this case, we call $v$ an
\emph{ancestor} of $w$, $w$ a \emph{descendant} of $v$, and say that $v$
and $w$ are \emph{comparable}. Instead of writing $v \preceq\lc{T} w$ and
$v \ne w$, we will use $v \prec\lc{T} w$.

It will be convenient to use a notation for edges $\{v,w\} \in E$ that
implies which one of the vertices in $\{v,w\}$ is closer to the
root. Therefore, we always write $(v,w) \in E$ to indicate that
$v \prec\lc{T}w$. In this case, the unique vertex $v$ is called
\emph{parent} of $w$, denoted by $\parent\lc{T}(w)$.  For a non-empty
subset $V'\subseteq V$ of vertices, the \emph{last common ancestor of
  $V'$}, denoted by $\lca\lc{T}(V')$, is the unique
$\preceq\lc{T}$-maximal.  vertex of $T$ that is an ancestor of every vertex
in $V'$.  We will make use of the simplified notation
$\lca\lc{T}(x,y)\coloneqq\lca\lc{T}(\{x,y\})$ for $V'=\{x,y\}$.  We will
omit the explicit reference to $T$ for $\preceq\lc{T}$, $\parent\lc{T}(w)$
and $\lca\lc{T}$, whenever it is clear which tree is considered.

A \emph{phylogenetic tree $T$ on $\X$} is a rooted tree $T$ with leaf set
$\mc{L}(T)=\X$, with the degree $\deg\lc{T}(\rootT{T})\ge 2$, and the
degree $\deg\lc{T}(v)\ge3$ for every inner vertex
$v \in \mathring{V}(T)\setminus\{\rootT{T}\}$.

A \emph{rooted triple}, denoted by $xy|z$, is a phylogenetic tree on
$\{x,y,z\}$ with $\lca(x,y,z) \prec \lca(x,y)$. A triple $xy|z$ is
\emph{displayed} by a rooted tree $T$, if
$\lca\lc{T}(x,y,z) \prec\lc{T} \lca\lc{T}(x,y)$. We denote with
$\mathfrak{R}(T)$ the set of all triples that are displayed by $T$.  A set
$R$ of triples is called \emph{compatible} if
$\langle R \rangle\ne \emptyset$, where $\langle R \rangle$ denotes the set
of all trees that display $R$.  In other words, $R$ is compatible if
  there is a tree $T$ with $R\subseteq \mathfrak{R}(T)$, see
  \Cref{fig:exmp-triples} for an illustrative example.  Moreover, the
closure $\overline R$ of an arbitrary compatible set $R$ of triples is
defined by $\overline R = \bigcap_{T\in \langle R\rangle} \mathfrak{R}(T)$.
In other words, $\overline R$ contains all triples that are displayed by
every tree that also display $R$, see e.g.\ \cite{BS:95,GSS:07,HS:17} for
further details. In fact, $\overline R$ satisfies the usual properties for
a closure operator \cite{BS:95}, i.e., $R \subseteq \overline R$,
$\doverline{R}= \overline R$, and if $R' \subseteq R$, then
$\overline{R'}\subseteq \overline R$.

\begin{figure}[tbp]
  \begin{center}
    \includegraphics[width=.5\textwidth]{./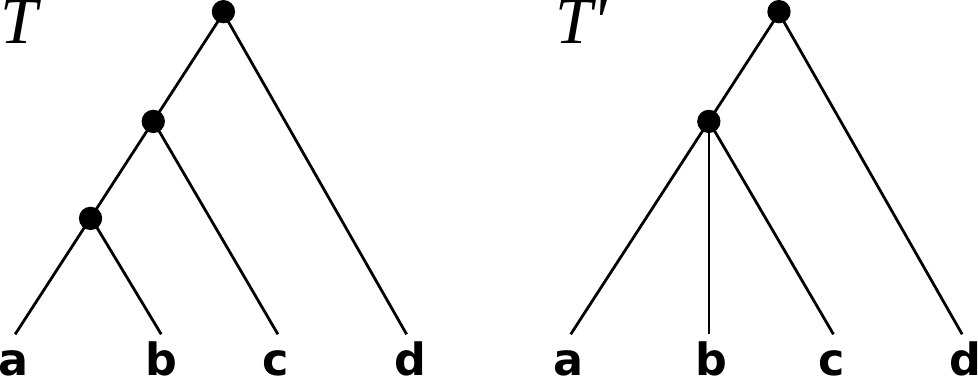}
  \end{center}
  \caption{ Shown are two (phylogenetic) trees $T$ and $T'$ on
      $\X = \{a,b,c,d\}$ that display the set $R = \{ac|d,bc|d\}$ of rooted
      triples.  For the set $R' = \{ab|c,ac|d,bc|d\}$ there is only the
      tree $T$ that displays $R'$.  Thus,
      $\overline{R'} = \mathfrak{R}(T) = \{ab|c,ac|d,bc|d,ab|d\}$.  In
      particular, $ab|c$ is not displayed by $T'$.  In this example,
      $\Cls(T) = \{\X, \{a,b,c\},\{a,b\}, \{a\},\{b\},\{c\},\{d\}\}$ and
      $\Cls(T') = \Cls(T) \setminus \{\{a,b\}\}$. }
  \label{fig:exmp-triples}
\end{figure}

\paragraph{Clusters and Hierarchies}
Let $\X$ be a finite set, and let $\mc{H}\subseteq \PS{\X}$ be a set system
on $\X$. Then, we say that $\mc{H}$ is \emph{hierarchy-like} if
$P\cap Q \in \{P,Q,\emptyset\}$ for all $P,Q \in \mc{H}$. The set system
$\mc{H}$ is a \emph{hierarchy (on $\X$)} if it is hierarchy-like and in
addition satisfies $\X\in\mc{H}$ and $\{x\}\in\mc{H}$ for all $x\in\X$.
  
Given a phylogenetic tree $T=(V,E)$, we define for each vertex $v \in V$
the set of descendant leaves as
$\Cl{T}(v)\coloneqq \{ x \in \mc{L}(T): v \preceq\lc{T} x \}$. We say that
$\Cl{T}(v)$ is a \emph{cluster} of $T$. Moreover, the \emph{cluster set of
  $T$} is $\Cls(T) \coloneqq \{ \Cl{T}(v) : v \in V \}$. In this context,
it is well-known that $\Cls(T)$ forms a hierarchy and that there is a
one-to-one correspondence between (isomorphism classes of) rooted trees and
their cluster sets:
\begin{lemma}[{\cite[Thm.\ 3.5.2]{Semple2003}}]  \label{lem:tree-iff-cluster}
  For a given subset $H\subseteq \PS{\X}$, there is a phylogenetic tree $T$
  on $\X$ with $H=\Cls(T)$ if and only if $H$ is a hierarchy on
  $\X$. Moreover, if there is such a phylogenetic tree $T$ on $\X$, then,
  up to isomorphism, $T$ is unique.
\end{lemma}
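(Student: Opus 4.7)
The plan is to prove the equivalence in both directions and then establish uniqueness through a cluster-preserving map between two candidate trees; the argument is classical, so I would keep it compact and focus on the nontrivial points.

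For the ``only if'' direction, I start with a phylogenetic tree $T$ on $\X$ and verify the three axioms for a hierarchy on $\Cls(T)$. The identities $\X=\Cl{T}(\rootT{T})$ and $\{x\}=\Cl{T}(x)$ for every leaf $x$ are immediate from the definitions, so $\X\in\Cls(T)$ and $\{x\}\in\Cls(T)$ for all $x\in \X$. For the hierarchy-like property, given $v,w\in V(T)$ I would distinguish cases: if $v\preceq\lc{T} w$ or $w\preceq\lc{T} v$, then one cluster contains the other; otherwise $v$ and $w$ are $\preceq\lc{T}$-incomparable, and any leaf $z\in \Cl{T}(v)\cap\Cl{T}(w)$ would force both $v$ and $w$ to lie on the unique path from $\rootT{T}$ to $z$, hence to be comparable --- a contradiction, so $\Cl{T}(v)\cap\Cl{T}(w)=\emptyset$.

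For the ``if'' direction, given a hierarchy $H\subseteq\PS{\X}$, I would build a tree $T$ whose vertex set is $H$, whose root is $\X$, and whose edges are the covers of the reverse inclusion order: $(A,B)$ is an edge whenever $B\subsetneq A$ and no $C\in H$ satisfies $B\subsetneq C\subsetneq A$. The hierarchy-like condition guarantees that for every $B\in H\setminus\{\X\}$ the family $\{A\in H : B\subsetneq A\}$ is totally ordered by inclusion (two incomparable supersets would violate the $\{P,Q,\emptyset\}$-intersection condition), so every non-root vertex has a unique parent and $T$ is a rooted tree with leaf set $\{\{x\}:x\in \X\}$, canonically identified with $\X$. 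An induction on the $\subseteq$-rank of the elements of $H$, using that the children of $A$ partition its proper elements in $H$, yields $\Cl{T}(A)=A$ for every $A\in H$, so $\Cls(T)=H$. The degree conditions then hold automatically: if an inner vertex $A$ had a unique child $B$, then $\Cl{T}(A)=\Cl{T}(B)$ would force $A=B$ as elements of $H$, a contradiction. For uniqueness, given two phylogenetic trees $T_1,T_2$ on $\X$ with $\Cls(T_1)=\Cls(T_2)=H$, the assignment $\varphi(v)\coloneqq w$ with $\Cl{T_2}(w)=\Cl{T_1}(v)$ is well defined and bijective because distinct vertices of a phylogenetic tree carry distinct clusters; since in either tree $(u,v)$ is an edge iff $\Cl{T}(v)\subsetneq\Cl{T}(u)$ with no intermediate cluster, $\varphi$ preserves edges and fixes the leaves, giving the desired isomorphism.

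The main obstacle is the backward direction, specifically the simultaneous verification that the combinatorially constructed tree (i) really is phylogenetic (no hidden degree-two inner vertex) and (ii) recovers $H$ exactly as its cluster set. Both issues are controlled by the same observation --- that distinct clusters must contain distinct sets of descendant leaves --- which in turn rests squarely on the hierarchy-like intersection property.
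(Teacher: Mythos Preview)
Your argument is correct and is the standard textbook proof of this classical fact. Note, however, that the paper does not supply its own proof of this lemma: it is stated as a citation of \cite[Thm.\ 3.5.2]{Semple2003} and used as a black box throughout. So there is no ``paper's proof'' to compare against; your write-up simply fills in what the authors deliberately omitted.

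Two minor points worth tightening. First, your degree check handles non-root inner vertices, but you should also say a word about the root: if $|\X|\ge 2$ then $\X$ is not a singleton, so the same ``unique child forces equal clusters'' argument gives $\deg(\rootT{T})\ge 2$. Second, your claim that the children of $A$ partition $A$ tacitly uses that every $x\in A$ lies in some child of $A$; this follows because $\{x\}\in H$ and the family $\{B\in H: x\in B,\ B\subsetneq A\}$ is non-empty and totally ordered by the hierarchy-like property, hence has a maximum which is the desired child. Making these two points explicit would leave nothing to the reader.
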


\section{Characterization of Generalized Fitch maps}
\label{sec:char}

\subsection{Definitions}

\begin{definition}
  Let $M$ be an arbitrary finite set of colors. An \emph{edge-labeled
    (phylogenetic) tree $(T,\lambda)$ on $\X$ (with $M$)} is a phylogenetic
  tree $T=(V,E)$ on $\X$ together with a map $\lambda: E \to \PS{M}$
  that assigns to every edge $e \in E$ exactly one subset
    $\lambda(e)\subseteq M$ of colors.
\end{definition}
We will often refer to the map $\lambda$ as the \emph{edge-labeling} and
call $e$ an \emph{$m$-edge} if $m \in\lambda(e)$.  Note that the choice of
$m\in \lambda(e)$ may not be unique. An edge can be an $m$- and $m'$-edge
at the same time.

To avoid trivial cases, we assume from here on that both the set $\X$ of
leaves and the set $M$ of colors is non-empty.
\begin{definition}\label{def:fitch-map}
  Let $\eXM$ be a map that assigns to every pair
    $(x,y)\in \irr{\X\times\X}$ a unique subset $M'\subseteq M$, where
    $M'=\emptyset$ may possible. Then, $\eps$ 
  is a \emph{Fitch map} if there is an edge-labeled tree $(T,\lambda)$ with
  leaf set $\X$ and edge labeling $\lambda: E(T)\to \PS{M}$ such that
  for every pair $(x,y)\in \irr{\X\times\X}$ holds
  \begin{equation*}
    m \in \eps(x,y) \iff \textnormal{ there is an }
    m\textnormal{-edge on the path from } \lca(x,y) \textnormal{ to } y.
  \end{equation*}
  In this case we say that $(T,\lambda)$ \emph{explains} $\eps$.
\end{definition}
\Cref{f:exmpl} provides an illustrative example of a Fitch map $\eps$ and
its corresponding tree $(T,\lambda)$.

A map $\eXM$ is called \emph{monochromatic} if $|M|=1$.  Monochromatic
Fitch maps are equivalent to the ``Fitch relations'' studied by
\citet{Geiss:18a}, and \citet{Hellmuth:19F}.

The map $\eps$ can also be interpreted as a set of $|M|$ not necessarily
disjoint binary relations (or equivalently graphs) on $\X$ defined by
the sets of pairs $\{(x,y) \in\irr{\X\times \X} \colon m \in \eps(x,y)\}$
(or equivalently arcs) for each fixed color $m\in M$. These relations
are disjoint if and only if $|\eps(x,y)|\le 1$ for every
$(x,y)\in\irr{\X\times \X}$, in which case we call $\eps$ a \emph{disjoint}
map.  Disjoint Fitch maps are equivalent to ``multi-colored Fitch graphs"
studied by \citet{Hellmuth:2019d}.

The Fitch maps defined here correspond to directed multi-graphs with the
restriction that there are no parallel arcs of the same color.  Note, we
may also allow parallel arcs with the same color $m$ provided that this
still means that there is an $m$-edge along the path from $\lca(x,y)$ to
$y$.  However, we must omit parallel edges with the same color $m$ whenever
the multiplicity $k$ of a parallel $m$-edge implies that at least $k$
$m$-edges must occur along the path from $\lca(x,y)$ to $y$, an issue that
may be part of future research.

\begin{figure}
\centering
\includegraphics[width=0.65\textwidth]{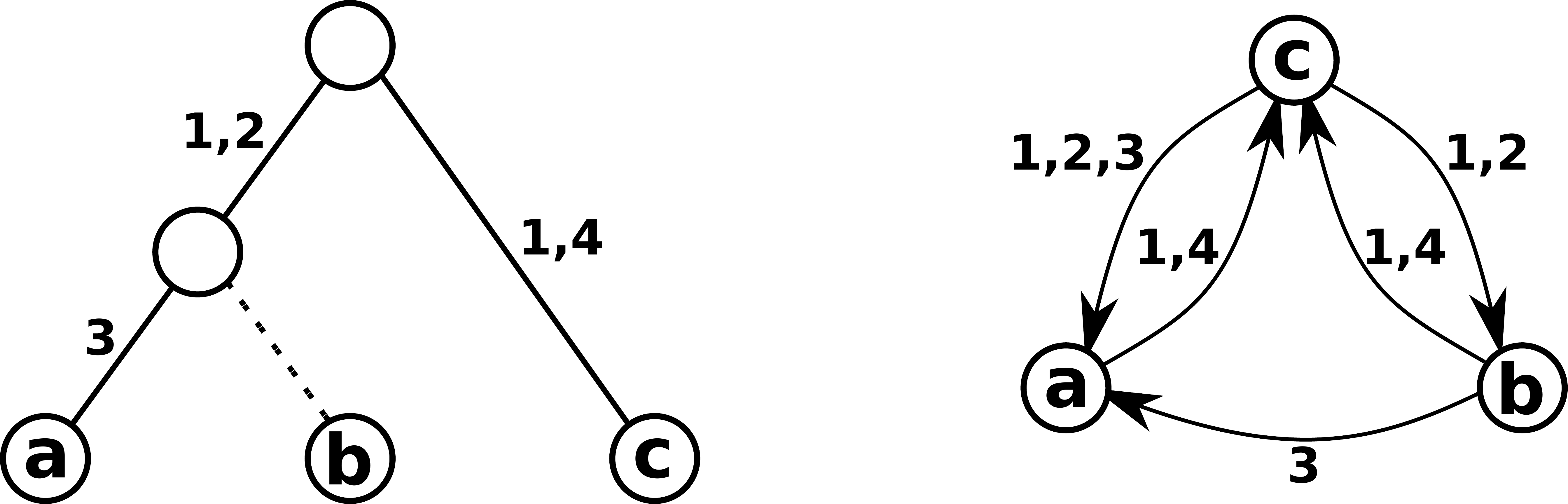}
\caption{The edge-labeled tree $(T,\lambda)$ with the leaf set
  $\mc{L}(T)=\{a,b,c\}=:\X$ on the left explains the displayed Fitch map
  $\eXM$ with the color set $M=\{1,2,3,4\}$ on the right.  It is easy to
  see that Fitch maps are not necessarily symmetric as e.g.\
  $\eps(a,b)\neq \eps(b,a)$.  Moreover, we can observe that
  $1\in \eps(a,c)$, $1\in \eps(c,b)$ but
  $1\notin\eps(a,b)=\emptyset$. Therefore, Fitch maps are not
  ``transitive'' in general.}
\label{f:exmpl}
\end{figure}

\subsection{Characterization in Terms of Neighborhoods}

We start by generalizing the approach developed by \citet{Hellmuth:19F} for
the monochromatic case.
\begin{definition} \label{def:neighbor}
  For a map $\eXM$, the set
  \begin{equation*}
    \NnotCol{m}[y] \coloneqq \{ x \in \X\setminus\{y\}\colon m \notin
    \eps(x,y)\}\cup \{y\}
  \end{equation*}
  is the \emph{(complementary) neighborhood $y\in\X$ for a color $m\in M$
    (w.r.t.\ $\eps$)}.\\
  We write $\Ns[\eps]\coloneqq \{ \NnotCol{m}[y] \colon y \in \X,\, m \in M \}$
  for the set of complementary neighborhoods of $\eps$.
\end{definition}
The set $\NnotCol{m}[y]\subseteq \X$ contains vertex $y$ and all
vertices $x \in \X\setminus\{y\}$ for which the color $m$ is \emph{not}
contained in $\eps(x,y)$. Informally speaking, if one thinks about a
di-graph that contains all arcs $(u,v)$ whenever $m \in\eps(u,v)$, then
$\NnotCol{m}[y]$ contains, in particular, all vertices $x$ that do
\emph{not} form an arc $(x,y)$. This fact justifies the name
``complementary'' neighborhood. Before we give an illustrative
example, we generalize the key conditions characterizing Fitch
relations in \cite{Hellmuth:19F} in terms of complementary
neighborhoods.
\begin{definition}
  A map $\eXM$ satisfies 
  \begin{itemize}[noitemsep]
  \item the \emph{hierarchy-like-condition (\HC)} if
    $\Ns[\eps]$ is hierarchy-like; and
  \item the \emph{inequality-condition (\IC)} if for every neighborhood
    $N\coloneqq\NnotCol{m}[y]\in \Ns[\eps]$ and every $y' \in N$, we have
    $|\NnotCol{m}[y']|\le|N|$.
  \end{itemize}
\end{definition}
The example in \cref{f:exmpl} gives some intuition for the definition of
the sets $\NnotCol{m}[y]$ and $\Ns[\eps]$: Here, we have
$\NnotCol{1}[b] = \{a,b\}$ and $\NnotCol{4}[b] = \{a,b,c\}$. In this
example, we obtain all clusters of size at least 2, and thus, all clusters
that are needed to recover the tree that explains the map $\eps$.  In fact,
$\Ns[\eps]$ is hierarchy-like.  However, even if $\Ns[\eps]$ is
hierarchy-like it may be the case that there is no tree that can explain
$\eps$, as we shall see below. As in \cite{Hellmuth:19F} the \IC will turn
out to be necessary as well.

The following proposition is crucial for the remaining part of this paper
as it provides a quite powerful characterization of neighborhoods and
edge-labeled trees that explain a Fitch map.
\begin{proposition}
  \label{p:clust-neighb}
  Let $(T,\lambda)$ be an edge-labeled tree explaining the Fitch map
  $\eXM$. Then, for every leaf $y \in \X$ and every color $m\in M$ there is
  a vertex $v\in V(T)$ such that the following two equivalent statements
  are satisfied:
  \begin{enumerate}
  \item a) There is no $m$-edge on the path from $v$ to $y$ and \\ b) the
    edge $(\parent(v),v)$ is an $m$-edge unless $v=\rootT{T}$.
  \item $\NnotCol{m}[y] = \Cl{T}(v)$.
  \end{enumerate}
\end{proposition}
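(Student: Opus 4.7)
The plan is to construct a candidate vertex $v$ by climbing from $y$ and then verify both parts of the equivalence separately. Starting at $y$, traverse the path $P\lc{T}(\rootT{T},y)$ upward and let $v$ be the $\preceq\lc{T}$-minimal vertex on this path such that no $m$-edge lies on the sub-path from $v$ to $y$; equivalently, $v$ is the lower endpoint of the first $m$-edge encountered on the way up, or $v=\rootT{T}$ if no $m$-edge occurs on $P\lc{T}(\rootT{T},y)$. By construction (1a) holds, and (1b) is forced because the walk only stops when either the next edge encountered is an $m$-edge or we reach the root. Thus (1) holds for this $v$; it remains to establish the equivalence (1) $\Leftrightarrow$ (2), whence (2) will also hold for this $v$.

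For (1) $\Rightarrow$ (2), I verify both inclusions of $\NnotCol{m}[y]=\Cl{T}(v)$. For $\supseteq$, any $x\in\Cl{T}(v)\setminus\{y\}$ is a descendant of $v$, so $v\preceq\lca(x,y)\preceq y$ and $P\lc{T}(\lca(x,y),y)$ is a sub-path of $P\lc{T}(v,y)$, which contains no $m$-edge by (1a); hence $m\notin\eps(x,y)$ and $x\in\NnotCol{m}[y]$. For $\subseteq$, let $x\in\NnotCol{m}[y]\setminus\{y\}$. Both $\lca(x,y)$ and $v$ lie on $P\lc{T}(\rootT{T},y)$ and are therefore comparable. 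If $\lca(x,y)\prec v$, then $v\neq\rootT{T}$ and $P\lc{T}(\lca(x,y),y)$ contains the edge $(\parent(v),v)$, which by (1b) is an $m$-edge; this contradicts $m\notin\eps(x,y)$. Hence $v\preceq\lca(x,y)\preceq x$ and $x\in\Cl{T}(v)$.

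For (2) $\Rightarrow$ (1), assume $\NnotCol{m}[y]=\Cl{T}(v)$; in particular $v\preceq y$. For (1a), suppose for contradiction that some $m$-edge lies on $P\lc{T}(v,y)$. Then this path has positive length, so $v\neq y$, and $v$ is therefore an inner vertex. By the phylogenetic hypothesis $v$ has a child $c$ with $c\not\preceq y$; any leaf $x$ with $c\preceq x$ then satisfies $\lca(x,y)=v$, so the assumed $m$-edge lies on $P\lc{T}(\lca(x,y),y)$ and $m\in\eps(x,y)$. But $x\in\Cl{T}(v)=\NnotCol{m}[y]$, a contradiction. With (1a) in hand, for (1b) assume $v\neq\rootT{T}$ and that $(\parent(v),v)$ is not an $m$-edge. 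Pick a child $c'\neq v$ of $\parent(v)$, which exists because $\parent(v)$ has at least two children by the phylogenetic hypothesis, and a leaf $x$ with $c'\preceq x$. Then $\lca(x,y)=\parent(v)$, and $P\lc{T}(\parent(v),y)$ consists of the non-$m$-edge $(\parent(v),v)$ followed by $P\lc{T}(v,y)$, which by (1a) contains no $m$-edge; hence $m\notin\eps(x,y)$ and $x\in\Cl{T}(v)$, but $c'\preceq x$ with $c'\neq v$ imply $v\not\preceq x$, a contradiction. The main technical point will be the repeated use of the phylogenetic hypothesis to guarantee existence of the witness leaves, together with careful handling of the boundary cases $v=y$, $v=\rootT{T}$, and $\parent(v)=\rootT{T}$.
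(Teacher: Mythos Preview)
Your argument is correct and follows essentially the same route as the paper: construct $v$ as the highest ancestor of $y$ reachable from $y$ without crossing an $m$-edge, then verify the equivalence $(1)\Leftrightarrow(2)$ by checking both inclusions for $(1)\Rightarrow(2)$ and producing witness leaves via the phylogenetic hypothesis for $(2)\Rightarrow(1)$.

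There is one small omission. In your $(1)\Rightarrow(2)$ direction you tacitly use $v\preceq\lc{T} y$, both in the claim ``$v\preceq\lca(x,y)\preceq y$'' and in ``both $\lca(x,y)$ and $v$ lie on $P\lc{T}(\rootT{T},y)$''. Since you are proving the equivalence for an \emph{arbitrary} $v$ satisfying (1), not just the one you constructed on $P\lc{T}(\rootT{T},y)$, this must be derived from (1) itself. The paper inserts exactly this step: if $v$ and $y$ were incomparable then $v\neq\rootT{T}$, and the edge $(\parent(v),v)$ would lie on $P\lc{T}(v,y)$; by (1b) it is an $m$-edge, contradicting (1a). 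Add this one-line observation at the start of the $(1)\Rightarrow(2)$ argument and the proof is complete.
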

\begin{proof}
  Let $\eXM$ be a Fitch map that is explained by $(T,\lambda)$.
  Furthermore, let $y \in \X$ be an arbitrary leaf and $m\in M$ be an
  arbitrary color.

  First, we show that there is a vertex $v \in V(T)$, which satisfies
  Statement~(1).  Let $v\in V(T)$ be the vertex that is an ancestor of $y$
  and is closest to the root $\rootT{T}$ such that there is no $m$-edge on
  the path from $v$ to $y$. Note that $v=y$ is possible.  By the choice of
  $v$, Statement~(1a) is trivially satisfied.  Now, assume that
  $v \ne \rootT{T}$.  This, together the the fact that $v$ is closest to
  the root among all ancestors of $y$ that satisfies that there is no
  $m$-edge on the path from $v$ to $y$, implies that $(\parent(v),v)$ is an
  $m$-edge.  Therefore, Statement~(1b) is also satisfied.

  Next, we show that Statement~(1) implies Statement~(2). For fixed $m$ and
  $y$, consider a vertex $v\in V(T)$ satisfying (1a) and (1b).

  First, we establish $v\preceq y$.  Since $y$ is a leaf, we either have
  $v\preceq y$ or $v$ and $y$ are not comparable.  Assume for contradiction
  that $v$ and $y$ are not comparable, and thus $v\ne \rootT{T}$.
  Therefore, Statement~(1b) implies that $(\parent(v),v)$ is an $m$-edge.
  However, $(\parent(v),v)$ lies on the path from $v$ to $y$; a
  contradiction to Statement~(1a). Thus, $v$ and $y$ must be comparable,
  and therefore $v\preceq y$.

  In order to see that $\Cl{T}(v) \subseteq \NnotCol{m}[y]$, we consider
  $x \in \Cl{T}(v)$, i.e.\ $v \preceq x$. This, together with $v \preceq y$
  implies that $v \preceq \lca(x,y) \preceq y$. Hence,
  $P\lc{T}(\lca(x,y),y)\subseteq P\lc{T}(v,y)$.  This, together with
  Statement~(1a), implies that there is no $m$-edge on the path from
  $\lca(x,y)$ to $y$. Since $(T,\lambda)$ explains $\eps$, we have
  $x \in \NnotCol{m}[y]$.

  Next, in order to see that $\NnotCol{m}[y] \subseteq \Cl{T}(v)$, we consider
  $x \in \NnotCol{m}[y]$. Note that $v\preceq y$ implies that
  $y\in \Cl{T}(v)$.  Therefore, if $x=y$, then $x=y\in \Cl{T}(v)$.  Now,
  assume that $x\ne y$. Hence, we have
  $x \in \NnotCol{m}[y]\setminus\{y\}$, and therefore $m \notin \eps(x,y)$.
  Thus, since $(T,\lambda)$ explains $\eps$, there is no $m$-edge on the
  path from $\lca(x,y)$ to $y$.  Note that if $\lca(x,y)\prec v \preceq y$,
  then $v\neq \rootT{T}$ and Statement~(1b) implies that $(\parent(v),v)$ is
  an $m$-edge that, in particular, is on the path from $\lca(x,y)$ to $y$;
  a contradiction.  Thus, $\lca(x,y)$ cannot be a strict ancestor of
  $v$. Moreover, $v \preceq y$ and $\lca(x,y)\preceq y$ imply that
  $\lca(x,y)$ and $v$ are comparable. The latter arguments together imply
  that $v \preceq \lca(x,y) \preceq x$. Hence, $x \in \Cl{T}(v)$.

  We proceed to show that Statement~(2) implies Statement~(1). Suppose that
  Statement~(2) is satisfied. If $v=y$, then Statement~(1a) is trivially
  satisfied. Now, assume that $v\ne y$. Choose an $x \in \X$ such that
  $\lca(x,y)=v$. Note that $v\ne y$ implies that $x$ and $y$ are
  distinct. This and $x \in \Cl{T}(v) = \NnotCol{m}[y]$ imply that
  $m \notin \eps(x,y)$. This and the fact that $(T,\lambda)$ explains
  $\eps$ imply that there is no $m$-edge on the path from $v=\lca(x,y)$
  to $y$. In summary, Statement~(1a) is satisfied.

  Now, assume that $v\ne \rootT{T}$. Hence, there is a parent $\parent(v)$
  of $v$.  Therefore, we can choose a vertex
  $x' \in \Cl{T}(\parent(v))\setminus\Cl{T}(v)$.  Hence,
  $\parent(v)=\lca(x',y)$. Since $x' \notin \Cl{T}(v)=\NnotCol{m}[y]$ and
  $x' \ne y$, we have $m \in \eps(x',y)$. This, together with the fact that
  $(T,\lambda)$ explains $\eps$, implies that there is an $m$-edge on the
  path from $\lca(x',y)=\parent(v)$ to $y$.  We have already shown that
  Statement~(1a) is satisfied, and thus, there is no $m$-edge on the path
  from $v$ to $y$. The latter two arguments imply that $ (\parent(v),v)$
  must be an $m$-edge. Therefore, Statement~(1b) is also true.
\end{proof}
  
\Cref{p:clust-neighb} has several simple but important consequences.
\begin{corollary} \label{cor:HC}
  Every Fitch map $\eps$ satisfies the hierarchy-like-condition (\HC), and
  $\Ns[\eps] \subseteq \Cls(T)$ for every tree $(T,\lambda)$ that explains
  $\eps$.
\end{corollary}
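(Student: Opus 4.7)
The plan is to deduce both claims as immediate corollaries of \Cref{p:clust-neighb}, noting that they really have the same one-line engine: every complementary neighborhood can be identified with a cluster of any explaining tree.

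First I would fix an edge-labeled tree $(T,\lambda)$ explaining $\eps$ and invoke \Cref{p:clust-neighb}: for each $y\in \X$ and each $m\in M$, it yields a vertex $v\in V(T)$ with $\NnotCol{m}[y] = \Cl{T}(v)$. Since $\Cl{T}(v)\in \Cls(T)$ by definition, this gives $\NnotCol{m}[y]\in \Cls(T)$, and ranging over all $y\in\X$ and $m\in M$ yields $\Ns[\eps]\subseteq \Cls(T)$, which is the second claim.

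For the first claim, I would recall that $\Cls(T)$ is a hierarchy on $\X$ by \Cref{lem:tree-iff-cluster}; in particular, $\Cls(T)$ is hierarchy-like, i.e., $P\cap Q\in\{P,Q,\emptyset\}$ for all $P,Q\in\Cls(T)$. Since hierarchy-likeness is trivially inherited by arbitrary subsets, the inclusion $\Ns[\eps]\subseteq \Cls(T)$ established above implies that $\Ns[\eps]$ is hierarchy-like, i.e., $\eps$ satisfies \HC.

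There is no real obstacle here: both statements are packaging of the translation ``complementary neighborhoods are clusters'' established in \Cref{p:clust-neighb}, combined with the elementary fact that any subfamily of a hierarchy-like system is hierarchy-like. The only thing to be careful about is the quantifier structure, namely that the conclusion $\Ns[\eps]\subseteq\Cls(T)$ must be phrased for \emph{every} edge-labeled tree $(T,\lambda)$ that explains $\eps$, which follows because the argument above uses an arbitrary such $(T,\lambda)$.
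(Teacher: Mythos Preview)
Your proposal is correct and follows essentially the same approach as the paper: invoke \Cref{p:clust-neighb} to get $\Ns[\eps]\subseteq\Cls(T)$ for an arbitrary explaining tree, then use \Cref{lem:tree-iff-cluster} and the fact that subfamilies of hierarchy-like systems are hierarchy-like to conclude \HC. The order of the two claims is reversed relative to the paper's writeup, but the argument is the same.
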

\begin{proof}
  Let $(T,\lambda)$ be an arbitrary tree that explains $\eps$.  By
  \cref{p:clust-neighb}, for every neighborhood
  $\NnotCol{m}[y] \in \Ns[\eps]$ there is always a vertex $v \in V(T)$ with
  $\NnotCol{m}[y]=\Cl{T}(v)\in\Cls(T)$.  Hence,
  $\Ns[\eps] \subseteq \Cls(T)$. By \Cref{lem:tree-iff-cluster}, the set
  $\Cls(T)$ forms a hierarchy.  Since every subset of the cluster-set
  $\Cls(T)$ of a phylogenetic tree $T$ is hierarchy-like, the map $\eps$
  satisfies \HC.
\end{proof}
Moreover, \cref{p:clust-neighb} can also be used to show
\begin{corollary}
  Every Fitch map $\eps$ satisfies the inequality-condition (\IC).
  \label{cor:IC}
\end{corollary}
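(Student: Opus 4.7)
The plan is to invoke \Cref{p:clust-neighb} twice and translate the required cardinality inequality into a containment of clusters in some edge-labeled tree $(T,\lambda)$ that explains $\eps$. Concretely, I would fix such a tree $(T,\lambda)$, a neighborhood $N=\NnotCol{m}[y]\in\Ns[\eps]$, and an element $y'\in N$. By \Cref{p:clust-neighb} applied to $(m,y)$ there is a vertex $v\in V(T)$ with $N=\Cl{T}(v)$ satisfying (1a) and (1b); similarly, there is a vertex $v'\in V(T)$ with $\NnotCol{m}[y']=\Cl{T}(v')$ satisfying (1a) and (1b) for the pair $(m,y')$. Since $|\NnotCol{m}[y']|=|\Cl{T}(v')|\le|\Cl{T}(v)|=|N|$ follows at once from $\Cl{T}(v')\subseteq\Cl{T}(v)$, the whole proof reduces to establishing this containment, which in turn is equivalent to $v\preceq\lc{T} v'$.

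To see that $v\preceq\lc{T}v'$, observe first that $y'\in N=\Cl{T}(v)$ gives $v\preceq\lc{T}y'$, and by construction $v'\preceq\lc{T}y'$; hence $v$ and $v'$ are comparable. The remaining task is to exclude $v'\prec\lc{T}v$. This is the main (and essentially the only) nontrivial step, and I would handle it by contradiction: suppose $v'\prec\lc{T}v\preceq\lc{T}y'$. Then $v\neq\rootT{T}$, so by property (1b) for $v$ the edge $(\parent(v),v)$ carries color $m$. However, since $v'\prec\lc{T}v\preceq\lc{T}y'$, the edge $(\parent(v),v)$ lies on the path $P\lc{T}(v',y')$, contradicting property (1a) for $v'$, namely that no $m$-edge occurs on $P\lc{T}(v',y')$. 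Therefore $v\preceq\lc{T}v'$, and consequently $\Cl{T}(v')\subseteq\Cl{T}(v)$, which yields \IC.

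The argument is short because \Cref{p:clust-neighb} already does the heavy lifting of identifying neighborhoods with clusters of $T$; the only genuine content is the parent-edge trick that forces $v$ to be an ancestor of $v'$ when both are defined as ``highest'' ancestors without an $m$-edge below them relative to the respective leaves $y$ and $y'$. No additional hypothesis on $\eps$ or on $(T,\lambda)$ beyond the definition of a Fitch map is used, so the conclusion holds for every Fitch map.
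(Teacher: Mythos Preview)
Your proof is correct and follows essentially the same route as the paper: invoke \Cref{p:clust-neighb} to realize both $\NnotCol{m}[y]$ and $\NnotCol{m}[y']$ as clusters $\Cl{T}(v)$ and $\Cl{T}(v')$, observe that $v$ and $v'$ are comparable because both lie above $y'$, and rule out $v'\prec\lc{T} v$ via the parent-edge contradiction between (1b) for $v$ and (1a) for $v'$. The only cosmetic difference is that the paper first cites \Cref{cor:HC} to obtain the clusters and then appeals to the equivalence in \Cref{p:clust-neighb}, whereas you apply the proposition directly; your version is arguably a bit cleaner in that it works entirely with the path $P\lc{T}(v',y')$.
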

\begin{proof}
  Let $\eXM$ be a Fitch map that is explained by $(T,\lambda)$.
  Furthermore, let $y \in \X$ be an arbitrary leaf and $m\in M$ be an
  arbitrary color; and let $y' \in N\coloneqq \NnotCol{m}[y]$. We need to
  show that $|\NnotCol{m}[y']|\le |N|$.
	
  By \cref{cor:HC}, we have $\Ns[\eps]\subseteq\Cls(T)$ and therefore, 
  $\NnotCol{m}[y'],N \in \Ns[\eps] \subseteq \Cls(T)$ are clusters. Hence,	
  there are vertices $v,v' \in V(T)$ such that $N=\Cl{T}(v)$ and
  $\NnotCol{m}[y']=\Cl{T}(v')$. Since $y' \in N=\Cl{T}(v)$ and
  by definition $y' \in \NnotCol{m}[y']=\Cl{T}(v')$, we have $v \preceq y'$
  and $v' \preceq y'$.  Hence, both $v$ and $v'$ are contained in the
  unique path from $y'$ to the root. Thus $v$ and $v'$ are comparable.

  Suppose that $v'$ is a strict ancestor of $v$, i.e.\ $v' \prec v$. Then,
  the edge $(\parent(v),v)$ lies on the path $P\lc{T}(v',y)$. Moreover,
  $\NnotCol{m}[y']=\Cl{T}(v')$ together with \Cref{p:clust-neighb}~(1a,~2)
  implies that there is no $m$-edge on the path $P\lc{T}(v',y)$.  The
  latter arguments together imply that the edge $(\parent(v),v)$ is not an
  $m$-edge. However, using \Cref{p:clust-neighb}~(1b, 2) for $N=\Cl{T}(v)$
  implies that $(\parent(v),v)$ is an $m$-edge; this is a contradiction.
  Therefore we must have $v \preceq v'$. This implies
  $\NnotCol{m}[y']=\Cl{T}(v')\subseteq\Cl{T}(v)=N$, and thus also the
  desired inequality $|\NnotCol{m}[y']|\le|N|$.
\end{proof}

Before we show that \HC and \IC are also sufficient conditions for Fitch
maps, we provide the following interesting result. Although this
  result does not have direct impact on the proofs for \HC and \IC, it
  provides interesting details about the relationship between Fitch maps
  $\eps$ and certain sets of triples that we can derive from $\eps$.

\begin{proposition}
  Let $\eXM$ be a Fitch map, and let
  \begin{equation*}
    \mc{R}(\eps) \coloneqq \bigcup_{N\in \Ns[\eps]} \{ ab|c : a,b\in N
    \text{ and } c\in \X\setminus N \}
  \end{equation*}
  be a triple set constructed of the neighborhoods of $\eps$.  Then, we
  have the following:
\begin{enumerate}%[noitemsep]
\item every edge-labeled tree $(T,\lambda)$ that explains $\eps$ displays
  all triples in $\mc{R}(\eps)$, i.e.,
  $\mc{R}(\eps)\subseteq \mathfrak{R}(T)$.
\item $\mc{R}(\eps)$ is closed, i.e.,
  $\overline{\mc{R}(\eps)} = \mc{R}(\eps)$.
\end{enumerate}
\end{proposition}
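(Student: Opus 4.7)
The plan for part (1) rests on the cluster characterization from \Cref{p:clust-neighb}. Let $(T,\lambda)$ be any tree explaining $\eps$, and let $ab|c\in\mc{R}(\eps)$ with witnessing neighborhood $N\in\Ns[\eps]$ (so $a,b\in N$ and $c\notin N$). By \Cref{p:clust-neighb}, $N=\Cl{T}(v)$ for some $v\in V(T)$. Since $a,b\in\Cl{T}(v)$, the vertex $v$ is a common ancestor of $a$ and $b$, hence $v\preceq\lca\lc{T}(a,b)$; since $c\notin\Cl{T}(v)$, the vertex $v$ is not an ancestor of $c$, so $\lca\lc{T}(a,b,c)$ must be a strict ancestor of $v$. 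Combining, $\lca\lc{T}(a,b,c)\prec\lca\lc{T}(a,b)$, i.e., $T$ displays $ab|c$. This yields $\mc{R}(\eps)\subseteq\mathfrak{R}(T)$.

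For part (2), the inclusion $\mc{R}(\eps)\subseteq\overline{\mc{R}(\eps)}$ is a general closure property from \Cref{sec:prelim}. For the reverse, my plan is to exhibit a single phylogenetic tree $\Ts$ on $\X$ satisfying $\mc{R}(\eps)=\mathfrak{R}(\Ts)$. Once this is done, the fact that $\Ts\in\langle\mathfrak{R}(\Ts)\rangle$ immediately gives $\overline{\mathfrak{R}(\Ts)}\subseteq\mathfrak{R}(\Ts)$; together with monotonicity of closure this yields $\overline{\mc{R}(\eps)}=\overline{\mathfrak{R}(\Ts)}=\mathfrak{R}(\Ts)=\mc{R}(\eps)$, as desired.

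To construct $\Ts$, I augment the hierarchy-like set $\Ns[\eps]$ (see \Cref{cor:HC}) by the trivial clusters to obtain $\mc{H}^*\coloneqq\Ns[\eps]\cup\{\X\}\cup\{\{x\}\colon x\in\X\}$. The set $\mc{H}^*$ remains hierarchy-like because $\X$ and the singletons have only trivial intersections with any set in $\Ns[\eps]$; by construction, it also contains $\X$ and all singletons and is therefore a hierarchy on $\X$. \Cref{lem:tree-iff-cluster} then provides a phylogenetic tree $\Ts$ on $\X$ with $\Cls(\Ts)=\mc{H}^*$. The inclusion $\mc{R}(\eps)\subseteq\mathfrak{R}(\Ts)$ follows immediately by re-running the argument of part (1), since every $N\in\Ns[\eps]$ is by construction a cluster of $\Ts$.

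The one place requiring some care is the reverse inclusion $\mathfrak{R}(\Ts)\subseteq\mc{R}(\eps)$. Given $ab|c\in\mathfrak{R}(\Ts)$, set $A\coloneqq\Cl{\Ts}(\lca\lc{\Ts}(a,b))\in\mc{H}^*$. Then $a,b\in A$ and $c\notin A$, so $A$ cannot be a singleton (it contains two distinct leaves) and cannot equal $\X$ (it omits $c$); hence $A\in\Ns[\eps]$ and $ab|c\in\mc{R}(\eps)$. This cardinality/exclusion check is the crux of the argument: it guarantees that the ``padding'' clusters added to complete $\Ns[\eps]$ into a hierarchy never produce spurious triples, so the triple set of $\Ts$ collapses back onto $\mc{R}(\eps)$, completing the proof.
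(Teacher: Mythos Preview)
Your proof is correct and follows essentially the same approach as the paper: both parts rely on \Cref{p:clust-neighb} for part~(1), and for part~(2) both construct the hierarchy $\Ns[\eps]\cup\{\X\}\cup\{\{x\}:x\in\X\}$, invoke \Cref{lem:tree-iff-cluster} to obtain a tree $\Ts$ with $\mathfrak{R}(\Ts)=\mc{R}(\eps)$, and then use $\Ts\in\langle\mathfrak{R}(\Ts)\rangle$ to close the argument. Your write-up is in fact more explicit than the paper's, which simply asserts ``by construction, $\mc{R}(\eps)=\mathfrak{R}(T)$'' without spelling out the singleton/$\X$ exclusion check you highlight.
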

\begin{proof}
  Let $\eXM$ be a Fitch map, and let $(T,\lambda)$ be a tree that explains
  $\eps$. Then, assume that $ab|c \in \mc{R}(\eps)$. Thus, there is a
  neighborhood $N \in \Ns[\eps]$ with $a,b \in N$ and
  $c \in \X\setminus N$. By \cref{p:clust-neighb}, we conclude that
  $N = \Cl{T}(v)$ for some $v\in V(T)$.  Moreover, $c \in \X\setminus N$
  implies that there is a vertex $w\in V(T)$ with $w=\lca\lc{T}(a,b,c)$,
  and thus $\Cl{T}(v) \subsetneq \Cl{T}(w)$. The latter directly implies
  that $\lca\lc{T}(a,b,c) =w \prec\lc{T} v \preceq\lc{T} \lca\lc{T}(a,b)$, and
  thus $ab|c$ is displayed by $T$.

  We proceed with showing that $\mc{R}(\eps)$ is closed. First, we apply
  \Cref{cor:HC} to conclude that $\Ns[\eps]$ is hierarchy-like.  Thus,
  $\mc{N} \coloneqq \Ns[\eps] \cup \{\X\}\cup \{\{x\}: x \in \X\}$ is a
  hierarchy. By \Cref{lem:tree-iff-cluster}, there is a (unique) tree $T$
  such that $\Cls(T) = \mc{N}$. By construction, we conclude that
  $\mc{R}(\eps) = \mathfrak{R}(T)$, and thus
  $\overline{\mc{R}(\eps)} = \overline{\mathfrak{R}(T)}$.  Due to the
  definition of a closure, we have
  $\mc{R}(\eps) = \mathfrak{R}(T)\subseteq \overline{\mathfrak{R}(T)}$.
  However, since $T\in \langle \mathfrak{R}(T)\rangle$ and due to the
  definition of a closure, we obtain
  $\mathfrak{R}(T)=\overline{\mathfrak{R}(T)}$.  Therefore,
  $\mc{R}(\eps) = \mathfrak{R}(T) = \overline{\mathfrak{R}(T)} =
  \overline{\mc{R}(\eps)}$.  Hence, $\mc{R}(\eps)$ is closed.
\end{proof}

In order to show that \HC and \IC are also sufficient conditions for Fitch
maps, we define a particular edge-labeled tree $\To{\eps} = (T,\lambda)$
and proceed by proving that $\To{\eps}$ explains $\eps$.
\begin{definition} \label{def:tree-for-eps} Let $\eXM$ be a map that
  satisfies $\HC$.  The edge-labeled tree $\To{\eps} = (T,\lambda)$ on $\X$
  (with $M$), called \emph{\nameT,} has the cluster set
  \begin{equation}
    \Cls(T) = \Ns[\eps] \cup \big\{\X\big\} \cup
    \big\{ \{x\} \colon x \in \X \big\}
    \tag{a}
  \end{equation}
  and each edge $(\parent(v),v)$ of $T$ obtains the label
  \begin{equation}
    \lambda(\parent(v),v) \coloneqq
    \big\{ m \in M : \textnormal{ there is a } y\in \X
    \textnormal{ with }\Cl{T}(v) =\NnotCol{m}[y]\big\}.
    \tag{b}
  \end{equation}
\end{definition}

We first note that the \nameT $\To{\eps}=(T,\lambda)$ is
well-defined: If there is a map $\eXM$ that satisfies $\HC$, then $\Cls(T)$
is indeed a hierarchy. This, together with \Cref{lem:tree-iff-cluster},
implies that the phylogenetic tree $T$ on $\X$ is well-defined. The
edge-labeling $\lambda:E\to \PS{M}$ in \cref{def:tree-for-eps} requires
only the existence of vertices $y \in \X$ with $\Cl{T}(v)=\NnotCol{m}[y]$
for some $m\in M$, and thus, $\lambda$ is also well-defined.

Now, we are in the position to show that \HC and \IC are sufficient for
Fitch maps.
\begin{lemma} 
  \label{lem:sufficiency}
  Let $\eXM$ be a map that satisfies $\HC$ and $\IC$.  Then, $\eps$ is a
  Fitch map explained by $\To{\eps}$.
\end{lemma}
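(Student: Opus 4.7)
The plan is to show directly that the \nameT $\To{\eps} = (T,\lambda)$ from \Cref{def:tree-for-eps} explains $\eps$ in the sense of \Cref{def:fitch-map}. The core idea is to reverse-engineer the structure of \Cref{p:clust-neighb}: for each fixed $y \in \X$ and $m \in M$, I would locate a vertex $v \in V(T)$ playing the role promised there, verify conditions (1a) and (1b) directly from \Cref{def:tree-for-eps}, and then read off the Fitch condition for every $(x,y) \in \irr{\X\times\X}$.

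Concretely, set $N \coloneqq \NnotCol{m}[y]$. By \HC and the definition of $\Cls(T)$, the set $N$ is a cluster, say $N = \Cl{T}(v)$ (recall that $\To{\eps}$ is well-defined, as noted after \Cref{def:tree-for-eps}), and $y \in N$ gives $v \preceq\lc{T} y$. The first claim is that $(\parent(v),v)$ is an $m$-edge whenever $v \neq \rootT{T}$; this is immediate from \Cref{def:tree-for-eps}(b), taking the witness $y' \coloneqq y$ since $\Cl{T}(v) = N = \NnotCol{m}[y]$. The second claim is that no edge on $P\lc{T}(v,y)$ is an $m$-edge, and this is the only place where \IC enters. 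Suppose for contradiction that some edge $(u',u)$ with $v \preceq\lc{T} u' \prec\lc{T} u \preceq\lc{T} y$ is an $m$-edge. By \Cref{def:tree-for-eps}(b) there exists $y' \in \X$ with $\Cl{T}(u) = \NnotCol{m}[y']$, and $u \preceq\lc{T} y$ gives $y \in \NnotCol{m}[y']$. Applying \IC to $\NnotCol{m}[y']$ with the element $y$ yields $|N| = |\NnotCol{m}[y]| \le |\NnotCol{m}[y']| = |\Cl{T}(u)|$. On the other hand $v \prec\lc{T} u$ in a phylogenetic tree (every inner vertex has at least two children) forces $\Cl{T}(u) \subsetneq \Cl{T}(v) = N$, contradicting the \IC bound.

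With these two properties of $v$ in hand, the Fitch condition follows by a case split on an arbitrary $x \in \X\setminus\{y\}$. If $x \in N = \Cl{T}(v)$, then $v \preceq\lc{T} \lca(x,y) \preceq\lc{T} y$, so $P\lc{T}(\lca(x,y),y)$ is a subpath of $P\lc{T}(v,y)$ and carries no $m$-edge by the second claim. If instead $x \notin N$, then $v$ is not an ancestor of $x$; since $v$ and $\lca(x,y)$ both lie on the unique root-to-$y$ path they are comparable, which forces $\lca(x,y) \prec\lc{T} v$ and in particular $v\neq\rootT{T}$. Then the first claim supplies an $m$-edge $(\parent(v),v)$ which lies on $P\lc{T}(\lca(x,y),y)$. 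Combining both cases gives $m \in \eps(x,y)$ iff there is an $m$-edge on the path from $\lca(x,y)$ to $y$, i.e., $\To{\eps}$ explains $\eps$.

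The main obstacle is the second claim. Everything else is essentially bookkeeping inside the hierarchy $\Cls(T)$, but translating \IC into a usable contradiction requires the observation that an $m$-edge strictly below $v$ manufactures a second witness neighborhood $\NnotCol{m}[y']$ that still contains $y$ yet sits on a strictly smaller cluster than $N$, violating the \IC size inequality.
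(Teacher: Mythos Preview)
Your proof is correct and follows essentially the same approach as the paper's: locate the vertex $v$ with $\Cl{T}(v)=\NnotCol{m}[y]$ via \HC, use \IC (together with the hierarchy structure) to rule out any $m$-edge strictly below $v$, and read off the Fitch biconditional from the position of $\lca(x,y)$ relative to $v$. The only difference is organizational---you first isolate the two properties of $v$ (mirroring \Cref{p:clust-neighb}(1a,1b)) and then do the case split on $x$, whereas the paper argues the two directions of the biconditional directly; the use of \IC to force $\NnotCol{m}[y]\subseteq\Cl{T}(u)$ (equivalently, your size contradiction $|\Cl{T}(u)|<|N|\le|\Cl{T}(u)|$) is identical in substance.
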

\begin{proof}
  Let $\eXM$ be a map that satisfies $\HC$ and $\IC$, and let
  $\To{\eps}=(T,\lambda)$ be the \nameT.  To show that $\eps$ is a
  Fitch map it suffices to show that $\To{\eps}$ explains $\eps$.  To this
  end, we will show that for every $(x,y)\in \irr{\X\times \X}$ we have the
  following:
  \begin{equation*}
    m \in \eps(x,y) \iff \textnormal{ there is an } m\textnormal{-edge on
      the path from } \lca(x,y) \textnormal{ to } y.
  \end{equation*}
  Let $(x,y) \in \irr{\X\times\X}$, and suppose that $m \in \eps(x,y)$.
  Hence, $x \notin \NnotCol{m}[y]$. By construction of $\To{\eps}$, the set
  $\NnotCol{m}[y]\in \Ns[\eps]\subseteq \Cls(T)$ is a cluster. Hence, there
  is a vertex $v\in V$ with $\NnotCol{m}[y]=\Cl{T}(v)$. Since
  $y\in \NnotCol{m}[y]=\Cl{T}(v)$, i.e.\ $v \preceq y$, and
  $\lca(x,y)\preceq y$, we conclude that $v$ and $\lca(x,y)$ are
  comparable. Moreover, since $x \notin \NnotCol{m}[y]=\Cl{T}(v)$, i.e.\
  $v \npreceq x$, and $\lca(x,y)\preceq x$, we can conclude that
  $v\npreceq \lca(x,y)$. The latter two arguments imply that
  $\lca(x,y) \prec v$; and therefore,
  $\lca(x,y) \preceq \parent(v)\prec v\preceq y$. Hence, the edge
  $(\parent(v),v)$ lies on the path from $\lca(x,y)$ to $y$.  Since
  $\Cl{T}(v)=\NnotCol{m}[y]$, $v\ne \rootT{T}$ and by the construction of
  $\To{\eps}$, we have $m \in \lambda(\parent(v),v)$, i.e.\
  $(\parent(v),v)$ is an $m$-edge. Hence, there is the $m$-edge
  $(\parent(v),v)$ that lies on the path from $\lca(x,y)$ to $y$.
	
  Conversely, suppose that there is an $m$-edge $(\parent(v),v)$ on the
  path from $\lca(x,y)$ to $y$ in $\To{\eps}$, and that
  $(x,y) \in \irr{\X\times\X}$. By construction of $\To{\eps}$, cf.\
  \cref{def:tree-for-eps}~(b), there is a leaf $y' \in \X$ with
  $\Cl{T}(v)=\NnotCol{m}[y']\eqqcolon N$.

  We continue to show that $\NnotCol{m}[y] \subseteq \Cl{T}(v)$. Since $v$
  lies on the path $P\lc{T}(\lca(x,y),y)$, we have $v \preceq y$, and thus
  $y \in \Cl{T}(v)$. By the construction of $\To{\eps}$, we have
  $\Ns[\eps]\subseteq\Cls(T)$; and therefore,
  $\NnotCol{m}[y],\Cl{T}(v) \in\Cls(T)$ are clusters. This, together with
  $y \in \Cl{T}(v)\cap \NnotCol{m}[y] \ne \emptyset$, implies either
  $\NnotCol{m}[y]\subseteq\Cl{T}(v)$ or
  $\Cl{T}(v)\subsetneq\NnotCol{m}[y]$. Moreover, since $\eps$ satisfies the
  \IC and $y \in N = \Cl{T}(v)$ it must hold that
  $|\NnotCol{m}[y]|\le|N|=|\Cl{T}(v)|$. The latter two arguments
  immediately imply $\NnotCol{m}[y]\subseteq \Cl{T}(v)$. Furthermore, since
  $(\parent(v),v)$ lies on the path from $\lca(x,y)$ to $y$, we can
  conclude that $x \notin \Cl{T}(v)$. This and
  $\NnotCol{m}[y]\subseteq \Cl{T}(v)$ imply that $x \notin
  \NnotCol{m}[y]$. Hence, by definition of $\NnotCol{m}[y]$ we must have
  $m \in \eps(x,y)$.
	
  To summarize, for any pair $(x,y) \in \irr{\X\times\X}$ we have
  $m \in \eps(x,y)$ if and only if there is an $m$-edge on the path from
  $\lca(x,y)$ to $y$ in $\To{\eps}$. Therefore, $\To{\eps}$ explains
  $\eps$; and thus, $\eps$ is a Fitch map.
\end{proof}

\Cref{cor:HC,cor:IC}, together with \Cref{lem:sufficiency}, imply
\begin{theorem} \label{thm:charact-HCIC} A map $\eXM$ is a Fitch map if and
  only if
  \begin{enumerate}[noitemsep]
  \item $\Ns[\eps]$ is hierarchy-like (\HC); and
  \item for every neighborhood $N\coloneqq\NnotCol{m}[y] \in \Ns[\eps]$ and
    every leaf $y' \in N$, we have $|\NnotCol{m}[y']|\le|N|$ (\IC).
  \end{enumerate}
\end{theorem}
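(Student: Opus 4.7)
The plan is to derive the two directions of the biconditional separately, with each direction following immediately from results already established earlier in the section, so the proof reduces to a short assembly of prior lemmas.

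For the forward (necessity) direction, I would assume that $\eps$ is a Fitch map and invoke \Cref{cor:HC} to obtain that $\Ns[\eps]$ is hierarchy-like, i.e.\ \HC holds. In the same step I would invoke \Cref{cor:IC} to obtain that for every neighborhood $N = \NnotCol{m}[y]$ and every $y' \in N$ the bound $|\NnotCol{m}[y']| \le |N|$ holds, i.e.\ \IC holds. Both of these corollaries have been derived from the cluster--neighborhood correspondence in \Cref{p:clust-neighb}, so no additional argument is required here.

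For the reverse (sufficiency) direction, I would assume that $\eps$ satisfies both \HC and \IC and directly apply \Cref{lem:sufficiency}: the hypothesis $\HC$ makes the \nameT $\To{\eps}$ well-defined (its prescribed cluster set is a hierarchy by \Cref{lem:tree-iff-cluster}), and the combination of \HC and \IC is exactly what is needed in the argument of \Cref{lem:sufficiency} to show that $\To{\eps}$ explains $\eps$. In particular, the $\eps$-tree constructed in \Cref{def:tree-for-eps} serves as the explicit witness tree, so $\eps$ is a Fitch map.

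Since the theorem is essentially a statement of the form ``(\Cref{cor:HC} and \Cref{cor:IC}) $\wedge$ \Cref{lem:sufficiency}'', there is no genuine obstacle remaining: the substantive technical content (the cluster--neighborhood correspondence together with the verification that $\To{\eps}$ explains $\eps$) has already been handled in the earlier proofs, and the present proof is simply the two-line combination of these three results.
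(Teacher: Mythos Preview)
Your proposal is correct and matches the paper's own proof essentially verbatim: the paper simply states that \Cref{cor:HC,cor:IC}, together with \Cref{lem:sufficiency}, imply the theorem. There is nothing to add or change.
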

Note that \Cref{thm:charact-HCIC} and \cite[Thm.\ 4]{Hellmuth:19F} are
equivalent in case that $\eps$ is a monochromatic Fitch map.  In
  particular, for a Fitch map $\eps$, \Cref{thm:charact-HCIC} implies that
  $\NnotCol{m}[y'] \subseteq N$ for every neighborhood
  $N\coloneqq\NnotCol{m}[y] \in \Ns[\eps]$ and every leaf $y' \in N$, since
  $\Ns[\eps]$ must be hierarchy-like.

\subsection{Characterization in Terms of Forbidden Submaps}

Monochromatic Fitch maps $\eXM$ with $|M|=1$ are characterized by a small
set of forbidden subgraphs \cite{Geiss:18a,Hellmuth:18b}. In what follows,
we show that also non-monochromatic Fitch maps have a forbidden submap
characterization as defined as follows:

\begin{definition}
Let $\eXM$ and $\eps':\irr{\X'\times\X'}\to \PS{M'}$ be two maps. 
Then, the map $\eps'$ is a \emph{submap of $\eps$} if $\X' \subseteq \X$, and
$\eps'(x,y)\subseteq \eps(x,y)$ for every $(x,y)\in \irr{\X' \times \X'}$. 
In addition, a submap $\eps'$ is an \emph{induced} submap of $\eps$ if
$\eps'(x,y)= \eps(x,y)$ for every $(x,y)\in \irr{\X' \times \X'}$.
\end{definition}

For the characterization in terms of forbidden submaps, we first provide
the next lemma, which is illustrated in \cref{f:forb-gen-rel}.
\begin{figure}[t]
  \centering \includegraphics[width=0.65\textwidth]{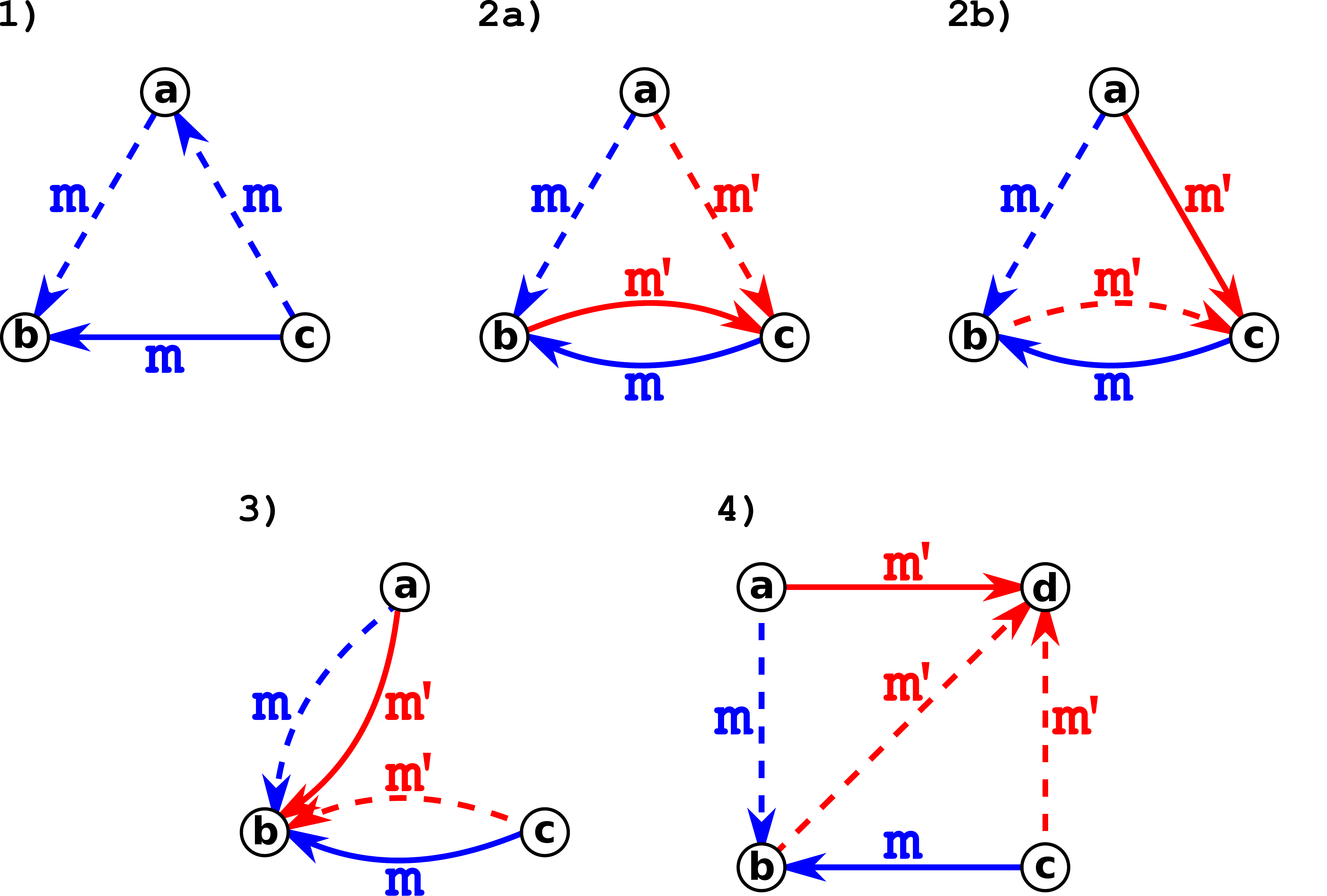}
  \caption{%
    Based on \Cref{lem:non-fitch}, there are five forbidden (not
    necessarily induced) submaps on one or two colors.  The colored
    graph-representation of these maps are shown here.  Solid edges
    indicate that the particular color must occur, while dashed edges
    indicate that the particular color must not occur.  Note, $m,m'$ are
    not necessarily distinct except for the Cases (3) and (4).}
  \label{f:forb-gen-rel}
\end{figure}

\begin{lemma}
  A map $\eXM$ is \emph{not} a Fitch map if and only if there are (not
  necessarily distinct) colors $m,m' \in M$ and
  \begin{itemize}
  \item there is a subset $\{a,b,c\} \in \binom{\X}{3}$ with
    $m \in \eps(c,b)$ and
    $m \notin \eps(a,b)$ that satisfies one of the following conditions
    \begin{enumerate}[noitemsep]
    \item $m \notin \eps(c,a)$, or
    \item a) $m' \notin \eps(a,c)$ and $m' \in \eps(b,c)$, or \\
      b) $m' \in \eps(a,c)$ and $m' \notin \eps(b,c)$, or	
    \item $m\ne m'$, $m' \notin \eps(c,b)$ and $m' \in \eps(a,b)$; or
    \end{enumerate}
  \item there is a subset $\{a,b,c,d\} \in \binom{\X}{4}$
    with $m \in \eps(c,b)$ and
    $m \notin \eps(a,b)$ that satisfies 
    \begin{enumerate}[noitemsep]
    \item[4.] $m\ne m'$, $m' \notin \eps(b,d)\cup \eps(c,d)$ and
      $m' \in \eps(a,d)$.
    \end{enumerate}
  \end{itemize}
  \label{lem:non-fitch}
\end{lemma}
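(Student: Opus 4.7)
The strategy is to build on \Cref{thm:charact-HCIC}, which recasts ``$\eps$ is a Fitch map'' as the joint validity of \HC and \IC; the five configurations listed are then exactly the combinatorial certificates of a failure of one of these two conditions.

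For the ``$\Leftarrow$'' direction I translate each configuration into neighborhood language. Configuration~(1) gives $a\in\NnotCol{m}[b]$ together with $c\in\NnotCol{m}[a]\setminus\NnotCol{m}[b]$, so the Fitch implication $y'\in\NnotCol{m}[y]\Rightarrow\NnotCol{m}[y']\subseteq\NnotCol{m}[y]$ (which for Fitch maps follows from combining \HC and \IC, as noted after \Cref{thm:charact-HCIC}) fails. Each of (2a), (2b), (3), and (4) exhibits two neighborhoods that share a common vertex yet each contains an element absent from the other, i.e.\ a direct violation of \HC. In every case \Cref{thm:charact-HCIC} then certifies that $\eps$ is not a Fitch map; the distinctness of $a,b,c$ (and $d$) follows automatically from the color-presence hypotheses.

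For the ``$\Rightarrow$'' direction, assume $\eps$ is not a Fitch map, so by \Cref{thm:charact-HCIC} either \HC or \IC fails. \emph{Branch~1:} suppose there exist $m\in M$ and $y,y'\in\X$ with $y'\in\NnotCol{m}[y]$ but $\NnotCol{m}[y']\not\subseteq\NnotCol{m}[y]$; picking any $c\in\NnotCol{m}[y']\setminus\NnotCol{m}[y]$, the triple $(a,b,c):=(y',y,c)$ with $m':=m$ witnesses Configuration~(1). \emph{Branch~2:} otherwise $\NnotCol{m}[y']\subseteq\NnotCol{m}[y]$ whenever $y'\in\NnotCol{m}[y]$, so \HC must still fail via two crossing neighborhoods $N_1=\NnotCol{m_1}[y_1]$, $N_2=\NnotCol{m_2}[y_2]$. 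I distinguish three sub-cases: (i) $m_1=m_2=:m$ (forcing $y_1\ne y_2$ and, via the subset implication, $y_1\notin N_2$ and $y_2\notin N_1$), where any $z\in N_1\cap N_2$ delivers Configuration~(2a) via $(a,b,c):=(z,y_1,y_2)$; (ii) $m_1\ne m_2$ with $y_1=y_2=:b$, where any $a\in N_1\setminus N_2$ and $c\in N_2\setminus N_1$ yield Configuration~(3); and (iii) $m_1\ne m_2$ with $y_1\ne y_2$, setting $b:=y_1$, $d:=y_2$, and further splitting on whether $b\in N_2$ and whether $d\in N_1$: the ``both in'' alternative produces Configuration~(4) with $a\in N_1\setminus N_2$, $c\in N_2\setminus N_1$; the ``$b\in N_2$, $d\notin N_1$'' alternative collapses the quadruple via $c:=d$ to Configuration~(2b); its mirror collapses symmetrically to Configuration~(2b) after swapping roles; and the ``neither in'' alternative produces Configuration~(2a) via $(a,b,c):=(z,b,d)$ for any $z\in N_1\cap N_2$.

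The main technical obstacle is the sub-case~(iii) analysis: one must carry out the enumeration over the four alternatives dictated by $b\stackrel{?}{\in}N_2$ and $d\stackrel{?}{\in}N_1$ and check in each that a valid set of three or four pairwise distinct vertices can be chosen, matching the distinctness demanded by $\{a,b,c\}\in\binom{\X}{3}$ or $\{a,b,c,d\}\in\binom{\X}{4}$. The subset implication available throughout Branch~2 is the decisive tool: it forces exactly the memberships (such as $y_1\notin N_2$, $y_2\notin N_1$ in sub-case~(i), and the comparability constraints that preclude ``mixed'' overlap patterns in sub-case~(iii)) that make one of the five listed configurations unavoidable whenever $\eps$ is not a Fitch map.
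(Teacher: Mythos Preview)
Your argument is correct. The ``$\Leftarrow$'' direction matches the paper's reasoning (each configuration exhibits two neighborhoods that either cross or violate the subset implication, and hence contradicts \HC or \IC via \Cref{thm:charact-HCIC}); your appeal to the remark after \Cref{thm:charact-HCIC} for Configuration~(1) is legitimate, since a failure of $y'\in\NnotCol{m}[y]\Rightarrow\NnotCol{m}[y']\subseteq\NnotCol{m}[y]$ forces either a crossing (so \HC fails) or a strict containment the wrong way (so \IC fails).

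The ``$\Rightarrow$'' direction is also correct but organized differently from the paper. The paper splits first into ``\IC fails'' versus ``\HC fails'', and in the latter case chooses crossing neighborhoods $\NnotCol{m}[y],\NnotCol{m'}[y']$ and branches on whether none, one, or both of $y,y'$ lie in the intersection (Cases~(A), (B), (C)), with Case~(C) further split by $y=y'$ versus $y\ne y'$ and $m=m'$ versus $m\ne m'$. You instead isolate \emph{all} failures of the same-color subset implication up front as Branch~1 (which absorbs both the \IC failures and the paper's Case~(C.ii) with $m=m'$), and only then treat \HC failures under the standing hypothesis that the subset implication holds; your subsequent split is by $m_1=m_2$, then $y_1=y_2$, then the four membership patterns of $b,d$ in $N_2,N_1$. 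Both decompositions are exhaustive, and each of your sub-cases checks out, including the distinctness of the chosen vertices. What your organization buys is a cleaner treatment of Configuration~(1): it is produced once, in Branch~1, rather than appearing both under ``\IC fails'' and again under Case~(C.ii) as in the paper. Conversely, the paper's case split on the location of $y,y'$ relative to the intersection is slightly more uniform and avoids the four-way membership enumeration in your sub-case~(iii).
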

\begin{proof}
  Let $\eXM$ be an arbitrary map. First, suppose that $\eps$ is not a Fitch
  map.  Thus, \Cref{thm:charact-HCIC} implies that $\eps$ does not
  satisfy \HC or \IC.
	
  First, suppose that $\eps$ does not satisfy \IC. Hence, there is a
  neighborhood $\NnotCol{m}[b] \in \Ns[\eps]$ and a vertex
  $a \in \NnotCol{m}[b]$ 
  such that $|\NnotCol{m}[a]|>|\NnotCol{m}[b]|$. Hence, $a \ne b$ and there
  is a vertex $c \in \NnotCol{m}[a]\setminus\NnotCol{m}[b]$. Since
  $a,b\in \NnotCol{m}[b]$, we have $c \ne a$ and $c \ne b$, and hence
  $\{a,b,c\} \in { \X \choose 3}$. Since $c \notin \NnotCol{m}[b]$, it must
  hold that $m \in \eps(c,b)$. Since $a \in \NnotCol{m}[b]$, it must hold
  that $m \notin \eps(a,b)$. Since $c \in \NnotCol{m}[a]$, it must hold
  that $m \notin \eps(c,a)$. The last three observations imply that
  Condition (1) is satisfied.
	
  Now, assume that $\eps$ does not satisfy \HC, and thus that $\Ns[\eps]$
  is not hierarchy-like. Hence, there are two neighborhoods
  $\NnotCol{m}[y],\NnotCol{m'}[y'] \in \Ns[\eps]$ such that
  $\NnotCol{m}[y]\cap \NnotCol{m'}[y'] \notin
  \{\emptyset,\NnotCol{m}[y],\NnotCol{m'}[y']\}$. This, together with the fact that
  $y \in \NnotCol{m}[y]$ and $y' \in \NnotCol{m'}[y']$, implies that there
  are three mutually exclusive cases that need to be examined:
  \begin{itemize}[noitemsep]
  \item[{(A)}] neither of $y$ and $y'$ is contained in
    $\NnotCol{m}[y]\cap \NnotCol{m'}[y']$,
  \item[{(B)}] exactly one element of $\{y,y'\}$ is contained in
    $\NnotCol{m}[y]\cap \NnotCol{m'}[y']$,
  \item[{(C)}] both $y$ and $y'$ are contained in
    $\NnotCol{m}[y]\cap \NnotCol{m'}[y']$.
  \end{itemize}

  First, consider Case (A). This case is equivalent to
  $y \in \NnotCol{m}[y]\setminus\NnotCol{m'}[y']$ and
  $y' \in \NnotCol{m'}[y']\setminus\NnotCol{m}[y]$. Since
  $\NnotCol{m}[y]\cap\NnotCol{m'}[y']\ne \emptyset$, there is a vertex
  $a\in \NnotCol{m}[y]\cap\NnotCol{m'}[y']$ with $y,y' \ne a$. Thus, $a,y$
  and $y'$ are pairwise distinct. Since $y \notin \NnotCol{m'}[y']$, we
  have $m' \in \eps(y,y')$, and since $y' \notin \NnotCol{m}[y]$, we have
  $m \in \eps(y',y)$.  Moreover, since
  $a \in \NnotCol{m}[y]\cap \NnotCol{m'}[y']$, we have $m \notin\eps(a,y)$
  and $m' \notin \eps(a,y')$. Now, put $b \coloneqq y$ and
  $c \coloneqq y'$. Then, we have found a subset
  $\{a,b=y,c=y'\} \in {\X \choose 3 }$ such that $m \in \eps(c,b)$,
  $m \notin\eps(a,b)$, $m' \notin \eps(a,c)$ and $m' \in \eps(b,c)$. Hence,
  Condition (2a) is satisfied.
	
  Now, consider Case (B). We can assume w.l.o.g.\ that
  $y \in \NnotCol{m}[y]\cap \NnotCol{m'}[y']$ and
  $y' \in \NnotCol{m'}[y']\setminus\NnotCol{m}[y]$. Since
  $\NnotCol{m}[y]\setminus\NnotCol{m'}[y']\ne \emptyset$, there is a vertex
  $a\in \NnotCol{m}[y]\setminus\NnotCol{m'}[y']$ with $y,y'\ne a$. Thus,
  $a,y$ and $y'$ are pairwise distinct. Since $y \in \NnotCol{m'}[y']$, we
  have $m' \notin \eps(y,y')$, and since $y' \notin \NnotCol{m}[y]$, we
  have $m \in \eps(y',y)$. Moreover, since
  $a \in \NnotCol{m}[y]\setminus\NnotCol{m'}[y']$, we have
  $m \notin\eps(a,y)$ and $m' \in \eps(a,y')$. Now, put $b \coloneqq y$ and
  $c \coloneqq y'$. Then, we have found a subset
  $\{a,b=y,c=y'\} \in {\X \choose 3 }$ such that $m \in \eps(c,b)$,
  $m \notin\eps(a,b)$, $m' \in \eps(a,c)$ and $m' \notin \eps(b,c)$. Hence,
  Condition (2b) is satisfied.
	
  Next, consider Case (C). Here we consider the two subcases (i) $y=y'$ and
  (ii) $y\neq y'$. Let us start with Subcase (C.i) and suppose that $y=y'$.
  Since
  $\NnotCol{m}[y]\cap \NnotCol{m'}[y] \notin
  \{\emptyset,\NnotCol{m}[y],\NnotCol{m'}[y]\}$, we can directly conclude
  that $m\neq m'$. Moreover, since
  $\NnotCol{m}[y]\setminus\NnotCol{m'}[y]\ne \emptyset$ and
  $\NnotCol{m'}[y]\setminus\NnotCol{m}[y]\ne \emptyset$, there are two
  distinct vertices $a \in \NnotCol{m}[y]\setminus\NnotCol{m'}[y]$ and
  $c \in \NnotCol{m'}[y]\setminus\NnotCol{m}[y]$. Hence, $a,c$ and $y$ are
  pairwise distinct. Since $a \in \NnotCol{m}[y]\setminus\NnotCol{m'}[y]$,
  we have $m \notin \eps(a,y)$ and $m' \in \eps(a,y)$.  Moreover, since
  $c \in \NnotCol{m'}[y]\setminus\NnotCol{m}[y]$, we have
  $m' \notin\eps(c,y)$ and $m \in \eps(c,y)$. Now, put $b \coloneqq y$.
  Then, we have found a subset $\{a,b=y,c\} \in {\X \choose 3 }$ such that
  $m \in \eps(c,b)$, $m \notin\eps(a,b)$, $m' \notin \eps(c,b)$ and
  $m' \in \eps(a,b)$. This, together with $m\ne m'$, implies that Condition
  (3) is satisfied.
	
  Finally, consider Subcase (C.ii) and suppose that $y\neq y'$.  Since
  $\NnotCol{m}[y]\setminus\NnotCol{m'}[y']\ne \emptyset$ and
  $\NnotCol{m'}[y']\setminus\NnotCol{m}[y]\ne \emptyset$, there are two
  distinct vertices $x \in \NnotCol{m}[y]\setminus\NnotCol{m'}[y']$ and
  $x' \in \NnotCol{m'}[y']\setminus\NnotCol{m}[y]$. Hence, $x,x',y$ and
  $y'$ are pairwise distinct, since
  $y,y' \in \NnotCol{m}[y]\cap \NnotCol{m'}[y']$ are distinct. Since
  $x \in \NnotCol{m}[y]\setminus\NnotCol{m'}[y']$, we have
  $m \notin \eps(x,y)$ and $m' \in \eps(x,y')$. Since
  $y \in \NnotCol{m'}[y']$, we have $m' \notin \eps(y,y')$. Moreover, since
  $x' \in \NnotCol{m'}[y']\setminus\NnotCol{m}[y]$, we have
  $m' \notin\eps(x',y')$ and $m \in \eps(x',y)$.
	
  Now, assume that $m=m'$. Then, put $a\coloneqq y$, $b\coloneqq y'$ and
  $c\coloneqq x$. Thus, we have found a subset
  $\{a=y,b=y',c=x\} \in {\X \choose 3 }$ such that $m=m' \in \eps(c,b)$,
  $m=m' \notin\eps(a,b)$, $m\notin \eps(c,a)$. Hence, Condition (1) is
  satisfied.
	
  Next, assume that $m\ne m'$. Then, put $a\coloneqq x$, $b \coloneqq y$,
  $c\coloneqq x'$ and $d \coloneqq y'$. Then, we have found a subset
  $\{a=x,b=y,c=x',d=y'\} \in {\X \choose 4 }$ such that $m \in \eps(c,b)$,
  $m \notin\eps(a,b)$, $m' \notin\eps(b,d)\cup\eps(c,d)$ and
  $m' \in \eps(a,d)$. This, together with $m\ne m'$, implies that Condition
  (4) is satisfied.
	
  In summary, if $\eps$ is not a Fitch map, then at least one of the
  Conditions (1), (2), (3) and (4) must be satisfied.
  \smallskip \\
  Conversely, suppose that $\eXM$ satisfies at least one of the Conditions
  (1), (2) (3) or (4).  First, assume that Condition (1) holds. Then,
  $a\in \NnotCol{m}[a]\cap\NnotCol{m}[b]$ and
  $c \in \NnotCol{m}[a]\setminus\NnotCol{m}[b]$.  If
  $\NnotCol{m}[b]\nsubseteq \NnotCol{m}[a]$, then $\Ns[\eps]$ is not
  hierarchy-like; and therefore, \Cref{thm:charact-HCIC} implies that
  $\eps$ is not a Fitch map.  Since
  $c \in \NnotCol{m}[a]\setminus\NnotCol{m}[b]$ implies that
  $\NnotCol{m}[a]\ne\NnotCol{m}[b]$, we can now assume that
  $\NnotCol{m}[b] \subsetneq \NnotCol{m}[a]$.  Hence, there is a
  neighborhood $N\coloneqq \NnotCol{m}[b]$ and a vertex $a \in N$ such that
  $|\NnotCol{m}[a]|>|N|$; and therefore, $\eps$ does not satisfy the
  inequality-condition (\IC). \Cref{thm:charact-HCIC} implies that $\eps$
  is not a Fitch map.  Hence, either way, if $\eps$ satisfies Condition
  (1), then $\eps$ is not a Fitch map.
	
  Now, if Condition (2a) is satisfied, then
  $a \in \NnotCol{m}[b]\cap \NnotCol{m'}[c]$,
  $b \in \NnotCol{m}[b]\setminus\NnotCol{m'}[c]$ and
  $c \in \NnotCol{m'}[c]\setminus\NnotCol{m}[b]$. If Condition (2b) is
  satisfied, then $a \in \NnotCol{m}[b]\setminus \NnotCol{m'}[c]$,
  $b \in \NnotCol{m}[b]\cap\NnotCol{m'}[c]$ and
  $c \in \NnotCol{m'}[c]\setminus\NnotCol{m}[b]$. Next, if Condition (3)
  is satisfied, then $a \in \NnotCol{m}[b]\setminus \NnotCol{m'}[b]$,
  $b \in \NnotCol{m}[b]\cap\NnotCol{m'}[b]$ and
  $c \in \NnotCol{m'}[b]\setminus\NnotCol{m}[b]$. Moreover, if Condition
  (4) is satisfied, then $a \in \NnotCol{m}[b]\setminus \NnotCol{m'}[d]$,
  $b \in \NnotCol{m}[b]\cap\NnotCol{m'}[d]$ and
  $c \in \NnotCol{m'}[d]\setminus\NnotCol{m}[b]$. It is easy to see that in
  neither case the set $\Ns[\eps]$ is hierarchy-like, and
  \cref{thm:charact-HCIC} implies that $\eps$ is not a Fitch map.
	
  In summary, if one of the Conditions (1), (2), (3) or (4) is satisfied, then
  $\eps$ is not a Fitch map.
\end{proof}

Application of simple Boolean conversion on \Cref{lem:non-fitch} implies
\begin{theorem}\label{thm:forbid-sub}
  A map $\eXM$ is a Fitch map if and only if for every (not necessarily
  distinct) colors $m,m' \in M$, and
  \begin{itemize}
  \item for every subset $\{a,b,c\} \in \binom{\X}{3}$ with
    $m \in \eps(c,b)$ and $m \notin \eps(a,b)$ we have
    \begin{enumerate}[noitemsep]
    \item $m \in \eps(c,a)$, and
    \item $m' \in \eps(a,c)$ if and only if $m' \in \eps(b,c)$, and
    \item if $m\ne m'$ and $m' \notin \eps(c,b)$, then
      $m' \notin \eps(a,b)$; and
    \end{enumerate}
  \item for every subset $\{a,b,c,d\} \in \binom{\X}{4}$ with
    $m \in \eps(c,b)$ and $m \notin \eps(a,b)$ we have
    \begin{enumerate}[noitemsep]
    \item[4.] if $m\neq m' $ and $m' \notin \eps(b,d)\cup \eps(c,d)$, then
      $m' \notin \eps(a,d)$.
    \end{enumerate}
  \end{itemize}
\end{theorem}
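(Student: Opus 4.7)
The plan is to derive \Cref{thm:forbid-sub} directly from \Cref{lem:non-fitch} by taking the contrapositive and pushing negations through the quantifier and connective structure of the lemma. The statement of \Cref{lem:non-fitch} has the form ``$\eps$ is not a Fitch map iff there exist colors $m,m'\in M$ and a $3$-subset or $4$-subset of $\X$ satisfying one of five Boolean conditions, each sharing the common premise $m\in\eps(c,b)$ and $m\notin\eps(a,b)$.'' Negating both sides yields ``$\eps$ is a Fitch map iff for all $m,m'$ and all $3$- or $4$-subsets with $m\in\eps(c,b)$ and $m\notin\eps(a,b)$, none of the five conditions holds.'' The task is then to rewrite each individual negation in the form stated in \Cref{thm:forbid-sub}.

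First I would handle the three-element cases one by one. Condition~(1) of \Cref{lem:non-fitch} is simply ``$m\notin\eps(c,a)$,'' whose negation is $m\in\eps(c,a)$, yielding item~(1) of the theorem. Conditions~(2a) and (2b) together amount to an exclusive-or on whether $m'$ is in $\eps(a,c)$ versus $\eps(b,c)$; forbidding both simultaneously is equivalent to the biconditional ``$m'\in\eps(a,c)\iff m'\in\eps(b,c)$,'' yielding item~(2). Condition~(3) is a conjunction ``$m\ne m' \wedge m'\notin\eps(c,b) \wedge m'\in\eps(a,b)$''; its negation rearranged as an implication under the hypothesis $m\neq m'$ and $m'\notin\eps(c,b)$ is exactly item~(3).

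Next, the four-element case: Condition~(4) of \Cref{lem:non-fitch} reads ``$m\neq m' \wedge m'\notin\eps(b,d)\cup\eps(c,d) \wedge m'\in\eps(a,d)$.'' Negating and rearranging as an implication gives item~(4) of the theorem. Because the shared premise $m\in\eps(c,b)$ and $m\notin\eps(a,b)$ is kept fixed across all cases, the universal quantification over $3$- and $4$-subsets in the theorem matches the existential quantification in the lemma under contrapositive in a clean way, and no additional case-splitting is needed.

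There is essentially no obstacle beyond careful bookkeeping of the Boolean negations: the logical equivalence is immediate once the conditions of \Cref{lem:non-fitch} are parsed correctly. The proof in the paper can therefore be a one-line appeal to \Cref{lem:non-fitch}, stating that \Cref{thm:forbid-sub} is obtained by contraposition, after verifying that the exclusive-or on Conditions~(2a) and (2b) correctly collapses to the biconditional in item~(2) and that the conjunctions in Conditions~(3) and (4) correctly transform into the implications in items~(3) and (4).
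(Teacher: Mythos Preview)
Your proposal is correct and matches the paper's approach exactly: the paper's proof of \Cref{thm:forbid-sub} is the single sentence ``Application of simple Boolean conversion on \Cref{lem:non-fitch} implies \Cref{thm:forbid-sub},'' which is precisely the contrapositive-and-negate argument you spell out in detail.
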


Note that \Cref{thm:forbid-sub} and \cite[Thm.\ 5]{Hellmuth:19F} are
equivalent in case that $\eps$ is a monochromatic Fitch map.  Moreover, the
characterization in \Cref{thm:forbid-sub} directly implies the following
result that shows that the recognition of Fitch maps reduces to the
recognition of Fitch maps on less than five vertices and on one or two
colors only.
\begin{corollary}
  Let $X'\subseteq X$, $\eXM$ be a map, and
  $\eps' \colon \irr{\X' \times \X'} \to \PS{M'}$ be the submap of $\eps$
  defined by $\eps'(x,y)\coloneqq\eps(x,y)\cap M'$ for every
  $(x,y) \in \irr{\X' \times \X'}$.  Then, the following statements are
  equivalent:
  \begin{enumerate}
  \item $\eps$ is a Fitch map.
  \item If $|M|\ge 2$ and $|\X|\ge 4$, then for every
    $\X' \in {\X\choose 4}$ and for every $M'\in {M\choose 2}$ the map
    $\eps'$ is a Fitch map.
  \item If $|M|\ge 2$ and $|\X|\le 3$, then for every $M'\in {M\choose 2}$
    the map $\eps'$ is a Fitch map, where $\X'\coloneqq \X$.
  \item If $|M|=1$, then for every $\X' \in {\X\choose 3}$ the map $\eps'$
    is a Fitch map, where $M'\coloneqq M$.
  \end{enumerate}
\end{corollary}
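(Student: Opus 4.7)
I would prove the corollary by establishing (1) $\Rightarrow$ (2),(3),(4) as a single closure-under-submaps fact, and then handling each converse direction separately by contrapositive using \Cref{lem:non-fitch}. Since for any fixed $\eps$ exactly one of (2), (3), (4) has a non-vacuous hypothesis (determined by $|\X|$ and $|M|$), only that one requires substantive work in the converse.

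\emph{Forward direction.} The key sub-claim is that if $\eps$ is a Fitch map, $\X'\subseteq\X$ and $M'\subseteq M$, then the submap $\eps'$ defined by $\eps'(x,y)=\eps(x,y)\cap M'$ is also a Fitch map. Starting from an edge-labeled tree $(T,\lambda)$ that explains $\eps$, I would take $T'$ to be the minimal subtree of $T$ containing $\X'\cup\{\rootT{T}\}$, suppress the resulting degree-two vertices, and label each surviving edge of $T'$ by the union of $\lambda$-values along the corresponding $T$-path, intersected with $M'$. A routine verification comparing $\lca\lc{T'}(x,y)$ with $\lca\lc{T}(x,y)$ for $x,y\in\X'$ shows that the resulting edge-labeled tree explains $\eps'$. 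This yields (1) $\Rightarrow$ (2), (3), (4) at once.

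\emph{Converse.} Suppose $\eps$ is not a Fitch map. By \Cref{lem:non-fitch}, there exist colors $m, m'\in M$ (possibly equal) and a subset $\X_0\subseteq\X$ of cardinality $3$ or $4$ satisfying one of the lemma's four forbidden conditions. Since each such condition constrains only the values $\eps(x,y)$ with $x,y\in\X_0$ and only the colors $m, m'$, the same condition is witnessed by the submap $\eps'$ on any $\X'\supseteq\X_0$ and $M'\supseteq\{m,m'\}$; hence such an $\eps'$ is not a Fitch map either. I would then split into cases. For (4), $|M|=1$ forces $m=m'$, which rules out Conditions (3) and (4) of \Cref{lem:non-fitch} and leaves a three-vertex witness, giving some $\X'\in\binom{\X}{3}$ on which $\eps'$ is not a Fitch map. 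For (3), $|\X|\le 3$ rules out the four-vertex Condition (4) of the lemma, so we take $\X'=\X$ and let $M'\in\binom{M}{2}$ be any two-element set containing $m,m'$ (extending by an arbitrary extra color if $m=m'$). For (2), both size bounds permit extending $\X_0$ to some $\X'\in\binom{\X}{4}$ and $\{m,m'\}$ to some $M'\in\binom{M}{2}$.

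\emph{Main obstacle.} The only non-trivial step is the submap-closure lemma in the forward direction: when leaves of $T$ are dropped and degree-two vertices are suppressed, one has to verify that labeling the new edges by the union of the labels along the contracted $T$-path (intersected with $M'$) correctly reproduces the ``there is an $m$-edge on the path from $\lca$ to $y$'' condition. This reduces to the observation that such a path-existence condition is preserved under taking unions of labels along concatenated edges. The backward direction is then essentially bookkeeping on top of \Cref{lem:non-fitch}, once one notes that its forbidden conditions are existential over small subsets of vertices and over at most two colors, so extending $\X'$ or $M'$ can only preserve a witness, never destroy it.
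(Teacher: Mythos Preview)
Your proposal is correct, and the backward direction via \Cref{lem:non-fitch} is exactly what the paper has in mind: the corollary is stated without proof as an immediate consequence of \Cref{thm:forbid-sub}, which is just the contrapositive reformulation of \Cref{lem:non-fitch}.

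For the forward direction, however, you do more work than necessary. The paper's route is simply to observe that the characterizing conditions in \Cref{thm:forbid-sub} mention at most four leaves and at most two colors, and that for $m\in M'$ one has $m\in\eps'(x,y)$ if and only if $m\in\eps(x,y)$; hence all those conditions transfer verbatim from $\eps$ to any such $\eps'$, and $\eps'$ is a Fitch map by \Cref{thm:forbid-sub} again. This is precisely the ``locality'' observation you already exploit for the converse, so it yields the forward direction for free without ever touching a tree. Your tree-restriction argument is a valid alternative and has the merit of being self-contained (it does not rely on the forbidden-submap characterization), but note a small technicality: if you insist on including $\rootT{T}$ in the minimal subtree and all of $\X'$ happens to lie below a single child of $\rootT{T}$, then the root acquires degree one in the subtree and must be discarded together with the dangling path; it is cleaner to root $T'$ at $\lca\lc{T}(\X')$ from the outset and then suppress degree-two vertices.
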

In addition, we obtain the following
\begin{corollary}\label{cor:induced-Fitch}
  Every induced submap of a Fitch map is a Fitch map.
\end{corollary}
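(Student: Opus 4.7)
The plan is to derive this corollary directly from the forbidden submap characterization in \Cref{thm:forbid-sub} (equivalently, \Cref{lem:non-fitch}). Let $\eps$ be a Fitch map on $\X$ and let $\eps'$ be its induced submap on some $\X'\subseteq \X$, so that $\eps'(x,y)=\eps(x,y)$ for every $(x,y)\in\irr{\X'\times \X'}$.

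I would argue by contraposition: suppose $\eps'$ is not a Fitch map. By \Cref{lem:non-fitch}, there exist (not necessarily distinct) colors $m,m'\in M$ and a subset $S\subseteq \X'$ of cardinality $3$ or $4$ that exhibits one of the forbidden patterns (1)--(4). Each of these patterns is formulated purely in terms of membership statements of the form $m\in\eps'(u,v)$ or $m\notin\eps'(u,v)$ for ordered pairs $(u,v)\in\irr{S\times S}$. Since $S\subseteq \X'$ and $\eps'$ agrees with $\eps$ on $\irr{\X'\times\X'}$, the same subset $S$ together with the same colors $m,m'$ exhibits the corresponding forbidden pattern for $\eps$. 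Hence \Cref{lem:non-fitch} applied to $\eps$ yields that $\eps$ is not a Fitch map either, a contradiction.

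The verification is purely mechanical: each of the conditions (1)--(4) only involves membership and non-membership statements among ordered pairs inside a $3$- or $4$-element subset, and induced submaps preserve these by definition. Hence there is no substantial obstacle; the corollary is a formal consequence of the fact that the characterization uses only finitely many vertices and only equality of map values, both of which are inherited by induced submaps.

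As an alternative route, one could verify the two conditions of \Cref{thm:charact-HCIC} directly. The key observation is that for every $y\in \X'$ and $m\in M$, the complementary neighborhood of $y$ computed for $\eps'$ equals $\NnotCol{m}[y]\cap \X'$, where $\NnotCol{m}[y]$ is computed for $\eps$. Then \HC for $\eps'$ follows because intersecting a hierarchy-like family with a fixed set yields a hierarchy-like family, and \IC for $\eps'$ follows from the strengthened inclusion $\NnotCol{m}[y']\subseteq \NnotCol{m}[y]$ (noted after \Cref{thm:charact-HCIC}) whenever $y'\in \NnotCol{m}[y]$; this inclusion persists after intersecting with $\X'$, and the desired cardinality inequality then follows. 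Either route yields the claim.
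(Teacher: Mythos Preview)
Your proof is correct and follows essentially the same approach as the paper: contraposition via the forbidden-submap characterization (\Cref{thm:forbid-sub}/\Cref{lem:non-fitch}), observing that any forbidden pattern in the induced submap $\eps'$ is literally the same pattern in $\eps$. Your alternative route through \HC and \IC is also valid and is not in the paper; it is a pleasant direct verification, though the forbidden-submap argument is shorter.
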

\begin{proof}
  Let $\eXM$ be a Fitch map, and let $\eps':\irr{\X'\times \X'}\to \PS{M'}$
  be an induced submap of $\eps$.  Assume for contraposition that $\eps'$
  is not a Fitch map.  Then, \cref{thm:forbid-sub} implies that $\eps'$
  contains a forbidden submap.  Since $\eps'$ is an \emph{induced} submap
  of $\eps$, this forbidden submap is also part of $\eps$.  Thus,
  \cref{thm:forbid-sub} implies that $\eps$ is not a Fitch map.
\end{proof}

\section{Uniqueness of the Least-Resolved Tree}
\label{sec:uniq}

In the following we are interested in so-called least-resolved trees that
explain a given Fitch map $\eps$.  Roughly speaking, an edge-labeled tree $(T,\lambda)$
that explains $\eps$ is least-resolved if it does not contain
``unnecessary'' colors along its edges and one cannot ``contract'' edges
without destroying the property that the resulting tree still explains
$\eps$.  To make this notion more precise, we first need a couple of
definitions.
 
\begin{definition}
Let $(T,\lambda)$ and $(T',\lambda')$ be two edge-labeled trees on $\X$
with $M$. Then, $(T',\lambda')$ is a \emph{coarse-graining} of
$(T,\lambda)$, denoted by $(T',\lambda')\le(T,\lambda)$, if
\begin{itemize}[noitemsep]
\item $\Cls(T')\subseteq\Cls(T)$ and 
\item for each $v' \in V(T')\setminus\{\rootT{T'}\}$
and for each $v \in V(T)\setminus\{\rootT{T}\}$ with
$\Cl{T}(v)=\Cl{T'}(v')$ we have
$\lambda'(\parent\lc{T'}(v'),v')\subseteq\lambda(\parent\lc{T}(v),v)$. 
\end{itemize}
Moreover, a coarse-graining $(T',\lambda')$ of $(T,\lambda)$ is a
\emph{strict coarse-graining} of $(T,\lambda)$ whenever
\begin{itemize}[noitemsep]
\item $\Cls(T')\subsetneq\Cls(T)$  or 
\item for some $v' \in V(T')\setminus\{\rootT{T'}\}$ and some
  $v \in V(T)\setminus\{\rootT{T}\}$ with $\Cl{T}(v)=\Cl{T'}(v')$ we
  have
  $\lambda'(\parent\lc{T'}(v'),v')\subsetneq\lambda(\parent\lc{T}(v),v)$.
\end{itemize}
\end{definition}
In particular, we say $(T,\lambda)$ and $(T',\lambda')$ are \emph{isomorphic},
denoted by $(T,\lambda)\cong(T',\lambda')$, if they are coarse-grainings of
each other.

\begin{definition}
  Let $\eXM$ be a Fitch map that is explained by some edge-labeled tree
  $(\Ts,\ls)$.  Then, we say $(\Ts,\ls)$ is \emph{least-resolved w.r.t.\
    $\eps$} if there is no strict coarse-graining $(T',\lambda')$ of
  $(\Ts,\ls)$ that explains $\eps$.
\end{definition}
\Cref{f:least-resolved} provides an example of coarse-graining and
least-resolved edge-labeled trees.

\begin{figure}[t]
  \centering \includegraphics[width=0.9\textwidth]{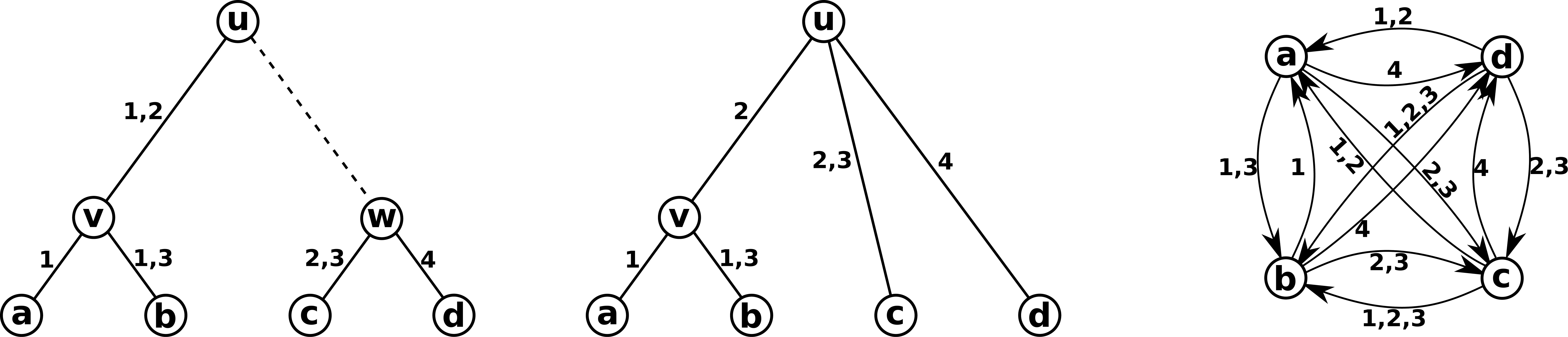}
  \caption{Both trees explain the Fitch map $\eps$ on the right. However,
    only the middle tree $(T',\lambda')$ is least-resolved w.r.t.\ $\eps$.
    In particular, \cref{thm:uniqueLRT} implies that $(T',\lambda')$ is
    isomorphic to the \nameT $\To{\eps}$.  Moreover, the tree
    $(T',\lambda')$ is a (strict) coarse-graining of the left tree
    $(T,\lambda)$ since $\Cls(T') = \Cls(T)\setminus \{\Cl{T}(w)\}$ as well
    as $\lambda'(u,v) = \lambda(u,v)\setminus\{1\}$.}
\label{f:least-resolved}
\end{figure}

\begin{proposition}
  \label{prop:coarse-graining}
  Let $\eXM$ be a Fitch map, let $(T,\lambda)$ be an arbitrary edge-labeled
  tree that explains $\eps$, and let $\To{\eps}$ be the \nameT.
  Then, $\To{\eps}$ is a coarse-graining of $(T,\lambda)$, i.e.\
  $\To{\eps}\le(T,\lambda)$.
\end{proposition}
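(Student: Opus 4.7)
The plan is to verify the two conditions in the definition of coarse-graining directly, leveraging the results already established about $\Ns[\eps]$ and \Cref{p:clust-neighb}.

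For the first condition, I would show $\Cls(\To{\eps}) \subseteq \Cls(T)$. By \Cref{def:tree-for-eps}(a), the cluster set of $\To{\eps}$ equals $\Ns[\eps] \cup \{\X\} \cup \{\{x\} : x \in \X\}$. Since $T$ is a phylogenetic tree on $\X$, its cluster set automatically contains $\X$ and all singletons $\{x\}$. For the remaining part, \Cref{cor:HC} gives $\Ns[\eps] \subseteq \Cls(T)$ directly, and combining these three inclusions yields the desired containment.

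For the second condition, I would take any $v' \in V(\To{\eps}) \setminus \{\rootT{\To{\eps}}\}$ together with a $v \in V(T) \setminus \{\rootT{T}\}$ satisfying $\Cl{T}(v) = \Cl{\To{\eps}}(v')$, and verify $\ls(\parent\lc{\To{\eps}}(v'), v') \subseteq \lambda(\parent\lc{T}(v), v)$. Suppose $m \in \ls(\parent\lc{\To{\eps}}(v'), v')$. By \Cref{def:tree-for-eps}(b), there must then exist some $y \in \X$ with $\NnotCol{m}[y] = \Cl{\To{\eps}}(v')$, which by hypothesis equals $\Cl{T}(v)$. Now I apply \Cref{p:clust-neighb} to the tree $(T, \lambda)$: the equivalence there tells us that $\NnotCol{m}[y] = \Cl{T}(v)$ implies Statement~(1), so in particular $(\parent\lc{T}(v), v)$ is an $m$-edge in $T$, i.e.\ $m \in \lambda(\parent\lc{T}(v), v)$. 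This establishes the required label-inclusion.

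I do not anticipate serious obstacles here; both conditions reduce to applications of earlier results. The only subtlety to keep in mind is that in the second condition the vertex $v$ is only assumed to exist (i.e.\ the coarse-graining definition quantifies over pairs of non-root vertices with equal cluster), so the argument must begin from a label $m$ on the $\To{\eps}$-side and exhibit the $m$-label on the $T$-side, rather than the other way around. Since \Cref{p:clust-neighb} yields the $m$-edge property for the unique vertex realizing a given complementary neighborhood, the argument goes through cleanly.
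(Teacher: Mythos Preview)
Your proposal is correct and follows essentially the same route as the paper for the cluster inclusion. For the label inclusion, your argument is actually more streamlined: you invoke the $(2)\Rightarrow(1)$ direction of \Cref{p:clust-neighb} directly (noting $v\ne\rootT{T}$ so that (1b) yields the $m$-edge), whereas the paper's proof re-derives this implication by hand, splitting into the cases $v\in\X$ and $v\notin\X$ and choosing explicit witness leaves $x,z$ to force $m\in\lambda(\parent\lc{T}(v),v)$. Both arguments are valid; yours simply avoids repeating work already encapsulated in \Cref{p:clust-neighb}.
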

\begin{proof}
  Let $\eXM$ be a Fitch map. Let $(T,\lambda)$ be an edge-labeled tree that
  explains $\eps$, and let $\To{\eps}=(\wh{T},\wh{\lambda})$ be the
  \nameT.
  
  First, \Cref{thm:charact-HCIC} implies that $\eps$ satisfies \HC and \IC.
  Moreover, \Cref{lem:sufficiency} implies that
  $\To{\eps}=(\wh{T},\wh{\lambda})$ explains $\eps$.  By construction of
  $\To{\eps}$, we have
  $\Cls(\wh{T}) = \Ns[\eps]\cup\{\X\}\cup\{ \{x\} : x\in \X \}$.  Since
  $(T,\lambda)$ explains $\eps$, we can apply \Cref{p:clust-neighb} to
  conclude that $\Ns[\eps] \subseteq \Cls(T)$. Since $\{x\}\in \Cls(T)$ for
  all $x\in \X$ and $\X \in \Cls(T)$, we immediately obtain
  $\Cls(\wh{T}) = \Ns[\eps]\cup\{\X\}\cup\{ \{x\} : x\in \X \} \subseteq
  \Cls(T)$.
	
  It remains to show that for each
  $\hat{v} \in V(\wh{T})\setminus\{\rootT{\wh{T}}\}$ and for each
  $v \in V(T)\setminus\{\rootT{T}\}$ with $\Cl{\wh{T}}(\hat{v})=\Cl{T}(v)$
  we have
  $\wh{\lambda}(\parent\lc{\wh{T}}(\hat{v}),\hat{v})\subseteq
  \lambda(\parent\lc{T}(v),v)$.  Thus, let
  $\hat{v} \in V(\wh{T})\setminus\{\rootT{\wh{T}}\}$ be an arbitrary
  vertex.  Since $\Cls(\wh{T}) \subseteq \Cls(T)$ there is a vertex
  $v \in V(T)\setminus\{\rootT{T}\}$ such that
  $\Cl{\wh{T}}(\hat{v})=\Cl{T}(v)$.  Let
  $m \in \wh{\lambda}(\parent\lc{\wh{T}}(\hat{v}),\hat{v})$ be an arbitrary
  color.  By construction of $\To{\eps}$, cf.\ \cref{def:tree-for-eps}~(b),
  there is a leaf $y \in \X$ with
  $\Cl{\wh{T}}(\hat{v})=\NnotCol{m}[y]$. Hence, we have
  $y\in \NnotCol{m}[y]=\Cl{\wh{T}}(\hat{v})=\Cl{T}(v)$.

  Next, assume that $v \in \X$ is a leaf, i.e.\ $\{v\}=\Cl{T}(v)$, and
  therefore $y=v$. Since $T$ is a phylogenetic tree, there is a leaf
  $z \in \X$ such that $\lca\lc{T}(v,z)=\parent\lc{T}(v)$. Moreover,
  $z \notin \{v\}=\NnotCol{m}[y]$ implies $m \in \eps(z,y)$. Since
  $(T,\lambda)$ explains $\eps$, we conclude that there is an $m$-edge on
  the path from $\lca\lc{T}(y,z) = \lca\lc{T}(v,z)=\parent\lc{T}(v)$ to
  $v$. Hence, $m \in \lambda(\parent\lc{T}(v),v)$.
	
  Now, assume that $v \notin \X$ is not a leaf. Recap,
  $y\in \NnotCol{m}[y]=\Cl{\wh{T}}(\hat{v})=\Cl{T}(v)$ and
  $v\neq \rootT{T}$.  This, together with the fact that $T$ is a
  phylogenetic tree, implies that there are two leaves $x,z \in \X$ such
  that $\lca\lc{T}(x,y)=v$ and $\lca\lc{T}(y,z)=\parent\lc{T}(v)$. Since
  $\lca\lc{T}(x,y)=v$ we have $x \in \Cl{T}(v)=\NnotCol{m}[y]$. Moreover,
  $\lca\lc{T}(y,z)=\parent\lc{T}(v)$ implies
  $z \notin \Cl{T}(v)=\NnotCol{m}[y]$. Therefore, $x$, $y$ and $z$ are
  pairwise distinct. The latter arguments imply that $m \in \eps(z,y)$ and
  $m\notin\eps(x,y)$. This, together with the fact that $(T,\lambda)$
  explains $\eps$, implies that there is an $m$-edge on the path from
  $\lca\lc{T}(y,z)=\parent\lc{T}(v)$ to $y$ and there is no $m$-edge on the
  path from $\lca\lc{T}(x,y)=v$ to $y$. Therefore, we have
  $m \in \lambda(\parent\lc{T}(v),v)$.
	
  Thus, $m \in \lambda(\parent\lc{T}(v),v)$ independent of whether
  $v \in \X$ or $v \notin \X$. Therefore, we have
  $\wh{\lambda}(\parent\lc{\wh{T}}(\hat{v}),\hat{v})\subseteq
  \lambda(\parent\lc{T}(v),v)$. This, together with
  $\Cls(\wh{T}) \subseteq \Cls(T)$, implies that $\To{\eps}$ is a
  coarse-graining of $(T,\lambda)$.
\end{proof}

\begin{theorem}
  Let $\eXM$ be a Fitch map and let $\To{\eps}=(\wh{T},\wh{\lambda})$ be
  the \nameT.   Then, $\To{\eps}$ is the unique (up to isomorphism)
  least-resolved tree that explains $\eps$.  In particular, $\wh{T}$ has
  the minimum number of vertices, and the sum
  $\sum_{e\in E(\wh{T})}|\wh{\lambda}(e)|$ is minimum among all
  edge-labeled trees that explain $\eps$.
  \label{thm:uniqueLRT}
\end{theorem}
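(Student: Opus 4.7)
The plan is to derive all three parts of the theorem from \Cref{prop:coarse-graining}, which already guarantees that the \nameT $\To{\eps}$ is a coarse-graining of every edge-labeled tree that explains $\eps$. Three things need verification in sequence: (a) that $\To{\eps}$ itself is least-resolved w.r.t.\ $\eps$; (b) that every least-resolved tree explaining $\eps$ is isomorphic to $\To{\eps}$; and (c) that $\wh{T}$ attains the minimum number of vertices and the minimum total label size among all edge-labeled trees explaining $\eps$. Recall also that \Cref{lem:sufficiency} already ensures $\To{\eps}$ does explain $\eps$.

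For (a), I would argue by contradiction. Assume that some strict coarse-graining $(T',\lambda')<\To{\eps}$ still explains $\eps$. Then \Cref{prop:coarse-graining} applied to $(T',\lambda')$ forces $\To{\eps}\le(T',\lambda')$, i.e.\ $\Cls(\wh{T})\subseteq\Cls(T')$ together with $\wh{\lambda}(\cdot)\subseteq\lambda'(\cdot)$ on every pair of matching clusters. Combined with the opposite inclusions coming from $(T',\lambda')\le\To{\eps}$, both inclusions become equalities, contradicting the assumed strictness.

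For (b), let $(T^\dagger,\lambda^\dagger)$ be an arbitrary least-resolved tree explaining $\eps$. \Cref{prop:coarse-graining} yields $\To{\eps}\le(T^\dagger,\lambda^\dagger)$. If this coarse-graining were strict, then $\To{\eps}$ itself would be a strict coarse-graining of $(T^\dagger,\lambda^\dagger)$ that still explains $\eps$, contradicting the least-resolvedness of $(T^\dagger,\lambda^\dagger)$. Hence the coarse-graining is not strict, which by definition means $\Cls(\wh{T})=\Cls(T^\dagger)$ together with $\wh{\lambda}(\cdot)=\lambda^\dagger(\cdot)$ on every matching cluster. These equalities immediately give $(T^\dagger,\lambda^\dagger)\le\To{\eps}$ as well, so the two trees are coarse-grainings of each other and thus isomorphic.

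Finally, for (c), I would exploit $\To{\eps}\le(T,\lambda)$ directly for an arbitrary tree $(T,\lambda)$ explaining $\eps$. Since the map $v\mapsto\Cl{T}(v)$ is a bijection for any phylogenetic tree, $\Cls(\wh{T})\subseteq\Cls(T)$ immediately yields $|V(\wh{T})|\le|V(T)|$. For the label totals, each non-root $\hat{v}\in V(\wh{T})$ corresponds via $\Cl{\wh{T}}(\hat{v})=\Cl{T}(v)$ to a unique non-root $v\in V(T)$, and the coarse-graining inequality gives $|\wh{\lambda}(\parent(\hat{v}),\hat{v})|\le|\lambda(\parent(v),v)|$. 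Summing over $\hat{v}$ and observing that all remaining edges of $T$ contribute non-negatively yields $\sum_{e\in E(\wh{T})}|\wh{\lambda}(e)|\le\sum_{e\in E(T)}|\lambda(e)|$. The only delicate step in this programme is carefully unwinding the definition of strict coarse-graining in (a) and (b); \Cref{prop:coarse-graining} handles everything else.
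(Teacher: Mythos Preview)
Your proposal is correct and follows essentially the same route as the paper: both derive everything from \Cref{prop:coarse-graining} together with \Cref{lem:sufficiency}. The paper's proof is a bit terser---it starts directly with an arbitrary least-resolved tree $(\Ts,\ls)$, observes that $\To{\eps}\le(\Ts,\ls)$, and concludes in one breath that the coarse-graining cannot be strict (else $(\Ts,\ls)$ would not be least-resolved), whence $\To{\eps}\cong(\Ts,\ls)$---whereas you first establish (a) independently and then handle (b), but the underlying argument is identical.
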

\begin{proof}
  Let $\eXM$ be a Fitch map, and let $(\Ts,\ls)$ be a least-resolved
  edge-labeled tree w.r.t.\ $\eps$. Moreover, let
  $\To{\eps}=(\wh{T},\wh{\lambda})$ be the \nameT.
  % the edge-labeled tree as in \cref{def:tree-for-eps}.
	
  By \cref{prop:coarse-graining}, $\To{\eps} = (\wh{T},\wh{\lambda})$ is a
  coarse-graining of $(\Ts,\ls)$.  Hence, $\To{\eps}$ must be
  least-resolved. This, together with the fact that $(\Ts,\ls)$ is a
  least-resolved tree w.r.t.\ $\eps$ and the fact that $\To{\eps}$ explains
  $\eps$, implies that $(\Ts,\ls)$ is isomorphic to $\To{\eps}$.
	
  Moreover, let $(T,\lambda)$ be an edge-labeled tree that explains
  $\eps$. Then, by \cref{prop:coarse-graining},
  $\To{\eps}\leq (T,\lambda)$.  By definition of ``coarse-graining'', we
  have $\Cls(\wh{T})\subseteq\Cls(T)$, and hence
  $|\Cls(\wh{T})|\leq|\Cls(T)|$.  Since $\X \in \Cls(\wh{T})\cap \Cls(T)$,
  and since $\{x\}\in \Cls(\wh{T})\cap \Cls(T)$ for all $x\in \X$, we can
  conclude that $|V(\wh{T})| \leq |V(T)|$.  Since the latter is satisfied
  for every edge-labeled tree $(T,\lambda)$ that explains $\eps$, the tree
  $\To{\eps}$ must have a minimum number of vertices.

  Furthermore, by definition of ``coarse-graining'', we also have
  $\wh{\lambda}(\parent\lc{\wh{T}}(\hat{v}),\hat{v})\subseteq
  \lambda(\parent\lc{T}(v),v)$
  for all $\hat{v} \in V(\wh{T})\setminus\{\rootT{\wh{T}}\}$ and for all
  $v \in V(T)\setminus\{\rootT{T}\}$ with $\Cl{\wh{T}}(\hat{v})=\Cl{T}(v)$.
  Hence, the sum $\sum_{e\in E(\wh{T})}|\wh{\lambda}(e)|$ is minimum among
  all edge-labeled trees that explains $\eps$.
\end{proof}

Note that \Cref{thm:uniqueLRT}, together with \cref{prop:coarse-graining},
is equivalent to \cite[Thm.\ 1]{Geiss:18a} in case $\eps$ is a
monochromatic Fitch map.  Moreover, \Cref{thm:uniqueLRT} implies that one
can verify whether a tree $(T,\lambda)$ is least-resolved w.r.t.\ a Fitch
map $\eps$ by checking if $(T,\lambda)$ is isomorphic to $\To{\eps}$.  An
alternative way to test if a tree is least-resolved is provided by the next
result.
\begin{proposition}
  Let $\eXM$ be a Fitch map that is explained by $(T, \lambda)$, and let
  $\To{\eps}$ be the \nameT.  Then, $(T,\lambda)$ is isomorphic to
  $\To{\eps}$ if and only if for every inner edge
  $e=(\parent\lc{T}(v),v)\in \mathring{E}(T)$ the following two statements
  are satisfied:
  \begin{enumerate}[noitemsep]
  \item $\lambda(e)\ne \emptyset$, and
  \item for every $m\in \lambda(\parent(v),v)$ there is a leaf $y\in \X$
    such that there is no $m$-edge along the path from $v$ to $y$ in
    $(T,\lambda)$.
  \end{enumerate}
  \label{prop:alternative}
\end{proposition}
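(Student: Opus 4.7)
The plan is to exploit \cref{p:clust-neighb} and \cref{prop:coarse-graining}---the latter already granting $\To{\eps}\leq (T,\lambda)$ for free---and to handle the two directions separately. For the forward direction, I would assume $(T,\lambda)\cong \To{\eps}=(\wh{T},\wh{\lambda})$, fix an inner edge $e=(\parent(v),v)$, and note that $v$ is then an inner vertex distinct from the root, so $\Cl{T}(v)$ is neither $\X$ nor a singleton. \cref{def:tree-for-eps}(a) would therefore place $\Cl{T}(v)$ in $\Ns[\eps]$, say $\Cl{T}(v)=\NnotCol{m}[y]$, and \cref{def:tree-for-eps}(b) would put $m$ into $\lambda(e)$, yielding (1). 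For (2), I would take any $m\in\lambda(e)$, extract via \cref{def:tree-for-eps}(b) a leaf $y$ with $\Cl{T}(v)=\NnotCol{m}[y]$, and apply the implication (2)$\Rightarrow$(1a) of \cref{p:clust-neighb} to conclude that no $m$-edge lies on the path from $v$ to $y$.

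For the converse I would assume (1) and (2) and use \cref{prop:coarse-graining} to reduce the task to ruling out strict coarse-graining in either coordinate. The workhorse I plan to establish first is the following local lemma: \emph{for an inner vertex $v\neq\rootT{T}$ and any $m\in\lambda(\parent(v),v)$, conditions (1) and (2) force $\Cl{T}(v)=\NnotCol{m}[y]$, where $y$ is the leaf supplied by (2)}. The argument would run in three steps: (i) since $(\parent(v),v)$ is an $m$-edge but the $v$-to-$y$ path carries none, $v$ must be an ancestor of $y$, for otherwise that path would start with $(\parent(v),v)$; (ii) hence $v$ satisfies (1a) and (1b) of \cref{p:clust-neighb} for $(m,y)$, and this vertex is unique---if $v\prec v'$ both qualified, the $m$-edge $(\parent(v'),v')$ would lie on the path from $v$ to $y$, contradicting (1a) for $v$; (iii) the equivalence in \cref{p:clust-neighb} then delivers $\Cl{T}(v)=\NnotCol{m}[y]$.

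With this lemma in hand, both failure modes would collapse. Should $\Cls(\wh{T})\subsetneq\Cls(T)$ hold, I would pick an inner vertex $v\neq\rootT{T}$ with $\Cl{T}(v)\notin\Cls(\wh{T})$; condition (1) would supply some $m\in\lambda(\parent(v),v)$, and the lemma would yield the contradiction $\Cl{T}(v)\in\Ns[\eps]\subseteq\Cls(\wh{T})$. If instead $\Cls(T)=\Cls(\wh{T})$ but $\wh{\lambda}(\parent(\hat{v}),\hat{v})\subsetneq\lambda(\parent(v),v)$ for some matching pair on an inner edge, then any extra color $m$ would again invoke the lemma, forcing $m\in\wh{\lambda}(\parent(\hat{v}),\hat{v})$ via \cref{def:tree-for-eps}(b) and contradicting the strict inclusion. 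Outer edges would require no separate treatment: for any $m\in\lambda(\parent(x),x)$ and any leaf $z\neq x$, the edge $(\parent(x),x)$ lies on the path from $\lca(z,x)$ to $x$, so $m\in\eps(z,x)$ and thus $\NnotCol{m}[x]=\{x\}$, which pins down $m\in\wh{\lambda}(\parent(x),x)$ via \cref{def:tree-for-eps}(b). I expect the main obstacle to be the uniqueness argument underpinning the key lemma; everything else amounts to careful bookkeeping on top of \cref{p:clust-neighb} and \cref{prop:coarse-graining}.
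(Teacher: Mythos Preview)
Your proposal is correct and follows essentially the same route as the paper: both directions hinge on \cref{p:clust-neighb} and \cref{prop:coarse-graining}, with the converse boiling down to showing that every inner cluster $\Cl{T}(v)$ lands in $\Ns[\eps]$ and that every color in $\lambda(\parent(v),v)$ is forced into $\wh{\lambda}$ via \cref{def:tree-for-eps}(b). One remark: the uniqueness argument you flag as the ``main obstacle'' is in fact unnecessary, since the proof of \cref{p:clust-neighb} establishes the implication (1)$\Rightarrow$(2) for \emph{any} vertex $v$ satisfying (1a) and (1b), not merely for the particular vertex whose existence is asserted---so once you have verified (1a) and (1b) at your chosen $v$, the conclusion $\Cl{T}(v)=\NnotCol{m}[y]$ follows immediately (and your step (i) is likewise already contained in that proof).
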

\begin{proof}
  Let $\eXM$ be a Fitch map that is explained by $(T, \lambda)$, and let
  $\To{\eps}=(\wh{T},\wh{\lambda})$ be the \nameT.
	
  First, assume that $(T,\lambda)$ is isomorphic to $\To{\eps}$, and thus
  $\Cls(T)=\Cls(\wh{T})$.  Then, we may assume w.l.o.g.\ that
  $V(T)=V(\wh{T})$ and $E(T)=E(\wh{T})$.  Hence, by
  \cref{def:tree-for-eps}, for all $v\in V(T)\setminus \{\rootT{T}\}$ it
  holds that $\Cl{T}(v)=\Cl{\wh{T}}(v) = \NnotCol{m}[y]$ for some $y\in \X$
  and some $m\in M$.  Now, we can utilize the equivalence between (1)
    and (2) in \Cref{p:clust-neighb} to conclude that Statement~(1) and
  (2) are satisfied for $(T,\lambda)$.

  Conversely, assume that $(T,\lambda)$ satisfies Statement~(1) and (2).
  Let $v \in \mathring{V}(T)\setminus\{\rootT{T}\}$ be an arbitrary inner
  vertex. Statement~(1) implies that there is an
  $m \in \lambda(\parent\lc{T}(v),v)$, and Statement~(2) implies that there
  is no $m$-edge along the path from $v$ to $y$ in $(T,\lambda)$. Now, we
  can apply \Cref{p:clust-neighb} to conclude that
  $\Cl{T}(v)=\NnotCol{m}[y]$ for some $y\in \X$ and some $m\in M$.  Hence,
  by \cref{def:tree-for-eps}~(a), we have
  $\Cl{T}(v)=\NnotCol{m}[y] \in \Ns[\eps]\subseteq\Ns[\eps]\cup\{ \{x\} :
  x\in \X \}\cup\{ \X \}=\Cls(\wh{T})$. This implies
  $\Cls(T)\subseteq\Cls(\wh{T})$. Moreover, \cref{prop:coarse-graining}
  implies that $\Cls(\wh{T})\subseteq\Cls(T)$. Hence, we obtain
  $\Cls(\wh{T})=\Cls(T)$.
				
  Now, we need to show that
  $\lambda(\parent(v),v)=\wh{\lambda}(\parent\lc{\wh{T}}(\hat{v}),\hat{v})$
  is satisfied for every $v \in V(T)\setminus\{\rootT{T} \}$ and for every
  $\hat{v}\in V(\wh{T})$ with $\Cl{T}(v)=\Cl{\wh{T}}(\hat{v})$.
  \cref{prop:coarse-graining} implies that $(\wh{T},\wh{\lambda})$ is a
  coarse-graining of $(T,\lambda)$ and therefore,
  $\wh{\lambda}(\parent\lc{\wh{T}}(\hat{v}),\hat{v})\subseteq
  \lambda(\parent\lc{T}(v),v)$.  To verify that
  $\lambda(\parent\lc{T}(v),v)\subseteq
  \wh{\lambda}(\parent\lc{\wh{T}}(\hat{v}),\hat{v})$, let
  $m \in \lambda(\parent\lc{T}(v),v)$.  If $v \notin \X$ is not a leaf,
  then Statement~(2) implies that there is a leaf $y \in \X$ such that
  there is no $m$-edge on the path from $v$ to $y$ in $(T,\lambda)$. If
  $v \in \X$ is a leaf, then there is trivially no $m$-edge on the path
  from $v$ to $v$ in $(T,\lambda)$.  In both cases, \Cref{p:clust-neighb}
  implies that $\Cl{T}(v)=\NnotCol{m}[y]$.  Since $\Cls(\wh{T})=\Cls(T)$,
  as shown above, there is a vertex
  $\hat{v}\in V(\wh{T})\setminus \{\rootT{\wh{T}}\}$ such that
  $\Cl{\wh{T}}(\hat{v})=\Cl{T}(v)=\NnotCol{m}[y]$.  By
  \cref{def:tree-for-eps},
  $m\in \lambda(\parent\lc{\wh{T}}(\hat{v}),\hat{v})$ and thus
  $\wh{\lambda}(\parent\lc{\wh{T}}(\hat{v}),\hat{v}) =
  \lambda(\parent\lc{T}(v),v)$.

  In summary, $(T,\lambda)$ is isomorphic to $\To{\eps}$.
\end{proof}

\section{Restricted Fitch maps}
\label{sec:simple}

In the following, we consider two typical restrictions of Fitch maps.
  One restricts the number of colors placed on the edges and the other is
  based on a ``recoloring'' based on subsets of the color set $M$.

\subsection{$\boldsymbol{k}$-Restricted Fitch map}

\citet{Geiss:18a} considered monochromatic Fitch maps and
\citet{Hellmuth:2019d} considered disjoint Fitch maps.  These special
classes of Fitch maps can always be explained by edge-labeled trees
$(T,\lambda)$ with $|\lambda(e)|\leq 1$ for every $e \in E(T)$.  In view of
these results, we consider here a common generalization of these ideas
and ask which type of Fitch maps can be explained by edge-labeled trees
$(T,\lambda)$ with $|\lambda(e)|\leq k$ for every $e \in E(T)$ and some
fixed integer $k$.

\begin{definition}
  Let $\eXM$ be a Fitch map, and let $k\in \mathbb{N}$ be an integer.
  Then, we call $\eps$ a \emph{$k$-restricted} Fitch map if there is an
  edge-labeled tree $(T,\lambda)$ that explains $\eps$ and that satisfies
  $|\lambda(e)|\le k$ for every $e \in E(T)$.
  \label{def:k-restricted}
\end{definition}
Note that every monochromatic Fitch map and every disjoint Fitch map is a
$1$-restricted Fitch map.  In order to characterize $k$-restricted Fitch
maps, we will use the \nameT $\To{\eps}$, \Cref{prop:coarse-graining}
and the following
\begin{definition} \label{def:k-ELC} A map $\eXM$ satisfies the
  $k$-edge-label-condition ($k$-\EC) if for every neighborhood
  $N\in \Ns[\eps]$ with $N\ne \X$ we have
  $\big|\{ m \in M \colon \textnormal{there is a } y \in \X
  \textnormal{ with }
  N = \NnotCol{m}[y]\}\big| \le k$.
\end{definition}
In other words, $\eps$ satisfies $k$-\EC if for every neighborhood
$N\in \Ns[\eps]$ with $N\ne \X$ there are at most $k$ colors in $M$ for
which $N = \NnotCol{m}[y]$ is satisfied.

\begin{proposition}\label{prop:simple-fitch}
  Let $\eXM$ be a Fitch map, and let $\To{\eps}=(\wh{T},\wh{\lambda})$ be
  the \nameT.  Then, the following three statements are equivalent:
  \begin{enumerate}[noitemsep]
  \item $\eps$ is a $k$-restricted Fitch map.
  \item For every edge $e \in E(\wh{T})$ we have $|\wh{\lambda}(e)|\le k$.
  \item $\eps$ satisfies $k$-\EC.
  \end{enumerate}
\end{proposition}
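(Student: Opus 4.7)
My plan is to prove (2)$\Rightarrow$(1), (1)$\Rightarrow$(2), and (2)$\Leftrightarrow$(3); none of the pieces requires new machinery, only unfolding \Cref{def:tree-for-eps}(b) and invoking \Cref{lem:sufficiency} and \Cref{prop:coarse-graining}.

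For (2)$\Rightarrow$(1), I would note that \Cref{lem:sufficiency} tells us $\To{\eps}$ explains $\eps$; the hypothesis $|\wh\lambda(e)|\le k$ for every $e\in E(\wh T)$ then gives (1) at once by \Cref{def:k-restricted}. For (1)$\Rightarrow$(2), I would let $(T,\lambda)$ explain $\eps$ with $|\lambda(e)|\le k$ everywhere and invoke \Cref{prop:coarse-graining} to get $\To{\eps}\le(T,\lambda)$. Unpacking the definition of coarse-graining, each non-root $\hat v\in V(\wh T)$ has a counterpart $v\in V(T)\setminus\{\rootT{T}\}$ with $\Cl{\wh T}(\hat v)=\Cl{T}(v)$ and $\wh\lambda(\parent\lc{\wh T}(\hat v),\hat v)\subseteq \lambda(\parent\lc{T}(v),v)$, so $|\wh\lambda|$ inherits the bound $k$.

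The equivalence (2)$\Leftrightarrow$(3) is essentially a translation via \Cref{def:tree-for-eps}(b): the label of $(\parent(v),v)$ in $\wh T$ is, by definition, the set $L_v \coloneqq \{m\in M : \exists y\in\X \text{ with }\Cl{\wh T}(v)=\NnotCol{m}[y]\}$. For a non-root $v\in V(\wh T)$, either $\Cl{\wh T}(v)\in\Ns[\eps]$, in which case $|L_v|$ is precisely the quantity constrained by $k$-\EC at $N=\Cl{\wh T}(v)$; or $\Cl{\wh T}(v)=\{x\}$ is a singleton outside $\Ns[\eps]$, in which case $L_v=\emptyset$, since any witnessing $m,y$ would force $\{x\}=\NnotCol{m}[y]\in\Ns[\eps]$, contradicting the case assumption. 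In the other direction, every $N\in\Ns[\eps]$ with $N\ne\X$ is the cluster of a unique non-root vertex $v\in V(\wh T)$ whose parent-edge label is exactly the set counted by $k$-\EC. The closest thing to an obstacle is the careful treatment of leaf edges whose cluster does not lie in $\Ns[\eps]$; this is precisely the reason $k$-\EC is permitted to quantify only over $N\in\Ns[\eps]$ rather than over all edge-clusters of $\wh T$.
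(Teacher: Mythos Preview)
Your proposal is correct and follows essentially the same route as the paper: (2)$\Rightarrow$(1) via \Cref{lem:sufficiency}, (1)$\Rightarrow$(2) via \Cref{prop:coarse-graining}, and (2)$\Leftrightarrow$(3) by unfolding \Cref{def:tree-for-eps}(b). If anything, your treatment of (3)$\Rightarrow$(2) is more explicit than the paper's, which simply says the implication follows ``immediately'' from \Cref{def:tree-for-eps}(b); you spell out why edges whose cluster is a singleton outside $\Ns[\eps]$ carry the empty label, which is the only case not directly covered by the $k$-\EC hypothesis.
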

\begin{proof}
  Let $\eXM$ be a Fitch map, and let $\To{\eps}=(\wh{T},\wh{\lambda})$ be
  the \nameT.
	
  First, suppose that Statement~(1) is satisfied. Then, there is an
  edge-labeled tree $(T,\lambda)$ with $|\lambda(e)|\le k$ for every edge
  $e \in E(T)$.  By \cref{prop:coarse-graining}, the tree
  $\To{\eps}=(\wh{T},\wh{\lambda})$ is a coarse-graining of $(T,\lambda)$,
  i.e.\ $\Cls(\wh{T})\subseteq \Cls(T)$ and for each
  $\hat{v} \in V(\wh{T})\setminus\{\rootT{\wh{T}}\}$ and for each
  $v \in V(T)\setminus\{\rootT{T}\}$ with $\Cl{T}(v)=\Cl{\wh{T}}(\hat{v})$
  we have
  $\wh{\lambda}(\parent\lc{\wh{T}}(\hat{v}),\hat{v})\subseteq
  \lambda(\parent\lc{T}(v),v)$.
  This, together with $|\lambda(e)|\le k$ for every $e \in E(T)$,
  immediately implies Statement~(2).
	
  Next, assume that Statement~(2) is satisfied. By \Cref{lem:sufficiency},
  $\To{\eps}$ explains $\eps$.  Thus, by \Cref{def:k-restricted},
  Statement~(1) is trivially satisfied. Hence, Statement~(1) and (2) are
  equivalent.
 
  We are still assuming that Statement~(2) is satisfied, and let
  $N\in \Ns[\eps]$ with $N\ne \X$.  Then, \cref{def:tree-for-eps}~(a)
  implies that $N \in \Ns[\eps] \subseteq \Cls(\wh{T})$ is a cluster.
  This, together with $N\ne\X$, implies that there is a vertex
  $\hat{v}\in V(\wh{T})$ with $\hat{v}\ne\rootT{\wh{T}}$ such that
  $\Cl{\wh{T}}(\hat{v})=N$.  Since $\hat{v}\ne \rootT{\wh{T}}$, there is
  the edge $(\parent\lc{\wh{T}}(\hat{v}),\hat{v}) \in E(\wh{T})$.  Then,
  \Cref{def:tree-for-eps}~(b), together with $\Cl{\wh{T}}(\hat{v})=N$ and
  Statement~(2), implies that
  \begin{equation*}
    \big|\{ m \in M \colon \textnormal{ there is a } y\in \X
    \textnormal{ with }N= \Cl{\wh{T}}(\hat{v}) =\NnotCol{m}[y]\} \big|=
    \big|\wh{\lambda}(\parent\lc{\wh{T}}(\hat{v}),\hat{v})\big| \le k.
  \end{equation*}
  Hence, Statement~(3) is satisfied.
	
  Now, assume that Statement~(3) is satisfied.  Then, by
  \Cref{def:tree-for-eps}~(b), we immediately conclude that Statement~(2)
  holds.  Hence, Statement~(2) and (3) are equivalent.
\end{proof}

\Cref{prop:simple-fitch}, together with \Cref{thm:charact-HCIC}, implies
\begin{theorem}\label{thm:HCIC-ELC}
  A map $\eXM$ is a $k$-restricted Fitch map if and only if $\eps$
  satisfies \HC, \IC and $k$-\EC.
\end{theorem}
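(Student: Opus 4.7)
The plan is to derive this characterization as an immediate consequence of the two results that have just been established, namely \Cref{thm:charact-HCIC} (which equates being a Fitch map with the conjunction of \HC and \IC) and \Cref{prop:simple-fitch} (which equates $k$-restrictedness of a Fitch map with the $k$-\EC on its \nameT). The only care needed is to observe that \Cref{prop:simple-fitch} is stated for Fitch maps, so before applying it in the ``if'' direction we must first certify that $\eps$ is a Fitch map at all.

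For the ``only if'' direction, I would start from the assumption that $\eps$ is a $k$-restricted Fitch map. By \Cref{def:k-restricted} $\eps$ is in particular a Fitch map, so \Cref{thm:charact-HCIC} immediately yields that $\eps$ satisfies \HC and \IC. To obtain $k$-\EC, I would then feed this same hypothesis into the implication $(1)\Rightarrow(3)$ of \Cref{prop:simple-fitch}.

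For the ``if'' direction, I would assume that $\eps$ satisfies all three of \HC, \IC and $k$-\EC. Since \HC and \IC hold, \Cref{thm:charact-HCIC} guarantees that $\eps$ is a Fitch map. In particular the \nameT $\To{\eps}$ is well-defined and, by \Cref{lem:sufficiency}, explains $\eps$, so the hypotheses of \Cref{prop:simple-fitch} are in place. Applying its implication $(3)\Rightarrow(1)$ with the $k$-\EC hypothesis then gives that $\eps$ is a $k$-restricted Fitch map, concluding the proof.

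There is essentially no technical obstacle here beyond correctly sequencing the two cited results; the only pitfall is the subtle one mentioned above: one cannot invoke \Cref{prop:simple-fitch} in the converse direction without first upgrading \HC and \IC to ``$\eps$ is a Fitch map'' via \Cref{thm:charact-HCIC}. Once this ordering is respected, the theorem follows by a two-line chain of implications in each direction.
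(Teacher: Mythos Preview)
Your proposal is correct and matches the paper's own argument, which simply states that \Cref{prop:simple-fitch} together with \Cref{thm:charact-HCIC} implies the theorem. Your explicit handling of the ordering issue---first using \HC and \IC to certify that $\eps$ is a Fitch map before invoking \Cref{prop:simple-fitch}---is exactly the right care to take and is implicit in the paper's one-line justification.
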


\begin{figure}[t]
  \centering \includegraphics[width=0.85\textwidth]{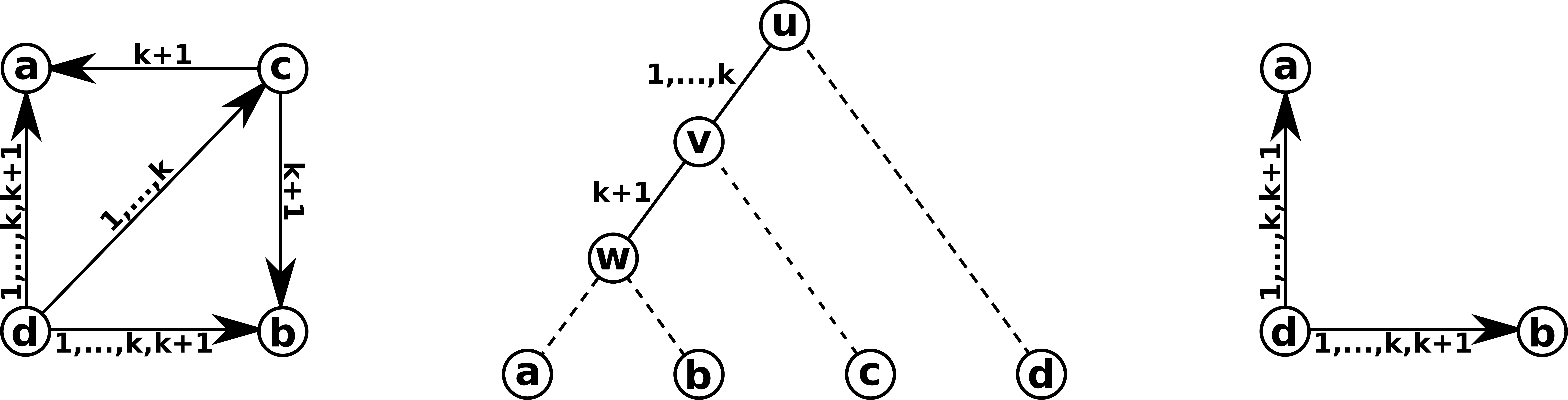}
  \caption{An example that shows that $k$-restricted Fitch maps can contain
    induced submaps that are not $k$-restricted Fitch maps, see
    \Cref{exmpl:k-res} for further explanations.}
  \label{fig:k-restricted}
\end{figure}

Thus, we were able to adjust the characterization as in
\Cref{thm:charact-HCIC} for the special class of $k$-restricted Fitch maps.
However, as we shall see later, it is not possible to derive a
characterization in terms of forbidden submaps similar to
\cref{thm:forbid-sub}.  To this end, consider first the following

\begin{exmpl}[$k$-restricted Fitch maps may contain induced submaps
  that are not $k$-restricted Fitch maps]
  \label{exmpl:k-res}
  Consider the map $\eXM$ with $\X=\{a,b,c,d\}$ and $M=\{1,\ldots,k,k+1\}$,
  $k \geq 1$ as shown in \cref{fig:k-restricted}~(left).  For the
  edge-labeled tree $(T,\lambda)$ as provided in
  \cref{fig:k-restricted}~(middle), all dashed edges $e\in E(T)$ have label
  $\lambda(e)=\emptyset$.  Hence, $|\lambda(e)|\leq k$ for all $e\in
  E(T)$. In fact, $(T,\lambda)$ explains $\eps$; and therefore, $\eps$ is a
  $k$-restricted Fitch map.
	
  Now, consider the induced submap
  $\eps'\colon \irr{X'\times X'}\to \PS{M}$ with $X'=\{a,b,d\}$ of $\eps$,
  as shown in \cref{fig:k-restricted}~(right).  By
  \cref{cor:induced-Fitch}, $\eps'$ is also a Fitch map.  For
  $N\coloneqq \{a,b\}=\NnotCol{1}[a]\in \Ns[\eps']$ we have
  $|\{ m \in M \colon N = \NnotCol{m}[a] \}|=|M|=k+1>k$.  This, together
  with $\{a,b\}\ne \X'$, implies that $\eps'$ does not satisfy the $k$-\EC,
  cf.\ \cref{def:k-ELC}.  This, the fact that $\eps'$ is a Fitch map, and
  \cref{prop:simple-fitch} imply that $\eps'$ is not a $k$-restricted Fitch
  map.
	
  In summary, although $\eps$ is a $k$-restricted Fitch map, it contains an
  induced submap, which is not a $k$-restricted Fitch map.
\end{exmpl}

Since we only consider phylogenetic trees, we can utilize
  \Cref{exmpl:k-res} in order to show that there is no characterization of
  $k$-restricted Fitch maps in terms of a set of forbidden submaps. This
  statement can expressed more formally as follows:
\begin{theorem}\label{thm:no-submap-charct}
  There is no set of forbidden submaps such that $\eps$ is a $k$-restricted
  Fitch map if and only if $\eps$ does not contain a forbidden submap.
\end{theorem}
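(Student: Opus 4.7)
The plan is to derive a contradiction from the hypothetical existence of such a set $\mathcal{F}$ of forbidden submaps by reusing the construction of \Cref{exmpl:k-res}. That example supplies a map $\eps$ on $\X = \{a,b,c,d\}$ with $|M| = k+1$ which is a $k$-restricted Fitch map, together with an induced submap $\eps'$ on $\X' = \{a,b,d\}$ that, by \Cref{prop:simple-fitch} applied via the failure of $k$-\EC, is \emph{not} a $k$-restricted Fitch map. This pair is the single ingredient I would use.

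The key observation is a heredity/transitivity principle enjoyed automatically by any forbidden-submap characterization: if there existed a set $\mathcal{F}$ of maps such that $\eps$ is a $k$-restricted Fitch map iff $\eps$ contains no element of $\mathcal{F}$ as a submap, then the submap relation (whether interpreted in the general or in the induced sense of \Cref{sec:char}) is transitive. Hence any submap of a submap of $\eps$ is itself a submap of $\eps$. Consequently, if no map in $\mathcal{F}$ occurs as a submap of $\eps$, then no map in $\mathcal{F}$ occurs as a submap of any submap $\eps'$ of $\eps$ either, forcing every submap of $\eps$ to be $k$-restricted as well.

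Applying this to the situation in \Cref{exmpl:k-res} immediately yields the contradiction: since $\eps$ is $k$-restricted, the supposed forbidden-submap criterion would force $\eps'$ to be $k$-restricted, whereas we already know it is not. This rules out the existence of any such set $\mathcal{F}$, proving the theorem.

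The only subtle point, and thus the main thing to spell out carefully, is that the heredity argument must be phrased so that it works under both the general ``submap'' and the stricter ``induced submap'' interpretation of containment, since the theorem statement does not commit to one. Both interpretations yield a transitive containment relation (in the induced case, one uses that the composition of two induced submaps is an induced submap because the colour sets agree on the nested leaf set), so the same short argument applies uniformly and no routine case analysis is required beyond recalling the example.
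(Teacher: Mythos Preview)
Your proposal is correct and follows essentially the same argument as the paper: assume a forbidden-submap characterization exists, invoke the pair $(\eps,\eps')$ from \Cref{exmpl:k-res}, and use transitivity of the submap relation to derive that $\eps'$ would have to be $k$-restricted, contradicting the example. Your additional remark that the argument works uniformly for both the general and induced submap interpretations is a minor elaboration not present in the paper, but the core reasoning is identical.
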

\begin{proof}
  Assume for contradiction that there is a set of forbidden submaps that
  characterizes $k$-restricted Fitch maps.  Let $\eps$ and $\eps'$ be
  chosen as in \Cref{exmpl:k-res}.  Since $\eps$ is a $k$-restricted Fitch
  map, $\eps$ does not contain any of such forbidden submaps.  Hence, the
  induced submap $\eps'$ of $\eps$ cannot contain any of these forbidden
  submaps.  Thus, $\eps'$ must be a $k$-restricted Fitch map; a
  contradiction.
\end{proof}

In principle it is possible to relax the restriction to phylogenetic
  trees. Allowing arbitrary trees, we would obtain the same
  characterization as for Fitch maps, i.e.\
  \cref{thm:charact-HCIC,thm:forbid-sub}. To see this, it suffices to
  replace every edge $e$ of a phylogenetic tree $T$ by a path of sufficient
  lengths to assign at most $k$ distinct colors from $\lambda(e)$ to each of
  the edges in the subdivision. Moreover, we note in passing that there
are forbidden submap characterizations for highly constrained
$1$-restricted Fitch maps such as monochromatic Fitch maps \cite{Geiss:18a}
or disjoint Fitch maps \cite{Hellmuth:2019d}.

\subsection{Recoloring Fitch Maps}

Interpreting colors as different subclasses of horizontal transfer events
it is of interest to consider different resolutions at which events are
considered different. Considering certain sets of colors as equivalent thus
amounts to a course graining. Here, we briefly show that Fitch maps are
well-behaved under ``recoloring'' and ``identification of colors''.

\begin{definition}
  Let $\eXM$ be a map and $P=\{M_1,\ldots,M_k\}\subseteq\PS{M}$ be a
  collection of subsets of $M$.  Then, we define the \emph{\Pcol map}
  $\eps\lc{P}\colon \irr{\X\times \X} \to \PS{\{1,\ldots,k\}}$ by putting
  for all distinct $x,y \in \X$
  \begin{equation*}
    \eps\lc{P}(x,y) \coloneqq \big\{ i \in \{1,\ldots,k\} : \eps(x,y) \cap
    M_i \ne \emptyset \big\}.
  \end{equation*}
\end{definition}
In other words, $\eps\lc{P}(x,y)$ contains color $i$ if and only if
$\eps(x,y)\cap M_i\ne\emptyset$. Given a Fitch map $\eXM$ explained by
  the edge-labeled tree $(T,\lambda)$ and given a set
  $P=\{M_1,\ldots,M_k\}\subseteq\PS{M}$, we will make use of the
  edge-labeled tree $(T,\lambda\lc{P})$, where the edge-labeling
  $\lambda\lc{P} : E(T)\to \PS{\{1,\ldots,k\}}$ assigns the set
  $\lambda\lc{P}(e) \coloneqq \big\{ i \in \{1,\ldots,k\} : \lambda(e)\cap
  M_i\ne \emptyset \big\}$ to every edge $e\in E(T)$.

\begin{proposition} \label{prop:re-coloring} Let $\eXM$ be a Fitch map, and
  let $P=\{M_1,\ldots,M_k\}\subseteq\PS{M}$ be a collection of subsets of
  $M$.  Then, the \Pcol map $\eps\lc{P}$ is a Fitch map that is explained
  by $(T,\lambda\lc{P})$.
\end{proposition}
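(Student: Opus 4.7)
The plan is to unwind the two defining equivalences simultaneously and observe that the recoloring operation commutes with them. Specifically, I would fix an arbitrary pair $(x,y)\in \irr{\X\times\X}$ and an arbitrary color $i\in\{1,\dots,k\}$, and verify the single equivalence required by \Cref{def:fitch-map}: $i\in\eps\lc{P}(x,y)$ if and only if there is an $i$-edge (w.r.t.\ $\lambda\lc{P}$) on the path from $\lca\lc{T}(x,y)$ to $y$ in $(T,\lambda\lc{P})$.

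The forward chain of equivalences I would carry out is: $i\in\eps\lc{P}(x,y)$ iff $\eps(x,y)\cap M_i\ne\emptyset$ (by definition of $\eps\lc{P}$) iff there exists some $m\in M_i$ with $m\in\eps(x,y)$ iff (using that $(T,\lambda)$ explains $\eps$) there exists some $m\in M_i$ and some edge $e$ on $P\lc{T}(\lca\lc{T}(x,y),y)$ with $m\in\lambda(e)$ iff there exists an edge $e$ on that path with $\lambda(e)\cap M_i\ne\emptyset$ iff (by definition of $\lambda\lc{P}$) there exists an edge $e$ on that path with $i\in\lambda\lc{P}(e)$, which is exactly the statement that there is an $i$-edge along the path from $\lca\lc{T}(x,y)$ to $y$ in $(T,\lambda\lc{P})$.

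The only mildly delicate point, and the place I would be careful in the write-up, is the swap between ``there exists $m\in M_i$ and an edge $e$'' and ``there exists an edge $e$ (such that some $m\in M_i$ lies in $\lambda(e)$).'' This is a straightforward reordering of existential quantifiers over finite sets, but it is the sole nontrivial bookkeeping step since the two existentials are independent. Everything else is a direct unfolding of the definitions of $\eps\lc{P}$ and $\lambda\lc{P}$, so no induction on the tree and no appeal to \Cref{thm:charact-HCIC} or \Cref{p:clust-neighb} is needed; the hypothesis that $(T,\lambda)$ explains $\eps$ is used exactly once, to translate membership in $\eps(x,y)$ into the existence of an $m$-edge on the relevant path.
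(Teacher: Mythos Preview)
Your proposal is correct and follows essentially the same approach as the paper: both arguments fix an arbitrary pair $(x,y)$ and color $i$, then run the identical chain of equivalences unfolding the definitions of $\eps\lc{P}$ and $\lambda\lc{P}$, invoking that $(T,\lambda)$ explains $\eps$ exactly once in the middle. Your explicit remark about reordering the two existential quantifiers is a nice touch that the paper leaves implicit.
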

\begin{proof}
  Suppose $\eXM$ is a Fitch map explained by $(T,\lambda)$, and let
  $P=\{M_1,\ldots,M_k\}\subseteq\PS{M}$.  Since $(T,\lambda)$ explains
  $\eps$, we have for every $i \in \{1,\ldots,k\}$ and for every distinct
  $x,y \in \X$:
  \begin{align*}
    i \in \eps\lc{P}(x,y) & \iff \textnormal{there is an }
                            m \in \eps(x,y)\cap M_i \\
                          &\iff \textnormal{there is an edge } e \in P\lc{T}(\lca(x,y),y) \textnormal{ with } m \in \lambda(e)\cap M_i \\
                          &\iff \textnormal{there is an edge } e \in P\lc{T}(\lca(x,y),y) \textnormal{ with } i \in \lambda\lc{P}(e).
  \end{align*}
  Since we have chosen $i \in \{1,\ldots,k\}$ and $x,y \in \X$ arbitrarily,
  we conclude that $(T,\lambda\lc{P})$ explains $\eps\lc{P}$; and thus,
  that $\eps\lc{P}$ is a Fitch map.
\end{proof}
Note that \Cref{thm:uniqueLRT} implies that $\To{\eps\lc{P}}$ is a
coarse-graining of $(T,\lambda\lc{P})$ for every \Pcol Fitch map
$\eps\lc{P}$.  In particular, \Cref{prop:re-coloring} allows us to
identify colors. In this case $P= \{M_1,\ldots,M_k\}$ is a partition of
$M$. Thus we have
\begin{corollary}
  Let $\eXM$ be a Fitch map and $P=\{M_1,\ldots,M_k\}$, $k\geq 1$ be a
  partition of $M$.  Then, the \Pcol map $\eps\lc{P}$ is a Fitch map that is
  explained by $(T,\lambda\lc{P})$.
\end{corollary}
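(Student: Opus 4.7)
The plan is to observe that this corollary is an immediate specialization of \Cref{prop:re-coloring}. A partition $P=\{M_1,\ldots,M_k\}$ of $M$ is, in particular, a collection of subsets of $M$, i.e.\ $P\subseteq\PS{M}$. Therefore, the hypothesis of \Cref{prop:re-coloring} is fulfilled, and so its conclusion applies directly: the \Pcol map $\eps\lc{P}$ is a Fitch map, and it is explained by $(T,\lambda\lc{P})$.

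Concretely, I would simply write: ``Since every partition $P$ of $M$ is in particular a collection of subsets of $M$, i.e.\ $P\subseteq\PS{M}$, the claim follows immediately from \Cref{prop:re-coloring}.'' No further work is needed; there is no real obstacle here, because the partition hypothesis only adds the extra structural constraints that the $M_i$ are pairwise disjoint and cover $M$, neither of which is used in the proof of \Cref{prop:re-coloring}. The partition case is singled out as a corollary only because it corresponds to the natural interpretation of ``identification of colors'', which is emphasized in the surrounding text.

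If a slightly more self-contained presentation is desired, one could alternatively unfold the argument: for every $i\in\{1,\ldots,k\}$ and every distinct $x,y\in \X$, the definition of $\eps\lc{P}$ gives $i\in\eps\lc{P}(x,y)$ iff there exists $m\in\eps(x,y)\cap M_i$, which by the assumption that $(T,\lambda)$ explains $\eps$ is equivalent to the existence of an edge $e\in P\lc{T}(\lca(x,y),y)$ with $m\in\lambda(e)\cap M_i$, equivalently $i\in\lambda\lc{P}(e)$. But this is precisely the content of \Cref{prop:re-coloring}, so repeating it seems unnecessary.
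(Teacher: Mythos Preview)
Your proposal is correct and matches the paper's approach exactly: the paper states this corollary immediately after \Cref{prop:re-coloring} with no separate proof, since a partition of $M$ is in particular a collection of subsets $P\subseteq\PS{M}$. Your one-line justification is precisely what is intended.
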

In particular, therefore, the least resolved tree explaining $\eps\lc{P}$
is displayed by the least resolved tree explaining $\eps$.  Finally, we
note that $\eps\lc{P}=\eps$ whenever $P$ consists of all singletons contained in $\PS{M}$.

\section{Algorithmic Considerations}
\label{sec:algo}

\Cref{alg:fitch} summarizes a method to recognize Fitch maps and, in the
affirmative case, to construct the corresponding (unique) least-resolved
edge-labeled tree. In this algorithm it must be verified whether the
computed set $\Ns[\eps]$ forms a hierarchy or not.  Although there are
  papers that implicitly use algorithms to test whether a set system is
hierarchy-like or not based e.g.\   on underlying Hasse diagrams
\cite[Section 5]{GCG+18} or so-called character-compatibility
  \cite[Section 7.2]{sung2009algorithms}, we provide here a quite simple
alternative direct algorithm (cf.\ \cref{alg:test_hierarchy}).

\begin{lemma}
  Given a collection $\mc{C}\subseteq \PS{\X}$ of subsets of $\X$,
  \cref{alg:test_hierarchy} correctly determines whether $\mc{C}$ is
  hierarchy-like or not in
  $\mc{O}(|\mc{C}||\X|)\subseteq \mc{O}(|\X|^2)$ time.
  \label{lem:test_hierarchy}
\end{lemma}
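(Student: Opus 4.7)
The plan is to first establish a local reformulation of the hierarchy-like property that reduces pairwise-intersection tests to element-wise chain checks: $\mc{C}\subseteq \PS{\X}$ is hierarchy-like if and only if for every $x\in\X$, the sub-collection $\mc{C}_x\coloneqq\{C\in\mc{C}: x\in C\}$ is totally ordered by inclusion. One direction is immediate since any two members of $\mc{C}_x$ share $x$ and hence have nonempty intersection equal to one of them; for the converse, any two $P,Q\in\mc{C}$ with $P\cap Q\neq\emptyset$ share some $x\in P\cap Q$ and are thus comparable, forcing $P\cap Q\in\{P,Q\}$.

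Based on this reformulation, I would have the algorithm build, for every $x\in\X$, the list $L_x$ of all sets in $\mc{C}$ containing $x$, sorted in ascending order by size; this is obtainable by a single counting sort of $\mc{C}$ by cardinality followed by a linear distribution pass. Rather than checking $S\subseteq S'$ for every consecutive pair in every $L_x$ (whose naive cost is $\Theta(|\X|)$ per pair and would exceed the budget), I would assign to every set $C\in\mc{C}$ a single candidate parent $P(C)$, obtained by fixing any $x\in C$ and taking the successor of $C$ in $L_x$ (or $\bot$ if $C$ is maximal in $L_x$). The verification then amounts to a single consistency check: for every $C\in\mc{C}$ and every $y\in C$, the successor of $C$ in $L_y$ coincides with $P(C)$, treating $\bot$ uniformly.

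For correctness: if $\mc{C}$ is hierarchy-like then by the reformulation each $L_y$ is a chain, so the successor of $C$ in $L_y$ is the unique smallest set of $\mc{C}$ strictly containing $C$ and is therefore independent of the choice of $y\in C$, matching $P(C)$. Conversely, if the test passes, then fixing any $y$ and writing $L_y=(A_1,A_2,\ldots)$ in ascending order of size, applying the check at $C=A_i$ and this $y$ shows both that $A_{i+1}=P(A_i)$ and that every $z\in A_i$ lies in $P(A_i)=A_{i+1}$, whence $A_i\subseteq A_{i+1}$ and $L_y$ is a chain; the reformulation then yields hierarchy-likeness.

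For complexity: initializing the lists $L_x$ together with per-list positional indices costs $O(|\mc{C}||\X|)$, the counting sort costs $O(|\mc{C}|+|\X|)$, computing all $P(C)$ is $O(|\mc{C}|)$, and the consistency check sums to $\sum_{C\in\mc{C}}|C|\in O(|\mc{C}||\X|)$. The containment $O(|\mc{C}||\X|)\subseteq O(|\X|^2)$ follows from the standard bound $|\mc{C}|\leq 2|\X|-1$ for hierarchy-like families, rejecting early when $|\mc{C}|$ exceeds this threshold. The main obstacle I anticipate is the converse correctness argument: showing that the single parent-pointer consistency test implicitly certifies every pairwise intersection, rather than just the consecutive pairs in each $L_y$, is what makes the near-linear runtime attainable.
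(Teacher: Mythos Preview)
Your proposal designs and analyzes a \emph{different} algorithm than the one the lemma is about. The statement asserts correctness and running time of a specific procedure, \cref{alg:test_hierarchy}, which sorts $\mc{C}$ by \emph{decreasing} cardinality and maintains a single map $\varphi\colon\X\to\{0,1,\dots,|\mc{C}|\}$ recording, for each $x$, the index of the most recently processed set containing $x$; at step $i$ it picks any $x\in C_i$, sets $j\gets\varphi(x)$, checks that $\varphi(y)=j$ for every $y\in C_i$, and updates $\varphi(y)\gets i$. Your construction instead materializes per-element lists $L_x$ in ascending size order and verifies a successor-based parent pointer $P(C)$. So as written, your argument does not establish the stated lemma.

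That said, the two procedures rest on the same characterization you isolate first---$\mc{C}$ is hierarchy-like iff each $\mc{C}_x$ is a chain---and the paper's $\varphi$-check is exactly a compressed form of your parent-consistency test: requiring $\varphi(y)=j$ for all $y\in C_i$ says precisely that the immediate predecessor (in the size ordering) among sets containing $y$ is the same set $C_j$ for every $y\in C_i$, which is your condition that the successor of $C_i$ in each $L_y$ agree. The paper avoids building the lists $L_x$ by threading the chain information through $\varphi$, while your version makes the chains explicit; both yield the $\mc{O}(|\mc{C}||\X|)$ bound after the early rejection when $|\mc{C}|>2|\X|-1$. To turn your write-up into a proof of the lemma as stated, you would need to argue about the given $\varphi$-based procedure directly---essentially show that ``all $y\in C_i$ share the same $\varphi$-value'' is equivalent to your successor-consistency condition, which your reformulation makes straightforward.
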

\begin{proof}
  First, we prove the correctness of the algorithm. Let $\X$ be a non-empty
  set, and let $\mc{C}=\{C_1,C_2,\ldots,C_{|\mc{C}|}\}\subseteq \PS{\X}$ be
  a collection of subsets of $\X$. By \cite[Lemma 1]{HLS+15}, if $\mc{C}$
  is a hierarchy, then $|\mc{C}|\leq 2|X|-1$. Hence, if $|\mc{C}|> 2|X|-1$,
  then $\mc{C}$ cannot be a hierarchy, and thus $\mc{C}$ cannot be
  hierarchy-like. In this case, the algorithm correctly returns
  \texttt{false} in \Cref{alg:card}.

  Then, the set $\mc{C}$ is ordered based on the cardinality of its
  elements (\Cref{alg:sort}). Moreover, a map
  $\varphi\colon \X\to \{0,1,\dots,|\mc{C}|\}$ is initialized with
  $\varphi(x)=0$ for every $x \in \X$ (\Cref{alg:beginxyz}).  In essence,
  $\varphi(x)$ saves for each $x \in \X$ the last considered set $C_i$
  where $x$ was discovered. The initial case $\varphi(x)=0$ corresponds to
  the trivial case ``$x\in C_0=\X$'' in the subsequent parts of this proof.
		
  \Crefrange{alg:i-th-step}{alg:endxyz} iterates over all
  $C_i\in\mc{C}$ from the largest to the smallest elements.  We set
  $j\gets\varphi(x)$ for some arbitrary but fixed vertex $x\in C_i$. Thus,
  $j$ is now the index of the latest preceding set $C_j$ that contains
  $x$.  Then, we check for all $y \in C_i$ whether index $\varphi(y)=j$,
  that is, whether $y\in C_j$ is true for all all $y \in C_i$, i.e.,
  whether $C_i\subseteq C_j$.  If this is the case, then the value
  $\varphi(y)$ is changed to the current index $i$; otherwise, the
  algorithm returns \texttt{false}.
 
  It remains to show that \cref{alg:test_hierarchy}
  (\Crefrange{alg:i-th-step}{alg:endxyz}) returns \texttt{false} if and
  only if $\mc{C}$ is not hierarchy-like. First, suppose that
  \cref{alg:test_hierarchy} returns \texttt{false}, which is the case if
  there are vertices $x,y \in C_i$ that satisfy
  $\varphi(x)=j\ne \varphi(y)$. Hence, $x \in C_i\cap C_{\varphi(x)}$ and
  $y \in C_i\cap C_{\varphi(y)}$. Note that $\varphi(x),\varphi(y)<i$ and
  we may assume w.l.o.g.\ that $\varphi(x)<\varphi(y)$. Thus,
  $x\notin C_{\varphi(y)}$, as otherwise, the value $\varphi(x)$ must have
  been changed to the index $\varphi(y)$ when considering $C_{\varphi(y)}$,
  since $C_{\varphi(y)}$ is considered after $C_{\varphi(x)}$.  However, in
  this case, $|C_i|\le|C_{\varphi(y)}|$,
  $y \in C_i\cap C_{\varphi(y)}\ne \emptyset$ and
  $x \in C_i\setminus C_{\varphi(y)}$ implies
  $C_i\cap C_{\varphi(y)} \notin
  \{\emptyset,C_i,C_{\varphi(y)}\}$. Therefore, $\mc{C}$ is not
  hierarchy-like.

  Conversely, suppose that $\mc{C}$ is not hierarchy-like.  Then, there are
  two elements $C_i,C_j \in \mc{C}$ such that
  $C_i\cap C_j \notin \{\emptyset, C_i,C_j\}$.  In particular, we can
  choose the indices $i$ and $j$ such $j<i$ and $i-j$ is minimum.  Now,
  consider the step of the algorithm where $C_i$ is investigated.  Then, we
  may assume w.l.o.g.\ that $x \in C_i\cap C_j$ and
  $y \in C_i\setminus C_j$.  Hence, by the choice of $i$ and $j$, we
  conclude that $\varphi(x)=j$, and since $y \notin C_j$, we conclude that
  $\varphi(y)\ne j$.  Thus, $\varphi(x) \ne \varphi(y)$ and the
  \emph{if}-condition in \Cref{alg:setvarphi} correctly will return
  \texttt{false}. In summary, \cref{alg:test_hierarchy} returns
  \texttt{false} if and only if $\mc{C}$ is not hierarchy-like.

  We continue by investigating the running time of the algorithm. Due to
  the \emph{if}-condition in \Cref{alg:card}, we can observe that
  $|\mc{C}|\in \mc{O}(|\X|)$.  Thus, the sorting of the elements in
  $\mc{C}$ (\Cref{alg:sort}) can be achieved in $\mc{O}(|\X|\log(|X|))$
  time.  Moreover, we iterate in \Crefrange{alg:i-th-step}{alg:endxyz} over
  all $|\mc{C}|\in \mc{O}(|\X|)$ elements in $\mc{C}$ and all
  $\mc{O}(|\X|)$ elements in each $C_i\in \mc{C}$ ending an overall time
  complexity of $\mc{O}(|\mc{C}||\X|)\subseteq\mc{O}(|\X|^2)$, which
  completes the proof.
\end{proof}

\begin{algorithm}[t]
  \caption{\texttt{Test whether a set $\mc{C}\subseteq \PS{\X}$ is
      hierarchy-like}}
  \begin{algorithmic}[1]
  \Require  $\mc{C}\subseteq \PS{\X}$.
  \Ensure  \texttt{true}, if $\mc{C}$ is hierarchy-like, and
     \texttt{false}, otherwise.
  \If{$|\mc{C}|>2|X|-1$} \Return \texttt{false} \label{alg:card} \EndIf 
  \State Order $\mc{C} = \{C_1,\dots C_{|\mc{C}|}\}$ such that $i \le j$
     whenever $|C_i|\geq |C_j|$ \label{alg:sort}
  \State $\varphi(x)\gets 0$ for all $x\in \X$\label{alg:beginxyz}
  \Comment{Construct a map $\varphi\colon \X\to \{0,1,\dots,|\mc{C}|\}$ }
  \ForAll{$i \in \{1,\dots, |\mc{C}|\}$}\label{alg:i-th-step}
     \State $j \gets \varphi(x)$ for some arbitrary $x \in C_i$
        \label{alg:choose-x} 
     \ForAll{$y\in C_i$} \label{alg:for-loob-Ci} 
	\If{$\varphi(y) = j$} $\varphi(y)\gets i$ \label{alg:setvarphi}
	\Else\ \Return \texttt{false} \label{alg:endxyz} 
	\EndIf
     \EndFor
  \EndFor
  \State \Return \texttt{true}
  \end{algorithmic}
  \label{alg:test_hierarchy}
\end{algorithm}
%%%%%%%%%%%%%%%%%%%%%%%%%%%%%%%%%%%%%%%%%%%
\begin{algorithm}[t]
  \caption{\texttt{Determining Fitch Maps and Computing the Least-Resolved
      Tree $\To{\eps}$ for $\eps$}}
  \begin{algorithmic}[1]
    \Require A map $\eXM$, where $\eps(x,y)$ is an ordered set for
        all $(x,y)\in \irr{\X \times \X}$.
    \Ensure  Least-resolved tree that explains $\eps$, if one exists.
    \State $\NnotCol{m}[y]\gets\emptyset$ for all $y\in \X$ and all
        $m\in M$\Comment{Compute $\NnotCol{m}[y]$}\label{alg:for1-begin} 	
    \ForAll{$y\in \X$, and \textbf{for all} $x\in \X\setminus\{y\}$}
        \label{alg:forxy}
        \ForAll{$m\in M$ with $m\notin \eps(x,y)$}
            \State $\NnotCol{m}[y]\gets \NnotCol{m}[y]\cup \{x\}$
        \EndFor
    \EndFor
    \State $\NnotCol{m}[y]\gets \NnotCol{m}[y]\cup \{y\}$ for all
        $y\in \X$ and all $m\in M$\label{alg:for1-end}
    \State $\Ns[\eps]\gets \emptyset$, and $\textrm{count}[\ell]\gets0$
        for all $\ell\in\{1,\dots,|\X|\}$
        \Comment{Compute $\Ns[\eps]$} \label{alg:Nseps}
    \ForAll{$m\in M$, and \textbf{for all} $y\in \X$} \label{alg:forNeps}
        \If{$\NnotCol{m}[y] \notin \Ns[\eps]$} \label{alg:Nmy-in-Neps}
            \State  $\Ns[\eps]\gets  \Ns[\eps]\cup \{\NnotCol{m}[y]\}$
                \label{alg:unioin-neigh}
	    \State $\textrm{label}[m,y]\gets\{m\}$ \label{alg:set-label}
	    \State $\textrm{count}[|\NnotCol{m}[y]|]\gets
                \textrm{count}[|\NnotCol{m}[y]|]+1$
            \If{$|\mc{\Ns[\eps]}|>2|X|-1$ or
                $\textrm{count}[|\NnotCol{m}[y]|]\cdot |\NnotCol{m}[y]| >
                |X|$} \label{alg:value-compare}
	        \Return ``$\eps$ is not a Fitch map'' \label{alg:cancel-fitch}
	    \EndIf
        \Else
            \State Let $\NnotCol{m'}[y']\in \Ns[\eps]$ be the unique element
                with $\NnotCol{m}[y] = \NnotCol{m'}[y']$
                \label{alg:unique-neigh}
            \State $\textrm{label}[m',y'] \gets
                \textrm{label}[m',y'] \cup \{m\}$ \label{alg:labelunion}
        \EndIf
    \EndFor \label{alg:EndforNeps}
    \If{$\Ns[\eps]$ does not satisfy \IC or \HC}\label{alg:ifICHC}
        \Return ``$\eps$ is not a Fitch map'' \label{alg:ICHC}
        \Comment{Verify \IC and \HC}
    \EndIf
    \State Compute  $T$ with $\Cls(T) = 
        \Ns[\eps] \cup \big\{\X\big\} \cup
        \big\{ \{x\} \colon x \in \X \big\}$ \Comment{Compute $\To{\eps}$}
        \label{alg:To-begin}
    \ForAll{edges $(\parent(v),v)$ of $T$} \label{alg:for-edges}
  	\If{there is a (unique) element $\NnotCol{m}[y]\in \Ns[\eps]$
              with $\Cl{T}(v) =\NnotCol{m}[y]$} 
  	    \State $\lambda(\parent(v),v) \gets \textrm{label}[m,y]$
  	\Else{} $\lambda(\parent(v),v) \gets \emptyset$ 
  	\EndIf
     \EndFor \label{alg:To-end}
     \State \Return $\To{\eps} = (T,\lambda)$
  \end{algorithmic}
  \label{alg:fitch}
\end{algorithm}

 Let us now consider \Cref{alg:fitch} that determines whether a given
  map $\eXM$ is a Fitch map or not, and that returns, in the affirmative
  case, the least-resolved tree that explains $\eps$.  We shall note first
  that many of the more elaborate parts of the algorithm are used to
  achieve the desired running time.  In a nutshell, in
  \crefrange{alg:for1-begin}{alg:EndforNeps} the collection of
  neighborhoods $\Ns[\eps]$ is computed.  In addition, an array of sets, 
	called $\textrm{label}$, is computed where $\textrm{label}[m,y]$ contains all
  labels that need to be added on particular edges of the possible existing
  tree that explains $\eps$.  Moreover, $\textrm{count}[|\NnotCol{m}[y]|]$
  is the number of elements in $\Ns[\eps]$ that have the same cardinality
  as $\NnotCol{m}[y]$. If $\textrm{count}[|\NnotCol{m}[y]|]$ is larger than
  some specified values, then we can directly verify that $\eps$ is not a Fitch
  map (\Cref{alg:value-compare}). Then, we continue to check in
  \Cref{alg:ifICHC} if $\Ns[\eps]$ satisfies \IC and \HC.  In the
  affirmative case, \Cref{thm:charact-HCIC} implies that $\eps$ is a Fitch
  map, and we compute in \crefrange{alg:To-begin}{alg:To-end} the unique
  least-resolved tree $\To{\eps}$ that explains $\eps$.  

\begin{theorem}
  \Cref{alg:fitch} determines correctly whether a given map $\eXM$ is a
  Fitch map. In the affirmative case \Cref{alg:fitch} returns the
  least-resolved tree that explains $\eps$.  \Cref{alg:fitch} can be
  implemented to run in $\mc{O}(|\X|^2\cdot |M|)$ time.
\label{thm:algo}
\end{theorem}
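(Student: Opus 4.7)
The argument splits naturally into (i) correctness of the recognition decision, (ii) correctness of the returned tree in the affirmative case, and (iii) the running-time analysis.

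For (i), I would first observe that \crefrange{alg:for1-begin}{alg:for1-end} builds exactly the sets $\NnotCol{m}[y]$ of \Cref{def:neighbor}. Next, \crefrange{alg:forNeps}{alg:EndforNeps} accumulates $\Ns[\eps]$ together with the array $\textrm{label}[m,y]$, which by construction records all colors $m'$ for which some $y'\in\X$ satisfies $\NnotCol{m'}[y']=\NnotCol{m}[y]$. The two early-termination tests in \Cref{alg:value-compare} are sound: by \Cref{cor:HC}, if $\eps$ is a Fitch map then $\Ns[\eps]\subseteq \Cls(T)$ for some phylogenetic tree $T$; since $\Cls(T)$ is a hierarchy on $\X$, it has at most $2|\X|-1$ members, and any two equally-sized elements of a hierarchy must be disjoint, forcing $\textrm{count}[k]\cdot k \le |\X|$ for every cardinality $k$. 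Violation of either bound therefore certifies that $\eps$ is not a Fitch map. The remaining test of \HC and \IC in \Cref{alg:ifICHC} is then justified directly by \Cref{thm:charact-HCIC}.

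For (ii), when all rejection tests are passed, \crefrange{alg:To-begin}{alg:To-end} constructs the tree $T$ whose cluster set equals $\Ns[\eps]\cup\{\X\}\cup\{\{x\}:x\in\X\}$ (such $T$ exists and is unique by \Cref{lem:tree-iff-cluster}) and assigns to each edge $(\parent(v),v)$ with $\Cl{T}(v)=\NnotCol{m}[y]\in\Ns[\eps]$ the label $\textrm{label}[m,y]=\{m'\in M : \Cl{T}(v)=\NnotCol{m'}[y'] \text{ for some } y'\in\X\}$. This matches \Cref{def:tree-for-eps} exactly, so the output coincides with $\To{\eps}$. By \Cref{lem:sufficiency}, $\To{\eps}$ explains $\eps$, and by \Cref{thm:uniqueLRT}, $\To{\eps}$ is the (up to isomorphism) unique least-resolved tree explaining $\eps$.

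For (iii), Phase~1 computes every $\NnotCol{m}[y]$ in $\mc{O}(|\X|^2\cdot|M|)$ total time, since each of the $|\X|(|\X|-1)$ ordered pairs contributes a single pass over $M$. In Phase~2 the central issue is detecting whether a freshly computed $\NnotCol{m}[y]$ already lies in $\Ns[\eps]$. I would represent each neighborhood as a length-$|\X|$ bit-vector (or equivalently as a sorted index list), which allows equality tests in $\mc{O}(|\X|)$ time, and store the already-seen neighborhoods in a hash table keyed by this bit-vector (or by an $\mc{O}(|\X|)$-time fingerprint); each of the $|\X|\cdot|M|$ membership tests then runs in $\mc{O}(|\X|)$ time, giving $\mc{O}(|\X|^2\cdot|M|)$ for the whole phase. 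The early-rejection step ensures that if we do not abort, then $|\Ns[\eps]|=\mc{O}(|\X|)$, so checking \HC via \Cref{alg:test_hierarchy} costs $\mc{O}(|\X|^2)$ by \Cref{lem:test_hierarchy}. Verifying \IC reduces to comparing the precomputed sizes $|\NnotCol{m}[y']|$ against $|N|$ for every $N=\NnotCol{m}[y]\in\Ns[\eps]$ and every $y'\in N$, which is $\mc{O}(|\X|^2)$ in total. Constructing $T$ from its cluster set and labelling its edges is likewise $\mc{O}(|\X|^2)$. Summing these contributions yields the claimed $\mc{O}(|\X|^2\cdot|M|)$ bound.

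The main subtlety is Phase~2: efficient duplicate detection among up to $|M|\cdot|\X|$ candidate neighborhoods, each of size up to $|\X|$, within the stated time budget. The hardest part will be picking a representation that makes both lookup and equality testing $\mc{O}(|\X|)$; the bit-vector-plus-hash-table scheme (or, to avoid randomization, a radix-sort based canonicalization of all candidates) resolves this cleanly.
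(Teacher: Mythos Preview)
Your proposal is correct and follows essentially the same overall structure as the paper's proof: correctness via \Cref{thm:charact-HCIC} and \Cref{thm:uniqueLRT}, early-termination soundness via the hierarchy size bounds, and a phase-by-phase running-time analysis.

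The one place where your argument genuinely diverges from the paper is the implementation of the membership test in \Cref{alg:Nmy-in-Neps}. You propose storing neighborhoods as bit-vectors and using a hash table (or a radix-sort canonicalisation) to get $\mc{O}(|\X|)$ per lookup. The paper instead exploits the very bookkeeping that the algorithm already maintains: it only compares $\NnotCol{m}[y]$ against members of $\Ns[\eps]$ of the \emph{same} cardinality, and the invariant $\textrm{count}[k]\cdot k \le |\X|$ enforced by \Cref{alg:value-compare} guarantees that there are at most $|\X|/|\NnotCol{m}[y]|$ such candidates, each comparable in $\mc{O}(|\NnotCol{m}[y]|)$ time. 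This yields the same $\mc{O}(|\X|)$ bound per test, but deterministically and without any auxiliary data structure beyond the $\textrm{count}$ array. Your hashing approach is perfectly valid; the paper's approach is a bit slicker because it makes the early-termination test do double duty.

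One small slip: your $\mc{O}(|\X|^2)$ estimate for the \IC check is too optimistic. The condition $|\NnotCol{m}[y']|\le |N|$ must be tested for every color $m$ realising $N$, not just once per distinct $N\in\Ns[\eps]$, so the correct bound is $\mc{O}(|\X|^2\cdot|M|)$, as in the paper. This does not affect your overall $\mc{O}(|\X|^2\cdot|M|)$ conclusion.
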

\begin{proof}
  Let $\eXM$ be a map.  First, we prove the correctness of the algorithm.
  It is easy to verify that the block consisting of the
  \crefrange{alg:for1-begin}{alg:for1-end} correctly computes
  $\NnotCol{m}[y]$ for all $y\in \X$ and $m\in M$.

  Now, we verify that the block consisting of the
  \crefrange{alg:Nseps}{alg:EndforNeps} correctly computes
  $\Ns[\eps]$. First, for all possible cardinalities
  $\ell\in\{1,\dots,|\X|\}$ a counter $\textrm{count}[\ell]=0$ is
  initialized, see \Cref{alg:Nseps}. This counter will count all
  neighborhoods that have the same size $\ell$.
	
  Then, we iterate over all $m\in M$, and over all $y\in \X$.  In
  \Cref{alg:Nmy-in-Neps}, we check if $\NnotCol{m}[y]$ is already contained
  in $\Ns[\eps]$ or not.  If $\NnotCol{m}[y]$ is not contained in
  $\Ns[\eps]$, then we add it to $\Ns[\eps]$. The latter, in particular,
  ensures that $\Ns[\eps]$ is a set and not a multi-set.  Moreover, we
  initialize an array of sets $\textrm{label}[m,y] = \{m\}$ that will
  contain all labels that we need to add on particular edges of the
  possible exiting tree that explains $\eps$.  Moreover, we increase
  $\textrm{count}[|\NnotCol{m}[y]|]$ by one, that is, we increment the
  number of elements in $\Ns[\eps]$ that have the same cardinality as
  $\NnotCol{m}[y]$.  In \Cref{alg:cancel-fitch}, we check if (i)
  $|\mc{\Ns[\eps]}|>2|X|-1$ or (ii)
  $\textrm{count}[|\NnotCol{m}[y]|]\cdot |\NnotCol{m}[y]| > |X|$ is
  satisfied. In Case (i) we can apply \cite[Lemma 1]{HLS+15} and conclude
  that $\Ns[\eps]$ cannot form a hierarchy. In this case, the algorithm
  correctly returns \emph{``$\eps$ is not a Fitch map''}.  In Case (ii), we
  check if the number of elements in $\Ns[\eps]$ that have the same
  cardinality as $\NnotCol{m}[y]$ times the number of elements in
  $\NnotCol{m}[y]$ exceeds $|X|$.  Suppose that Case (ii) applies. Since
  the elements of $\Ns[\eps]$ are pairwise distinct, they differ in at
  least one element. Thus, whenever there are two elements
  $N',N\in \Ns[\eps]$ of the same size then $N\cap N'=\emptyset$ if
  $\Ns[\eps]$ forms a hierarchy. But, then
  $\textrm{count}[|\NnotCol{m}[y]|]\cdot |\NnotCol{m}[y]|\leq X$, if
  $\Ns[\eps]$ forms a hierarchy. By contraposition, if Case (ii) applies,
  then $\Ns[\eps]$ cannot form a hierarchy. Hence, in both Cases (i) and
  (ii), the algorithm correctly returns \emph{``$\eps$ is not a Fitch
    map''}.

  If $\NnotCol{m}[y]$ is already contained in $\Ns[\eps]$
  (\emph{else}-case), then there is a unique neighborhood
  $\NnotCol{m'}[y'] \in \Ns[\eps]$ with $\NnotCol{m}[y]=\NnotCol{m'}[y']$.
  In this case, we simply save the particular color $m$, and add $m$ to the
  $\textrm{label}[m',y']$.  Clearly,
  $\Ns[\eps]=\{\NnotCol{m}[y] \colon y \in \X, m\in M\}$ is correctly
  computed.

  According to \cref{thm:charact-HCIC}, $\eps$ is a Fitch Map if and only
  if $\eps$ satisfies \IC and \HC.  Thus, the algorithm correctly returns
  \emph{``$\eps$ is not a Fitch map''}, in case $\eps$ does not satisfy \IC
  or \HC.  Hence, if $\eps$ is a Fitch map, then we can compute the
  edge-labeled tree $\To{\eps} = (T,\lambda)$ according to
  \cref{def:tree-for-eps}, which is done in
  \crefrange{alg:To-begin}{alg:To-end}.  In particular, for a vertex
  $v \in V(T)\setminus\{\rootT{T}\}$ the pre-computed set
  $\textrm{label}[m',y']$ consists of all colors $m$ for which there is a
  $y$ with $\Cl{T}(v) =\NnotCol{m}[y]$.  Hence, $\lambda(\parent(v),v) $ is
  correctly computed in \crefrange{alg:for-edges}{alg:To-end}.
  \cref{thm:uniqueLRT} states that $\To{\eps}$ is the least-resolved tree that
  explains $\eps$. In summary, \Cref{alg:fitch} is correct.

  Now, we investigate the running time of the algorithm.  To this end, we
  assume w.l.o.g.\ that $\X=\{1,\dots,|\X|\}$ and $M=\{1,\dots,|M|\}$ are
  the ordered sets of positive integers from $1$ to $|\X|$ and $1$ to
  $|M|$, respectively.  Moreover, as part of the input, $\eps(x,y)$ is an
  ordered set for all $(x,y)\in \irr{\X \times \X}$.

  Consider the block consisting of the
  \crefrange{alg:for1-begin}{alg:for1-end} that computes
  $\NnotCol{m}[y] = \{ x \in \X\setminus\{y\}\colon m \notin
  \eps(x,y)\}\cup \{y\}$.  First, we initialize 
  $\NnotCol{m}[y] \gets \emptyset$ for all $y\in\X$ and $m \in M$, a task that
  can be done in $\mc{O}(|X||M|)$ time.  Then, we iterate in
  \Cref{alg:forxy} over all $y\in\X$ and $x\in \X\setminus\{y\}$ in the
  order they appear in $\X$.  Then, we check for all $m\in M$ if
  $m\notin \eps(x,y)$ and, in the affirmative case, add $x$ to
  $\NnotCol{m}[y]$. Note that the resulting set $\NnotCol{m}[y]$ will then
  already be ordered.  To check if $m\notin \eps(x,y)$, we may first
  compute $\overline{M}_{xy}\coloneqq M\setminus \eps(x,y)$ in
  $\mc{O}(|M|)$ time and, afterwards, iterate over all elements
  $m \in \overline{M}_{xy}$, which are precisely those $m\in M$ with
  $m\notin \eps(x,y)$.  The computation of $\overline{M}_{xy}$ can be
  achieved in $\mc{O}(M)$ time for a fixed $y\in\X$ and
  $x\in \X\setminus\{y\}$.  Thus, the entire \emph{for}-loop in
  \Cref{alg:forxy} runs in $\mc{O}(|X|^2|M|)$ time.  Finally, we add $y$ to
  $\NnotCol{m}[y]$ for all $y\in \X$ and all $m\in M$ in such a way that
  $\NnotCol{m}[y]$ remains an ordered set which can be done in
  $\mc{O}(|X|)$ time for a fixed $y\in \X$ and $m\in M$.  Thus, the latter
  task can be achieved in $\mc{O}(|X|^2|M|)$ time for all $y\in \X$ and all
  $m\in M$.

  Next, consider the block consisting of the
  \crefrange{alg:forNeps}{alg:EndforNeps} where the set
  $\Ns[\eps] = \{\NnotCol{m}[x] \colon x\in \X,m \in M \}$ is computed.

  First, we show that the \emph{if}-condition in \Cref{alg:Nmy-in-Neps} can
  be computed in $\mc{O}(|\X|)$. Recall that the neighborhoods
  $\NnotCol{m}[y]$ are already ordered.  Hence, checking $\NnotCol{m}[y]=N$
  for some $N \in \Ns[\eps]$ can be done in $\mc{O}(|\NnotCol{m}[y]|)$ time.
  Clearly, we only need to verify $\NnotCol{m}[y]=N$ for those
  $N \in \Ns[\eps]$ with $|\NnotCol{m}[y]|=|N|$.  Hence, testing if
  $\NnotCol{m}[y]\notin \Ns[\eps]$, can be in done
  $\mc{O}(\textrm{count}[|\NnotCol{m}[y]|]\cdot |\NnotCol{m}[y]| +
  |\Ns[\eps]|)$ time.  Due to conditions in \Cref{alg:cancel-fitch},
  $\mc{O}(\textrm{count}[|\NnotCol{m}[y]|]\cdot
  |\NnotCol{m}[y]|+|\Ns[\eps]|) \subseteq O(|\X|)$.  Therefore, the
  \emph{if}-condition in \Cref{alg:Nmy-in-Neps} can be computed in time
  $\mc{O}(|\X|)$.
	
  \Cref{alg:unioin-neigh} can be evaluated in $\mc{O}(|\NnotCol{m}[y]|)$ time
  and the \crefrange{alg:set-label}{alg:cancel-fitch} require only
  constant time.  In \Cref{alg:unique-neigh}, finding the particular
  neighborhood $\NnotCol{m'}[y'] \in \Ns[\eps]$ with
  $\NnotCol{m}[y]=\NnotCol{m'}[y']$ has already been done in the
  \emph{if}-condition in \Cref{alg:Nmy-in-Neps}, and thus requires only the
  constant effort for a look-up.
     
  Since we iterate over all $m \in M$ in the given order of $M$, the set
  $\textrm{label}[m',y']$ is already sorted.  In particular, if $m$ is
  already contained in $\textrm{label}[m',y']$, then $m$ must be the last
  element of this set.  Hence, the union in \Cref{alg:labelunion} can be
  constructed in constant time.

  In summary, each individual step within the \emph{for}-loops in
  \Cref{alg:forNeps} can be accomplished in $\mc{O}(|\X|)$ time. Since the
  \emph{for}-loop has $\mc{O}(|\X||M|)$ iterations, we achieve a total
  running time of $\mc{O}(|\X|^2|M|)$ for the block consisting of the
  \crefrange{alg:forNeps}{alg:EndforNeps}.

  In \Cref{alg:ICHC}, we check if $\eps$ satisfies \IC and \HC.  To check
  $\HC$, we can use \cref{alg:test_hierarchy} to verify if $\Ns[\eps]$ is
  hierarchy-like in $\mc{O}(|\Ns[\eps]||\X|)$ time. Since we are
    in this step only if $|\Ns[\eps]|\leq 2|\X|-1\in \mc{O}(|\X|)$ (cf.\
    \Cref{alg:cancel-fitch}), the latter task requires $\mc{O}(|\X|^2|M|)$
    time.  To verify $\IC$, we check whether $|\NnotCol{m}[y']|\le|N|$ for
  every neighborhood $N\coloneqq \NnotCol{m}[y]$ with $m \in M$ and
  $y \in \X$ and for every $y' \in N$.  This requires $\mc{O}(|\X|^2|M|)$
  time.

  Finally, we compute $\To{\eps} = (T,\lambda)$ in
  \Crefrange{alg:To-begin}{alg:To-end}. To this end, we build $T$ based on
  the set
  $\Cls(T) = \Ns[\eps] \cup \big\{\X\big\} \cup \big\{ \{x\} \colon x \in
  \X \big\}$.  Since $\Ns[\eps]$ is hierarchy-like, we have
  $|\Ns[\eps]| \in \mc{O}(|X|)$. Hence, the hierarchy $\Cls(T)$ can be
  computed in $\mc{O}(|\X|)$ time.  Moreover, $T$ can be computed in
  $\mc{O}(|\Cls(T)|) \subseteq \mc{O}(|\X|)$ time, cf.\
  \cite{McConnell:05}.  While doing this, we also save the information,
  which vertex $v$ in $T$ corresponds to which cluster
  $\Cl{T}(v) =\NnotCol{m}[y]$.

  To compute $\lambda\colon E(T)\to\PS{M}$, we iterate over all
  $|V(T)|-1 \in \mc{O}(|\X|)$ edges in $T$ and check if there is a (unique)
  $\NnotCol{m}[y] \in \Ns[\eps]$ with $\Cl{T}(v) =\NnotCol{m}[y]$ in
  constant time based on the latter step, and then set
  $\lambda(\parent(v),v) =\textrm{label}[m,y]$.  The latter can
  be done in $\mc{O}(|M|)$ time for a fixed vertex $v$ in $T$.  Thus,
  computing $\lambda(\parent(v),v)$ for all
  $v \in V(T)\setminus\{\rootT{T}\}$ takes $\mc{O}(|\X||M|)$ time.

  In summary, \cref{alg:fitch} can be implemented to run in
  $\mc{O}(|\X|^2|M|)$ time.
\end{proof}

\section{Summary and Outlook}
\label{sec:outlook}

Fitch maps $\eps$ are map $\eXM$ that are explained by edge-labeled trees
$(T,\lambda)$ where $\lambda\colon E(T)\to\PS{M}$ assigns a subset of
colors in $M$ to each edge of $T$. The main result of this contribution is
to show that Fitch maps $\eps$ are characterized by the two simple
conditions \HC and \IC, which are both defined in terms of (complementary)
neighborhoods in the multi-edge-colored graph representation of $\eps$
(cf.\ \cref{thm:HCIC-ELC}).  Additionally, we provided a characterization
via forbidden submaps (cf.\ \cref{thm:forbid-sub}).  Moreover, we
demonstrated that there is a always a unique least-resolved tree for a
Fitch map (cf.\ \cref{thm:uniqueLRT}).  Finally, we gave a polynomial-time
algorithm to verify whether a given map $\eps$ is a Fitch map, and, in the
affirmative case, to construct the underlying least-resolved tree that
explains $\eps$.

Monochromatic Fitch maps have an additional characterization as a subclass
of so-called directed cographs \cite{Geiss:18a}.  This, in particular,
enables \citet{Geiss:18a} to establish a linear-time recognition algorithm
as well as a linear-time tree reconstruction method for monochromatic
Fitch maps. We suspect that there is a similar close relationship between
the Fitch maps defined here and so-called unp-2 structures
\cite{ER1:90,ER2:90,engelfriet1996characterization,Hellmuth:17a}, which
form a natural generalization of directed cographs to edge-colored graphs.
In particular, we expect that, using the theory of unp-2 structures,
the recognition of Fitch maps and the reconstruction of the least-resolved
trees is possible in a more efficient way.

As part of future research, it will be of interest to understand
symmetrized Fitch maps in more detail. A map $\eXM$ is a symmetrized Fitch
map if there is an edge-labeled tree $(T,\lambda)$ such that
$m\in \eps(x,y)$ if and only if there is an $m$-edge along the unique path
from $x$ to $y$ in $(T,\lambda)$.  A characterization of symmetrized
monochromatic Fitch maps can be found in \cite{GHLS:17}.  Note that both,
monochromatic Fitch maps and their symmetrized versions, form a special
subclass of (directed) cographs, which are graphs that can be explained by
vertex-labeled trees \cite{Geiss:18a,GHLS:17}.  A first attempt to
understand symmetrized Fitch maps can be found in \cite{Hellmuth:20}.  There,
a characterization in terms of quartets (unrooted phylogenetic trees on
four leaves) is provided, and it was shown that the recognition of
symmetrized Fitch maps is NP-complete.

The Fitch maps defined here correspond to directed multi-graphs with the
restriction that there are no parallel arcs of the same color. However, as
already outlined in \Cref{sec:char}, we may also allow parallel arcs of the
same color. That is, we may force to have $k$ $m$-edges along the path from
$\lca(x,y)$ to $y$, whenever there are $k$ edges with color $m$ connecting
$x$ and $y$ in the graph representation of $\eps$. To our knowledge, this
generalization has not been considered so far.

Finally, we have considered here only maps that are explained by trees.
Generalizations to maps that are defined by (vertex-labeled) networks can
be found in \cite{Huber2018b}.  Thus, a general question arises: Can a map
$\eXM$ that is not a Fitch map, and thus cannot be explained by an
edge-labeled tree, be explained by (rooted) edge-labeled networks instead?
What are the ``minimal'' or ``least-resolved'' networks that explain such a
map?

At present, there is no tool available to estimate the Fitch relation
  directly from data. There are, however, methods to retrieve partial
  information such as the fact that a given gene has undergone horizontal
  transfer \cite{Douglas:19}. On the one hand, the result reported here
  suggests that such partial information can help constrain gene trees.  On
  the other hand, our results show that Fitch maps contain a wealth of
  information on the gene tree and its embedding into the species tree,
  providing a strong motivation to investigate ways to infer them from data
  directly. In this context, it is interesting to note that the phylogenetic
  signal to infer the edge-labeled trees is entirely contained in the
  collection $\Ns[\eps]$ of ``complementary'' neighborhoods
  $\NnotCol{m}[y]$.  In other words, to reconstruct such trees it is not
  necessary to find those pairs $(x,y)$ for which some type of transfer
  happened, that is, $m\in \eps(x,y)$, but only to determine those pairs
  $(x,y)$ for which such an event has not occured, i.e.,
  $m\notin \eps(x,y)$. Sloppy speaking, the phylogenetic signal to infer
  such trees is entirely contained in the \emph{non}-HGT events.

\section*{Acknowledgments}
We thank Manuela Gei{\ss} for all the stimulating discussions during the
plenty of interesting beer-sessions. Moreover, we thank Carmen Bruckmann
as well as the three anonymous referees for carefully rechecking the
established results which significantly helped to improve this paper.  This
work was supported in part by the German Federal Ministry of Education and
Research (BMBF, project no.\ 031A538A, de.NBI-RBC).

\bibliographystyle{plainnat}
\bibliography{multicol}

\end{document}